\newtheorem{theorem}{Theorem}[section]
\newtheorem{corollary}[theorem]{Corollary}
\newtheorem{proposition}[theorem]{Proposition}
\newtheorem{lemma}[theorem]{Lemma}
\newtheorem{example}[theorem]{Example}
\newtheorem{definition}[theorem]{Definition}
\theoremstyle{remark}
\newtheorem{remark}[theorem]{Remark}
\newcommand{\R}{\mathbb{R}}                 % Real numbers
\newcommand{\C}{\mathbb{C}}                   % Complex numbers
\newcommand{\N}{\mathbb{N}}                 % Natural numbers
\newcommand{\Z}{\mathbb{Z}}                 % Integers
\newcommand{\im}{\mathrm i}
\newcommand{\cS}{\mathcal{S}}
\newcommand{\text}[1]{\mathrm{#1}}
\begin{document}
%%%% --------------------------------------- Tittle and abstract ------------------------------------------%%%%
%%%%
\title{Periodic quantum graphs from the Bethe--Sommerfeld perspective}
\author{Pavel Exner$^{1,2}$, Ond\v{r}ej Turek$^{2,3,4}$}
\address{$^1$ Doppler Institute for Mathematical Physics and Applied Mathematics, Czech
Technical University in Prague, B\v{r}ehov\'{a} 7, 11519 Prague,
Czech Republic}
\address{$^2$ Department of Theoretical Physics, Nuclear Physics Institute CAS, 25068 \v{R}e\v{z}
near Prague, Czech Republic}
\address{$^3$ Bogoliubov Laboratory of Theoretical Physics, Joint Institute for Nuclear Research,
141980 Dubna, Russian Federation}
\address{$^4$ Laboratory for Unified Quantum Devices, Kochi University of Technology, 782-8502 Kochi, Japan}
\ead{exner@ujf.cas.cz,
o.turek@ujf.cas.cz}

\begin{abstract}
The paper is concerned with the number of open gaps in spectra of periodic quantum graphs. The well-known conjecture by Bethe and Sommerfeld (1933) says that the number of open spectral gaps for a system periodic in more than one direction is finite. To the date its validity is established for numerous systems, however, it is known that quantum graphs do not comply with this law as their spectra have typically infinitely many gaps, or no gaps at all. These facts gave rise to the question about the existence of quantum graphs with the `Bethe--Sommerfeld property', that is, featuring a nonzero finite number of gaps in the spectrum. In this paper we prove that the said property is impossible for graphs with the vertex couplings which are either scale-invariant or associated to scale-invariant ones in a particular way. On the other hand, we demonstrate that quantum graphs with a finite number of open gaps do indeed exist. We illustrate this phenomenon on an example of a rectangular lattice with a $\delta$ coupling at the vertices and a suitable irrational ratio of the edges. Our result allows to find explicitly a quantum graph with any prescribed exact number of gaps, which is the first such example to the date.
\end{abstract}

% Uncomment for PACS numbers
\pacs{02.30.Tb, 03.65.Nk, 03.65.Db}

%
% Uncomment for keywords
\vspace{2pc} \noindent{\it Keywords}: quantum graphs, periodic structure, Bethe--Sommerfeld conjecture, vertex coupling, Diophantine approximation

%
% Uncomment for Submitted to journal title message
\submitto{\JPA}

%
% Uncomment if a separate title page is required
%\maketitle
%
% For two-column output uncomment the next line and choose [10pt] rather than [12pt] in the \documentclass declaration
%\ioptwocol
%

%%%%%%%%%%%%%%%%%%%%%%%%%%%%%%%%%%%%%%%%%%%%%%%%%%%%%%%%%%%%%%%%%%%%%%%%%%%%%%%%%%%%%%%%%%%%%%%%%%%%%%%%%%%%%%%
%%%%------------------------------------------ Introduction -----------------------------------------------%%%%
%%%%%%%%%%%%%%%%%%%%%%%%%%%%%%%%%%%%%%%%%%%%%%%%%%%%%%%%%%%%%%%%%%%%%%%%%%%%%%%%%%%%%%%%%%%%%%%%%%%%%%%%%%%%%%%
%  Section
%
\section{Introduction}\label{s:intro}

Quantum graphs are one of the fast developing areas of quantum physics,
the interest to them being driven both by their `practical' use in
modeling nanostructures and other physical objects, as well as by
theoretical reasons. They allow us to understand better various
quantum effects by analyzing them in the situation where the
configuration space has nontrivial geometrical and topological
properties. The literature concerning quantum graphs is extensive
and we limit ourselves to referring the reader to the recent
monograph \cite{BK13} as a guide to further ilumination.

While the quantum graph Hamiltonians describing particles `living'
on a metric graph share many properties with the `usual'
Schr\"odinger operators, this analogy is far from being complete; a
well-known example is the failure of the unique continuation
property \cite[Sec.~3.4]{BK13} that makes possible, for instance,
the existence of compactly supported eigenfunctions on infinite
graphs. This concerns, in particular, \emph{infinite periodic
graphs} the spectrum of which may not be purely absolutely
continuous containing flat bands, or infinitely degenerate
eigenvalues, and it is even possible that the absolutely continuous
part is empty as is the case for magnetic chain graphs with a
half-of-the-quantum flux through each chain element
\cite[Thm.~2.3]{EM15}.

Our goal in this paper is to investigate Hamiltonians of infinite periodic graphs from another point of view, namely the number of open gaps in their spectra. To begin with we recall the \emph{Bethe--Sommerfeld conjecture} \cite{BS33} put forward in the early days of the quantum theory, according to which a quantum system periodic in more than one direction --- with a slight abuse of terminology one usually speaks of  $\Z^\nu$-periodicity with $\nu\ge 2$ --- has a finite number of open gaps in the spectrum only. The reasoning behind the conjecture is based on the behavior of the spectral bands identified with the ranges of the dispersion curves or surfaces. Those at most touch for $\Z$-periodic systems while in higher dimensions they typically overlap making opening of gaps more and more difficult as we proceed to higher energies. This looked convincing and the property was taken for granted, although mathematically it proved to be a rather hard problem and it took decades before an affirmative answer was obtained for most cases of the `ordinary' Schr\"odinger operators --- see, for instance, \cite{DT82, HM98, Pa08, Sk79, Sk85} and references therein.

Discussing this question in the context of quantum graphs, the authors of \cite{BK13} recall the above mentioned heuristic argument (Sec.~4.7), however, they add immediately that this is not a `strict law'; in Sec.~5.1 of \cite{BK13} they illustrate this claim by examples of periodic graphs with an infinite number of resonant gaps created by a graph `decoration', the effect noticed first in the context of discrete graphs \cite{SA00} and later verified also for metric graphs\footnote{Cf. \cite{Ku05} and \cite[Sec.~5.1]{BK13} for more details. Let us add that decorations can produce an infinite number of spectral gaps also in systems that are $\Z$-periodic only \cite{AEL94}.}. In other words, we have examples of numerous situations in which the claim represented by the BS conjecture is false. The question thus arise whether it is a `law' at all, that is, whether there \emph{are} infinite periodic graphs having a \emph{finite nonzero} number of open gaps above the threshold of the spectrum. This is the topic we are going to discuss in the present paper; for the brevity of expression we will speak of those graphs as of graphs belonging to the \emph{Bethe--Sommerfeld class}, or simply \emph{Bethe--Sommerfeld graphs}.

We have two main conclusions. The first one concerns the
fact that the said property is sensitive to the type of vertex
coupling. Recall that the general self-adjoint coupling condition, commonly written as $(U-I)\Psi +\im(U+I)\Psi'=0$, can be decomposed into the Dirichlet, Neumann, and Robin parts \cite[Thm.~1.4.4]{BK13}; if the latter is absent we call such a coupling \emph{scale-invariant}.
 % -------------- %
\begin{theorem} \label{thm:absence}
An infinite periodic quantum graph does not belong to the
Bethe--Sommerfeld class if all the couplings at its vertices are
scale-invariant.
\end{theorem}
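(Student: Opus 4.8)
The plan is to convert the spectral problem into a single statement about a linear flow on a torus. First I would perform the usual Floquet--Bloch reduction: a period cell of the graph contains finitely many edges, say of lengths $\ell_1,\dots,\ell_m$, and $k^2>0$ lies in the spectrum of the Hamiltonian $H$ if and only if the secular matrix $M(k,\theta)$ is singular for some quasimomentum $\theta$ in the Brillouin zone $\mathbb{T}^\nu$. The decisive observation is the elementary description of scale invariance: the Robin part of the condition $(U-I)\Psi+\im(U+I)\Psi'=0$ is absent exactly when the unitary $U$ has spectrum contained in $\{+1,-1\}$, i.e. $U^2=I$; in that case the vertex scattering matrix equals $U$ itself and is \emph{independent of $k$} (equivalently, the condition decouples into $P_-\Psi=0$ and $P_+\Psi'=0$ with $P_\pm=(I\pm U)/2$, and the factor $k$ multiplying the derivatives simply cancels). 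Consequently the entries of $M(k,\theta)$ depend on $k$ only through the phases $\mathrm e^{\im k\ell_1},\dots,\mathrm e^{\im k\ell_m}$. Setting
\[
\gamma(k)=\bigl(\mathrm e^{\im k\ell_1},\dots,\mathrm e^{\im k\ell_m}\bigr)\in\mathbb{T}^m,\qquad S=\bigl\{\,z\in\mathbb{T}^m:\ \exists\,\theta\in\mathbb{T}^\nu,\ \det M(z,\theta)=0\,\bigr\},
\]
the set $S$ is \emph{closed}, being the image of the compact zero set of a continuous function under the projection $\mathbb{T}^m\times\mathbb{T}^\nu\to\mathbb{T}^m$, and one has $\sigma(H)\cap(0,\infty)=\{\,k^2:\ k>0,\ \gamma(k)\in S\,\}$.

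Next I would record the auxiliary fact that a scale-invariant coupling forces $H\ge0$: the vertex term $\langle\Psi',\Psi\rangle$ in the quadratic form vanishes because at each vertex $\Psi$ lies in the range of $P_+$ and $\Psi'$ in the orthogonal range of $P_-$. Hence $\sigma(H)\subset[0,\infty)$, its band structure is the usual locally finite one, and if $H$ belonged to the Bethe--Sommerfeld class we would have both $\Sigma:=\{k>0:\gamma(k)\in S\}\neq(0,\infty)$ (there is at least one gap) and $(k_*,\infty)\subset\Sigma$ for some $k_*>0$ (there are only finitely many).

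The heart of the argument is then a short torus computation. The map $\gamma\colon\R\to\mathbb{T}^m$ is a continuous one-parameter subgroup, so $\gamma\bigl((0,\infty)\bigr)$ is a sub-semigroup of the compact group $\mathbb{T}^m$; its closure is therefore a closed subgroup, hence a subtorus, and since it contains $\gamma((0,\infty))$ together with its inverses and the identity, it contains all of $\gamma(\R)$ and thus equals $\Gamma:=\overline{\gamma(\R)}$. Translating by $\gamma(k_*)\in\Gamma$ shows $\overline{\gamma\bigl((k_*,\infty)\bigr)}=\Gamma$ as well. From $\gamma\bigl((k_*,\infty)\bigr)\subset S$ and the closedness of $S$ we get $\Gamma\subset S$; but $\gamma(k)\in\Gamma$ for \emph{every} $k>0$, so $\Sigma=(0,\infty)$ --- contradicting the existence of a gap. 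This rules out a nonzero finite number of gaps, and the reasoning is insensitive to the number $\nu\ge1$ of periodic directions.

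I expect the real work to be concentrated in the first step: carrying out the Floquet reduction carefully enough to see that every $k$-dependence of the secular matrix is confined to the phases $\mathrm e^{\im k\ell_j}$, and checking cleanly that $S$ is closed and that $\sigma(H)\cap(0,\infty)$ is precisely the $\gamma$-preimage of $S$ --- here flat bands and vertices carrying a Dirichlet part must be accommodated, but they only enlarge $S$ and do not disturb the conclusion. The torus step is brief once the reduction is in hand; the one point there that deserves a written line is the identity $\overline{\gamma((k_*,\infty))}=\overline{\gamma(\R)}$, for which the remark that a closed sub-semigroup of a compact group is a subgroup --- equivalently, recurrence of the linear flow near the group identity --- is exactly what is required.
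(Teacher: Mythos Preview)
Your proposal is correct and rests on the same reduction as the paper: once the vertex couplings are scale-invariant, the secular determinant depends on $k$ only through the phases $\mathrm e^{\im k\ell_j}$, so the positive momentum spectrum is exactly the set of times at which the linear flow $k\mapsto\gamma(k)$ on $\mathbb{T}^m$ visits a fixed closed set $S$. Where you diverge from the paper is in the recurrence step. The paper proves an explicit approximation lemma (its Lemma~2.1) via the simultaneous Dirichlet theorem --- for every $k$, $C$, $\delta$ one finds $k'>C$ with $\|\vec\phi(k')-\vec\phi(k)\|_\infty<\delta$ --- and then argues by continuity that a gap near $k_0$ forces a gap of almost the same momentum width near $k'$; this yields the stronger Proposition~2.2(ii), which the paper later reuses to prove Theorem~1.2 on general (non-scale-invariant) couplings. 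Your route replaces this number-theoretic lemma by the clean topological fact that a closed sub-semigroup of a compact abelian group is a subgroup, hence $\overline{\gamma((k_*,\infty))}=\overline{\gamma(\R)}$, and then reads off the dichotomy by contraposition. This is shorter and more conceptual for Theorem~\ref{thm:absence} alone, but it does not give the quantitative persistence of gap widths that the paper extracts and needs elsewhere; if you intend to go on to the analogue of Theorem~1.2 you will want to recover something like Lemma~2.1 anyway. Your side remarks --- nonnegativity of $H$ from orthogonality of the Dirichlet and Neumann subspaces, closedness of $S$ as a projection of a compact zero set, and the harmlessness of flat bands --- are all in order.
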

 % -------------- %
\noindent In fact, one can make a stronger claim. Given a graph
with general couplings we consider the same graph with the couplings
made scale-invariant by removing the Robin component in the
way described in Section~\ref{ss:general}.
Comparing their spectra, we find:

\begin{theorem}\label{thm:associated}
If an infinite periodic quantum graph with
scale-invariant couplings at the vertices has at least one open gap,
then adding a Robin component to the couplings
cannot produce a Bethe--Sommerfeld graph.
\end{theorem}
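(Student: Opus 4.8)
To prove Theorem~\ref{thm:associated} I would argue by a high-energy comparison of the two dispersion relations, using Theorem~\ref{thm:absence} as the starting point. Denote by $\Gamma_0$ the graph with the scale-invariant couplings and by $\Gamma$ the graph obtained from it by reinstating a Robin component as in Section~\ref{ss:general}. Since $\Gamma_0$ has, by hypothesis, at least one open gap, Theorem~\ref{thm:absence} shows that it in fact has \emph{infinitely many}; hence the whole task reduces to showing that $\Gamma$, too, has infinitely many open gaps, i.e.\ does not belong to the Bethe--Sommerfeld class. I would set up the Floquet--Bloch reduction for both operators in the usual way, so that (outside the discrete set of energies carrying flat bands or coming from Dirichlet decoupling of loops, which is harmless since such points can only split an interval into finitely many subintervals) one has $k^{2}\in\sigma(H)$ precisely when the secular function $F(k,\theta)$ vanishes for some quasimomentum $\theta$ in the Brillouin torus.

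The structural input is that a \emph{scale-invariant} coupling has an energy-independent vertex scattering matrix, so that $F_0(k,\theta)$ depends on $k$ only through the edge phases $\mathrm{e}^{\im k\ell_e}$; equivalently there is a closed set $S$ in the torus $\mathbb{T}^{d}$, $d$ being the number of edges of a period cell, such that $k^{2}\in\sigma(H_{\Gamma_0})$ iff $(k\ell_e\bmod 2\pi)_{e}\in S$, and a gap of $\Gamma_0$ is exactly a $k$-interval whose image under the linear flow $k\mapsto(k\ell_e\bmod 2\pi)_{e}$ stays in the open set $\mathbb{T}^{d}\setminus S$. Because $\Gamma_0$ has a gap, $\mathbb{T}^{d}\setminus S$ is nonempty; restricting the flow to the subtorus equal to the closure of its orbit --- where the flow is minimal, hence recurrent --- I would conclude that $\mathbb{T}^{d}\setminus S$ is revisited over $k$-intervals of length bounded below by a fixed $\delta_{0}>0$, with left endpoints tending to $\infty$, and that on some compact set $\mathcal{K}\subset\mathbb{T}^{d}\setminus S$ swept by these intervals one has $\min_{\theta}|F_0(\cdot,\theta)|\ge c_{0}>0$ by continuity and compactness. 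This recurrence-of-gaps-of-uniform-width step is the one I expect to be the main obstacle: one has to handle an orbit closure that need not be the full torus and to make sure the flat-band energies do not accumulate in such a way as to choke off the recurring intervals.

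The last step is the comparison. Reinstating the Robin component changes the vertex scattering matrices only at finite energy: $\sigma_v(k)\to\sigma_v^{(0)}$ as $k\to\infty$, with an $O(k^{-1})$ rate, by the construction in Section~\ref{ss:general} (the Robin part is subordinate at high energy). Hence $\sup_{\theta}|F(k,\theta)-F_0(k,\theta)|\to 0$ as $k\to\infty$, and therefore along the recurring $k$-intervals produced above, and for $k$ large enough, $|F(k,\theta)|\ge c_{0}-o(1)>0$ for every $\theta$. Each such interval thus contains, in the energy variable $\lambda=k^{2}$, a genuine spectral gap of $\Gamma$, and there are infinitely many of them escaping to infinity. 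Consequently $\Gamma$ has infinitely many open gaps and cannot be a Bethe--Sommerfeld graph, which is the assertion of Theorem~\ref{thm:associated}.
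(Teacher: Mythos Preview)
Your proposal is correct and follows essentially the same route as the paper: reduce to the secular function, use that the scale-invariant secular function $F_0$ depends on $k$ only through the edge phases, exploit recurrence of the linear flow on the torus to reproduce the original gap (with a uniform lower bound $|F_0|\ge c_0$) at arbitrarily large $k$, and then observe that adding the Robin part perturbs the secular function by $O(1/k)$, so for $k$ large the perturbed function is still nonzero over the recurring intervals.

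The one notable difference is in how you obtain the recurrence. You argue dynamically, passing to the orbit closure and invoking minimality; you correctly anticipate that this step is where the work lies. The paper bypasses that discussion entirely by using the simultaneous Dirichlet approximation theorem (its Lemma~2.1): for any $k_0$, any $C$, and any $\delta$, there is $k'>C$ with $\|\vec\phi(k')-\vec\phi(k_0)\|_\infty<\delta$. This is equivalent to your recurrence statement but is more elementary and completely sidesteps the orbit-closure analysis and the flat-band worry you raise (the latter is a non-issue anyway, since the secular characterization $F(k;\vec\vartheta)=0$ already handles the spectrum uniformly). With Dirichlet in hand, the paper gets the compactness bound $|\Phi_0|\ge\gamma$ on the shrunk gap interval, the approximation $|\Phi_0(\vec\phi(k'+x))-\Phi_0(\vec\phi(k_0+x))|<\gamma/2$, and the Robin correction $|F_1|/k'<\gamma/4$, then combines them by the triangle inequality. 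Your argument would work, but the paper's is shorter and avoids precisely the obstacle you flagged.
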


On the other hand, we are going to demonstrate that the said class is nonempty. Our second main result in this paper is expressed in the following claim.
 % -------------- %
\begin{theorem} \label{thm:existence}
Bethe--Sommerfeld graphs exist.
\end{theorem}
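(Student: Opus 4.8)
To prove Theorem~\ref{thm:existence} the plan is to exhibit a Bethe--Sommerfeld graph among the rectangular lattices carrying a $\delta$ coupling. Fix a coupling constant $\alpha>0$, to be taken small below, and let $H$ be the Hamiltonian on the rectangular lattice whose horizontal and vertical edges have lengths $a$ and $b$. Expressing the components of a Bloch function on the edges through the vertex values and imposing the $\delta$ condition (continuity, and the sum of outgoing derivatives at a vertex equal to $\alpha$ times the vertex value) leads to the dispersion relation
\begin{equation}\label{eq:disp}
\frac{\cos\theta_1-\cos ka}{\sin ka}+\frac{\cos\theta_2-\cos kb}{\sin kb}=\frac{\alpha}{2k},\qquad(\theta_1,\theta_2)\in[-\pi,\pi]^2,
\end{equation}
valid for $k>0$ with $ka,kb\notin\pi\Z$, whereas the values $k\in\frac{\pi}{a}\Z\cup\frac{\pi}{b}\Z$ carry flat bands and lie in $\sigma(H)$. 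As $\theta_1,\theta_2$ vary, the left-hand side of (\ref{eq:disp}) sweeps an interval with upper endpoint $M_a(k)+M_b(k)$, where $M_\ell(k)\ge 0$ is $\frac{\pi}{\ell}$-periodic in $k$, vanishes exactly at $k\in\frac{\pi}{\ell}\Z$, and increases strictly from $0$ to $+\infty$ on each period; explicitly $M_\ell\bigl(\frac{m\pi}{\ell}+t\bigr)=\tan\frac{\ell t}{2}$ for $0\le t<\frac{\pi}{\ell}$. Since the lower endpoint of that interval is negative, for $\alpha>0$ one obtains the criterion $k^{2}\in\sigma(H)\iff M_a(k)+M_b(k)\ge\frac{\alpha}{2k}$. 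For $\alpha=0$ the right-hand side vanishes and there are no gaps; if $a/b\in\Q$ the zeros of $M_a$ and $M_b$ coincide periodically and infinitely many gaps open --- so the irrationality of $b/a$ is exactly what must be exploited.

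The next step is to localise the gaps. Because $M_a,M_b\ge 0$, a gap at $k^{2}$ forces $M_a(k),M_b(k)<\frac{\alpha}{2k}$, hence $k$ lies slightly above some $m\pi/a$ \emph{and} slightly above some $n\pi/b$; gaps can therefore appear only near the simultaneous near-coincidences $m\pi/a\approx n\pi/b$. Normalising $a=1$ and writing $\gamma=b\notin\Q$, let $n=n(m)$ be the integer nearest to $m\gamma$. Using the local formula for $M_\ell$ one checks that on the window attached to the $m$-th near-coincidence the function $M_a+M_b$ attains its minimum at $k_\ast=\max(m\pi,n\pi/\gamma)$ (from above), where it equals $\tan\bigl(\frac{\pi}{2}\|m\gamma\|\bigr)$ if $m\gamma>n$ and $\tan\bigl(\frac{\pi}{2\gamma}\|m\gamma\|\bigr)$ if $m\gamma<n$, with $\|\cdot\|$ the distance to the nearest integer; comparison with $\frac{\alpha}{2k_\ast}$ then gives a workable necessary and sufficient condition for a gap. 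In particular the $m$-th near-coincidence produces a gap \emph{only if} $m\|m\gamma\|<C_1$ and produces one \emph{whenever} $m\|m\gamma\|<C_2$, with constants $0<C_2<C_1$ proportional to $\alpha$ (for large $m$ one may take $C_1$ arbitrarily close to $\alpha/\pi^2$ and $C_2$ close to $\alpha\gamma/\pi^2$).

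Finally I would choose $\gamma$ arithmetically through its continued fraction $[0;a_1,a_2,\dots]$. Fix $\alpha$ so small that $C_1<\frac{1}{\sqrt5}$; set $a_1=1$, so $\gamma>\frac14$; and take all partial quotients equal to $1$ except for a prescribed number $r\ge 1$ of them, placed at widely separated positions and all equal to one fixed large value $A$ from a suitable window (roughly $\frac1{C_2}<A<\frac4{C_1}$), which for small $\alpha$ contains integers. Such a $\gamma$ is badly approximable, and since only finitely many partial quotients exceed $1$, one has $\liminf_{m\to\infty}m\|m\gamma\|=\frac{1}{\sqrt5}>C_1$; hence $m\|m\gamma\|<C_1$ for only finitely many $m$, so $H$ has a finite number of open gaps. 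Conversely, for each index $k$ with $a_{k+1}=A$ the convergent denominator $q_k$ satisfies $q_k\|q_k\gamma\|\asymp 1/A<C_2$, so the $q_k$-th near-coincidence opens a gap; and the upper bound on $A$ guarantees $cq_k\|cq_k\gamma\|\asymp c^{2}/A>C_1$ for $c\ge 2$, so neither the multiples of the $q_k$ nor the intermediate fractions between consecutive convergents open further gaps. Hence $H$ has exactly $r$ open gaps above the threshold of its spectrum, which proves the theorem and, letting $r$ range over $\N$, yields quantum graphs with any prescribed number of gaps.

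The principal difficulty is the quantitative step of the second paragraph: one needs two-sided estimates on $M_a(k)+M_b(k)$ near a near-coincidence, uniform in $m$, sharp enough that the constants $C_1,C_2$ can be used simultaneously --- $C_1$ small enough for the Diophantine identity $\liminf_m m\|m\gamma\|=\frac1{\sqrt5}$ and for the window argument, yet $C_2$ still positive. Some further care is needed for the exact count (the finitely many small values of $m$, and the mutual separation of the $r$ gaps), for the flat bands, which belong to $\sigma(H)$ and bound the gaps from one side, and for the low-energy part: since $\alpha>0$ the quadratic form of $H$ is nonnegative, so $\sigma(H)\subset[0,\infty)$, its threshold sits near $\alpha/(a+b)>0$, and all the constructed gaps lie well above it.
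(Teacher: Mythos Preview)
Your approach is sound and shares the paper's overall strategy---the rectangular $\delta$-lattice with an irrational edge ratio analysed via continued fractions---but with a genuinely different choice of parameters. The paper's primary existence proof (Section~5) fixes the ratio at the golden mean $\phi$ and locates the Bethe--Sommerfeld window in the \emph{attractive} range $\alpha<0$; indeed for $a/b=\phi$ and $\alpha>0$ one always gets either no gaps or infinitely many, so your repulsive scheme necessarily needs a different ratio. The paper's Section~6 does construct ratios that work for $\alpha>0$ by prepending large partial quotients to the golden-mean tail, which is close in spirit to your construction, though aimed only at producing at least one gap rather than an exact count.

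Your device of inserting $r$ isolated large partial quotients $A$ into a sea of $1$'s is not in the paper and is a nice way to manufacture a prescribed number of gaps by varying the ratio rather than the coupling (the paper gets exact counts in Theorem~5.2 by fixing $\phi$ and tuning $\alpha<0$). The point you flag as the principal difficulty is real: the gap criterion is not literally $m\|m\gamma\|<C$ but splits into two one-sided conditions, one on $m(m\gamma-\lfloor m\gamma\rfloor)$ and one on $n(n\gamma^{-1}-\lfloor n\gamma^{-1}\rfloor)$, with constants differing by a factor of $\gamma$; the paper handles this cleanly by introducing the one-sided Markov constants $\upsilon(\theta)$ and $\upsilon(\theta^{-1})$. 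For bare existence this asymmetry is harmless---$\mu(\gamma)=1/\sqrt{5}$ controls both sides and one large $A$ forces a gap---but for the exact count you would need to check that each inserted $A$ opens a gap on only one side, and that the transient behaviour between the widely separated $A$'s (and at small $m$) does not create extra gaps. These are tractable but require the kind of case analysis the paper carries out explicitly in Section~5.
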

 % -------------- %
\noindent As it is usually the case with existence claims it is sufficient to present an example. With this aim we revisit in the second part of the paper the model introduced in \cite{Ex95} and further discussed in \cite{Ex96, EG96} describing a periodic lattice whose basic cell is a rectangle of the side ratio $\theta$ and the coupling in the vertices is of the $\delta$-type with a coupling constant $\alpha\in\R$. It was shown in the mentioned papers that the spectral properties of such a quantum graph depend on the number-theoretical properties of the ratio $\theta$. Here we are
going to demonstrate that if $\theta$ is badly approximable by rationals, there are values of $\alpha$ for which this graph belongs to the Bethe--Sommerfeld class. More than that, our construction makes it possible to find values of $\alpha$ for which the lattice graph in question has any prescribed number of gaps.

Before closing the introduction, let us recall that there are examples of the `usual' Schr\"odinger operators where the question about validity of the conjecture remains open, a prominent example being Laplacian in a periodically curved tube or a Schr\"odinger operator in a straight tube with a $\Z$-periodic potential. These systems are sometimes said to have a `mixed dimensionality' even if they are obviously periodic in one direction only, however they have a `two-dimensional' feature, namely that in the absence of potential or the deformation they have intersecting dispersion curves, which could suggest a BS-type behaviour. An analogue of such systems in the present context are $\Z$-periodic graphs with period cells connected by more than a single link for which the question about the Bethe-Sommerfeld property remains also open; note that Theorems~\ref{thm:absence} and \ref{thm:associated} apply to such graphs.

%%%%%%%%%%%%%%%%%%%%%%%%%%%%%%%%%%%%%%%%%%%%%%%%%%%%%%%%%%%%%%%%%%%%%%%%%%%%%%%%%%%%%%%%%%%%%%%%%%%%%%%%%%%%%%%
%%%%-------------------------------------- Absence of BS property -----------------------------------------%%%%
%%%%%%%%%%%%%%%%%%%%%%%%%%%%%%%%%%%%%%%%%%%%%%%%%%%%%%%%%%%%%%%%%%%%%%%%%%%%%%%%%%%%%%%%%%%%%%%%%%%%%%%%%%%%%%%
%  Section
%
\section{Absence of the Bethe--Sommerfeld property}\label{s:scaleinv}
\setcounter{equation}{0}

In this section we are going to prove Theorems~\ref{thm:absence} and \ref{thm:associated}.

\subsection{The ST-form of the coupling}

As it is common in the quantum graph theory the Hamiltonians we consider act as the (negative) second derivative on the graph edges with the domain consisting of functions which belong locally to the second Sobolev space and satisfy suitable coupling conditions at the vertices. For the purposes of the argument it is useful to write the vertex conditions, instead of the commonly used way mentioned in the introduction, in the so-called \emph{ST-form} proposed in~\cite{CET10}. Given a vertex of degree $n$, the vectors $\Psi$ and $\Psi'$ in $\C^n$ will again stand for the boundary values in the vertex,
 % -------------- %
$$
\Psi:=\left(\begin{array}{c}
\psi_1(0)\\
\vdots\\
\psi_n(0)
\end{array}\right)\,,\quad
\Psi':=\left(\begin{array}{c}
\psi_1'(0)\\
\vdots\\
\psi_n'(0)
\end{array}\right)\,,
$$
 % -------------- %
where the limits of the first derivatives are conventionally taken
in the outward direction. The coupling condition at the vertex can
be then written as
 % -------------- %
\begin{equation}\label{ST}
\left(\begin{array}{cc}
I^{(r)} & T \\
0 & 0
\end{array}\right)
\Psi'= \left(\begin{array}{cc}
S & 0 \\
-T^* & I^{(n-r)}
\end{array}\right)
\Psi
\end{equation}
 % -------------- %
for certain $r$, $S$, and $T$, where the symbol $I^{(r)}$ denotes the identity matrix of order $r$ and the matrix $S$ is Hermitian. The condition~(\ref{ST}) allows us to single out scale-invariant couplings; it is easy to see that the coupling has this property if and only if $S=0\:$~\cite{CET10b}. In particular, the on-shell scattering matrix $\cS(k)$ for the vertex in question is in the $ST$-formalism given by
 % -------------- %
\begin{equation}\label{S(S,T,k)}
\cS(k)= -I^{(n)} +2\left(\begin{array}{c} I^{(r)} \\ T^*
\end{array}\right) \left(I^{(r)}+TT^*-\frac{1}{\im k}S\right)^{-1}
\left(\begin{array}{cc} I^{(r)} & T\end{array}\right)
\end{equation}
 % -------------- %
and it is obvious that $\cS(k)$ is independent of $k$ \emph{iff} $S=0$.

The spectrum is obtained using the Bloch-Floquet theory. The way to do that is well known, cf.~\cite[Sec.~4.2]{BK13} and references therein, we describe it nevertheless briefly here to make the paper self-contained. We assume that the graph is locally finite and consider its elementary cell; cutting it out from the original periodic graph we get a finite family of pairs of `antipodal' vertices related mutually by the action of the corresponding translation group. Each such pair of vertices $(v_-,v_+)$ can be regarded as a single vertex with the boundary
conditions
 % -------------- %
\begin{equation}\label{Floquet}
\psi(v_+)=\e^{\im\vartheta_l}\psi(v_-)\,,\quad
\psi'(v_+)=\e^{\im\vartheta_l}\psi'(v_-)
\end{equation}
 % -------------- %
for some $\vartheta_l\in(-\pi,\pi]$, where $l=1,\ldots,\nu$, and $\nu$ is the dimension of translation group associated with graph periodicity. The pair of edges with the endpoints $v_\pm$ can be turned into a single edge by identifying these endpoints, and the acquired phase $\vartheta_l$ coming from the conditions~(\ref{Floquet}) can be also regarded as being induced by a magnetic potential. Let us denote the graph obtained in this way from the elementary cell by $\Gamma$, and the number of its edges by $E$. Regarding each edge of $\Gamma$ as a pair of two directed edges (bonds) of opposite orientations and indexing the bonds by $1,2,\ldots,2E$, we consider the $2E\times2E$ matrices $\mathbf{L}$, $\mathbf{\Theta}$ and $\mathbf{S}(k)$ which are defined in the following way. The matrix $\mathbf{L}$ is a diagonal matrix whose $j$-th diagonal entry is the length of the $j$-th bond. The diagonal matrix $\mathbf{\Theta}$ has the entries $\vartheta_l$ and $-\vartheta_l$ at the pair of the diagonal positions corresponding to the $l$-th edge of $\Gamma$ created by the mentioned vertex identification; all its other entries are zero. Finally, the matrix $\mathbf{S}(k)$ is the bond scattering matrix, which contains directed edge-to-edge scattering coefficients. In this way, each element of the matrix $\mathbf{S}(k)$ corresponds to a certain entry of the scattering matrix at a certain vertex of the elementary cell, cf.~\cite[eq.~(2.1.15)]{BK13}. Recall that the bond scattering matrix $\mathbf{S}(k)$ is unitary.
Having introduced the matrices $\mathbf{L}$, $\mathbf{\Theta}$, and $\mathbf{S}(k)$, we define the function $F(k;\vec{\vartheta})$ as
 % -------------- %
\begin{equation}\label{F}
F(k;\vec{\vartheta}):=\det\left(\mathbf{I}-\e^{\im(\mathbf{\Theta}+k\mathbf{L})}\mathbf{S}(k)\right)\,;
\end{equation}
 % -------------- %
this allows us to write the spectral condition in the form
 % -------------- %
\begin{equation}\label{condition}
k^2\in\sigma(H) \quad\Leftrightarrow\quad
(\exists\vec{\vartheta}\in(-\pi,\pi]^\nu)(F(k;\vec{\vartheta})=0)\,.
\end{equation}
 % -------------- %
Note that the function $F(k;\vec{\vartheta})$ is in general complex, however, one can consider a real-valued function instead, dividing $F(k;\vec{\vartheta})$ by $\sqrt{\det(\e^{\im(\mathbf{\Theta}+k\mathbf{L})}\mathbf{S}(k))}$, cf.~\cite[Rem.~2.1.10]{BK13}.

\subsection{Graphs with scale-invariant couplings}

As we have indicated, the described way of treating periodic quantum graphs is pretty standard. It has an important advantage, namely that it allows one to analyze properties of the ergodic flow on the torus associated with such a system. This idea can be traced back to Barra and Gaspard \cite{BG00} and it was recently used by Band and Berkolaiko \cite{BB13} to derive a deep result about spectral universality for periodic quantum graphs with the simplest vertex coupling, usually referred to as Kirchhoff or Neumann. We are going to use an argument analogous to that of \cite{BB13} in Proposition~\ref{Prop.FT} below.

Consider first the case of a periodic graph with scale-invariant couplings at all the vertices. The scale-invariance assumption implies
that the scattering matrix at each graph vertex is independent of $k$ and the same is naturally true for the matrix $\mathbf{S}(k)$ entering
formula~(\ref{F}). From now on, let $F_0(k;\vec{\vartheta})$ denote the left hand side of formula~(\ref{F}) for a graph with scale-invariant couplings. The function value $F_0(k;\vec{\vartheta})$
thus depends on the vectors $\vec{\vartheta}$ and $(k\ell_0,k\ell_1,\ldots,k\ell_d)$, where $\{\ell_0,\ell_1,\ldots,\ell_d\},\: d+1\le E$, is the set of mutually different edge lengths of $\Gamma$. Moreover, the value $F(k;\vec{\vartheta})$ is $2\pi$-periodic in each of the terms $k\ell_0,k\ell_1,\ldots,k\ell_d$.
Let us define
\begin{equation}\label{phi}
\vec{\phi}(k)=(\{k\ell_0\}_{(2\pi)},\{k\ell_1\}_{(2\pi)},\ldots,\{k\ell_d\}_{(2\pi)}),
\end{equation}
where the symbol $\{x\}_{(2\pi)}$ stands for the difference between $x$ and the nearest integer multiple of $2\pi$, i.e.
 % -------------- %
\begin{equation}\label{fract.part}
\{x\}_{(2\pi)}=x-2\pi m \quad \text{if}\;\;
x\in((2m-1)\pi,(2m+1)\pi]\,.
\end{equation}
 % -------------- %
Since the value $F_0(k;\vec{\vartheta})$ depends on the vectors $\vec{\phi}(k)$
and $\vec{\vartheta}$ only, it is convenient to introduce a function
\begin{equation}\label{Phi}
\Phi_0(\vec{\phi}(k);\vec{\vartheta})=F_0(k;\vec{\vartheta})
\end{equation}
and write the spectral condition~(\ref{condition}) in the form
 % -------------- %
\begin{equation}\label{SC Phi}
k^2\in\sigma(H_0) \quad\Leftrightarrow\quad
\left(\exists\vec{\vartheta}\in(-\pi,\pi]^\nu\right)\left(\Phi_0(\vec{\phi}(k);\vec{\vartheta})=0\right).
\end{equation}
 % -------------- %

\begin{lemma}\label{Lem. phi}
Let $\vec{\phi}(k)$ be given by (\ref{phi}). For any $k>0$, $C>0$ and $\delta>0$ there is a $k'>C$ such that $\|\vec{\phi}(k')-\vec{\phi}(k)\|_\infty<\delta$.
\end{lemma}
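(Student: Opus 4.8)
The plan is to reduce the statement to a question of simultaneous Diophantine approximation and to settle that by a compactness (equidistribution) argument. Writing $k'=k+t$, one has $k'\ell_j=k\ell_j+t\ell_j$, so it suffices to produce, for the given $C$ and $\delta$, a real $t$ with $k+t>C$ such that each $t\ell_j$ lies within $\delta$ of $2\pi\Z$; granting this, $\{k'\ell_j\}_{(2\pi)}$ differs from $\{k\ell_j\}_{(2\pi)}$ by at most $\delta$ in every coordinate, \emph{unless} the perturbation pushes some coordinate across the discontinuity of $\{\,\cdot\,\}_{(2\pi)}$ at $\pm\pi$. I will carry out the Diophantine step first and handle that crossing issue at the end, since it is the only genuinely delicate point.

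For the Diophantine step, introduce the torus $\mathbb{T}^{d+1}=(\R/2\pi\Z)^{d+1}$ and the continuous homomorphism $f\colon\R\to\mathbb{T}^{d+1}$, $f(t)=(t\ell_0,\dots,t\ell_d)\bmod 2\pi\Z^{d+1}$; note that $\vec\phi(k)$ is simply a choice of representative in $(-\pi,\pi]^{d+1}$ of the torus point $f(k)$. Put $a:=\max\{\,C-k,\,1\,\}>0$, so that $(a,\infty)$ is a sub-semigroup of $(\R,+)$; then $f\big((a,\infty)\big)$ is a sub-semigroup of the compact group $\mathbb{T}^{d+1}$, and its closure $S$ is a \emph{closed} sub-semigroup. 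By the standard fact that a closed sub-semigroup of a compact group is a subgroup --- for $x\in S$ a subsequence of $(nx)_{n\ge1}$ converges, and taking differences of its terms gives $0\in S$ and $-x\in S$ --- we conclude that $S$ is a closed subgroup; in particular $0\in S$, and since $f(s)=f(s+na)-f(na)$ with $s+na$ and $na$ eventually in $(a,\infty)$, $S$ also contains $f(s)$ for \emph{every} $s\in\R$. Because $f\big((a,\infty)\big)$ is dense in $S$ and $0\in S$, for any $\eta>0$ there is $t>a$ with $f(t)$ within $\eta$ of $0$ in $\mathbb{T}^{d+1}$; then $k'=k+t>C$ satisfies $f(k')=f(k)+f(t)$, which is torus-close to $f(k)$. (One could equally invoke Weyl's equidistribution theorem for the curve $t\mapsto f(t)$.)

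If $\vec\phi(k)$ has all entries at distance more than $\delta$ from the cut point $\pi$ of $\R/2\pi\Z$, the representative map is continuous near $f(k)$ and, choosing $\eta$ small, $\norm{\vec\phi(k')-\vec\phi(k)}_\infty<\delta$ follows at once; this already disposes of all but a thin set of values of $k$. The main obstacle is the remaining case, in which some $\{k\ell_j\}_{(2\pi)}$ equals, or is very close to, $\pm\pi$: a torus-close $k'$ may then land on the wrong side of the cut in that coordinate and make the corresponding entry of $\vec\phi(k')-\vec\phi(k)$ jump by almost $2\pi$. Here I would exploit that all edge lengths are positive. The vector $-(\ell_0,\dots,\ell_d)$ has all entries negative, so for small $\varepsilon>0$ the point $f(-\varepsilon)\in S$ has each coordinate slightly negative; adding $f(-\varepsilon)$ to $f(k)$ shifts the $j$-th representative downward by $\varepsilon\ell_j$, and for $\varepsilon$ small enough --- the constraint being $\varepsilon\ell_j<\delta$ and, for the finitely many $j$ with $\{k\ell_j\}_{(2\pi)}$ close to $-\pi$, also $\varepsilon\ell_j$ less than the positive gap $\{k\ell_j\}_{(2\pi)}+\pi$ --- this shift produces no crossing in any coordinate. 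Repeating the density argument with the target $0$ replaced by $f(-\varepsilon)$, and taking $\eta$ small enough that the sign of each coordinate perturbation is preserved, yields $k'>C$ with $\norm{\vec\phi(k')-\vec\phi(k)}_\infty<\delta$ in full generality. In short, the Diophantine core is routine; the real work is the bookkeeping at the discontinuity of $\{\,\cdot\,\}_{(2\pi)}$, which the positivity of the $\ell_j$ allows one to organise.
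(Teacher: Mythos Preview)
Your argument is correct, and it takes a genuinely different route from the paper's. The paper proceeds constructively via the simultaneous Dirichlet approximation theorem: it sets $\alpha_j=\ell_j/\ell_0$, picks integers $p_1,\dots,p_d,q$ with $|\alpha_j-p_j/q|\le 1/(qN^{1/d})$, and then puts $k'_\delta=k+2\pi m q/\ell_0$ for suitably chosen $m$ and $N$, verifying directly that $\bigl|\{k'_\delta\ell_j-k\ell_j\}_{(2\pi)}\bigr|<\delta$ for every $j$. Your approach is instead soft and group-theoretic: you observe that $f\bigl((a,\infty)\bigr)$ is a sub-semigroup of the torus, invoke the standard fact that a closed sub-semigroup of a compact group is a subgroup, and read off that $0$ (and indeed every $f(s)$) lies in its closure, whence the recurrence. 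The paper's method is more explicit and gives a concrete recipe for $k'$; yours is shorter on the Diophantine side, requires no number-theoretic input, and would generalise immediately to other one-parameter flows on compact abelian groups.

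One further point deserves mention. The paper actually establishes only the torus-distance statement $\bigl|\{(k'-k)\ell_j\}_{(2\pi)}\bigr|<\delta$, which is precisely what it later uses; the passage to the literal conclusion $\bigl|\{k'\ell_j\}_{(2\pi)}-\{k\ell_j\}_{(2\pi)}\bigr|<\delta$ is left implicit there. You, by contrast, take the cut at $\pm\pi$ seriously and give a clean fix: target $f(-\varepsilon)$ rather than $0$, using positivity of the $\ell_j$ to make every coordinate perturbation slightly negative and hence avoid crossings. This extra paragraph is not strictly needed for the downstream application (the secular function $\Phi_0$ factors through the torus anyway), but it does make the lemma hold as literally stated.
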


\begin{proof}
We shall prove that
 % -------------- %
\begin{equation}\label{implication}
\hspace{-6em} (\forall k>0)(\forall C>0)(\forall\delta>0)(\exists k'>C)(\forall
j\in\{0,1,\ldots,d\})\left(\big|\left\{k'\ell_j-k\ell_j\right\}_{(2\pi)}\big|<\delta\right)\!,
\end{equation}
 % -------------- %
where the symbol $\{\cdot\}_{(2\pi)}$ was defined in~(\ref{fract.part}).
We will use the simultaneous version of the Dirichlet's approximation theorem. First of all, we set
 % -------------- %
\begin{equation}\label{alpha}
\alpha_j=\frac{\ell_j}{\ell_0}
\end{equation}
 % -------------- %
for all $j=1,\ldots,d$. The said theorem guarantees for any $\alpha_1,\ldots,\alpha_d\in\R$ and for any natural number $N$ the existence of integers $p_1,\ldots,p_d,\:q\in\Z$, $\,1\leq q\leq N$, such that
 % -------------- %
\begin{equation}\label{Dirichlet}
\left|\alpha_j-\frac{p_j}{q}\right|\leq\frac{1}{qN^{1/d}}\,.
\end{equation}
 % -------------- %
Let $k$, $C$ and $\delta$ be given and choose $m$ as an integer with the property that
 % -------------- %
\begin{equation}\label{m}
m>\frac{\ell_0C}{2\pi}\,.
\end{equation}
 % -------------- %
Once $m$ is fixed, the number $N$ can be taken as any integer satisfying
 % -------------- %
\begin{equation}\label{N}
N>\left(\frac{2\pi}{\delta}m\right)^d\,.
\end{equation}
 % -------------- %
Let $q$ be the integer from the simultaneous version of the Dirichlet's approximation theorem corresponding to $N$ chosen according to~(\ref{N}). Notice that $q$ depends on $N$, and therefore also on $\delta$. For this $q$, we define $k'_\delta$ as follows,
 % -------------- %
$$
k'_\delta:=k+2\pi m\frac{q}{\ell_0}\,.
$$
 % -------------- %
Our aim is to show that $k'_\delta$ satisfies the following two conditions:
 % -------------- %
\begin{eqnarray}
\phantom{AAAAAAAAii}k'_\delta>C \label{(1)} && \\
\big|\left\{k'_\delta\ell_j-k\ell_j\right\}_{(2\pi)}\big|<\delta\,, &&
 \quad j=0,1,\ldots,d \label{(2)}
\end{eqnarray}
 % -------------- %
Applying the definition of $k'_\delta$, the inequality $q\geq1$ and the assumption~(\ref{m}), we get
 % -------------- %
$$
k'_\delta=k+2\pi m\frac{q}{\ell_0}>2\pi m\frac{q}{\ell_0}\geq2\pi m\frac{1}{\ell_0}>C\,,
$$
 % -------------- %
in other words, condition~(\ref{(1)}) holds true. Let us proceed to condition~(\ref{(2)}). We have
 % -------------- %
$$
k'_\delta\ell_j-k\ell_j=2\pi m\frac{q}{\ell_0}\ell_j=2\pi mq\alpha_j\,,
$$
 % -------------- %
where $\alpha_j$ was introduced in equation~(\ref{alpha}). Since $\left|\{x\}_{(2\pi)}\right|\leq|x-2\pi p|$ holds obviously for all $x\in\R$ and $p\in\mathbb{Z}$, we obtain in particular
 % -------------- %
$$
\left|\left\{k'_\delta\ell_j-k\ell_j\right\}_{(2\pi)}\right|
\leq|k'_\delta\ell_j-k\ell_j-2\pi mp_j|
=|2\pi mq\alpha_j-2\pi mp_j|
=2\pi mq\left|\alpha_j-\frac{p_j}{q}\right|
$$
 % -------------- %
for $p_1,\ldots,p_d$ denoting the integers from (\ref{Dirichlet}).
Consequently, the inequality~(\ref{Dirichlet}) and the assumption~(\ref{N}) imply
 % -------------- %
\begin{equation}\label{delta}
\left|\left\{k'_\delta\ell_j-k\ell_j\right\}_{(2\pi)}\right|\leq2\pi
mq\frac{1}{qN^{1/d}}=\frac{2\pi m}{N^{1/d}}<\delta\,,
\end{equation}
 % -------------- %
which proves condition~(\ref{(2)}). The claim~(\ref{implication}) thus holds true.
\end{proof}

 % -------------- %
\begin{proposition}\label{Prop.FT}
Let $H_0$ be a Hamiltonian of a periodic quantum graph with scale-invariant couplings at all the vertices. Then the following holds:
\begin{itemize}
\item[(i)] If $\sigma(H_0)$ contains a gap, then it contains infinitely many gaps.
\item[(ii)] Furthermore, if $\sigma(H_0)$ has a gap of size $s$ in terms of momentum, then for every $\epsilon>0$ there is an infinite sequence of gaps of sizes at least $s-\epsilon$ (in terms of momentum).
\item[(iii)] In particular, if all the graph edge lengths are rationally dependent, then the momentum spectrum is periodic.
\end{itemize}
\end{proposition}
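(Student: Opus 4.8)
The plan is to exploit the factorization~(\ref{Phi}) of the secular function through the torus variable $\vec{\phi}(k)$, combined with the recurrence provided by Lemma~\ref{Lem. phi}. Write $\mathbb{T}:=(\R/2\pi\Z)^{d+1}$ and let $Z\subset\mathbb{T}$ be the set of those $\vec{\phi}$ for which $\Phi_0(\vec{\phi};\vec{\vartheta})=0$ holds for at least one value of $\vec{\vartheta}$. The function $\Phi_0$ is continuous, being the determinant of a matrix whose entries depend continuously on $\vec{\phi}$ and $\vec{\vartheta}$, and it is $2\pi$-periodic in each component of $\vec{\vartheta}$, so we may let $\vec{\vartheta}$ run over a compact torus; hence $Z$, a projection to $\mathbb{T}$ of a compact set, is closed. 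By~(\ref{SC Phi}) the momentum spectrum is then $\Sigma:=\{k\ge 0:\ \vec{\phi}(k)\in Z\}$, a closed subset of $[0,\infty)$, and $\Sigma$ is unbounded because $H_0$, acting as the negative second derivative on the edges, is bounded from below but not from above (as approximate eigenfunctions localized on a single edge show), so that $\sup\sigma(H_0)=+\infty$.

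The key observation is a rigid-translation property of the map $\vec{\phi}$: since, modulo $2\pi$ in each coordinate, $\vec{\phi}(k)$ is simply $k(\ell_0,\ldots,\ell_d)$, the $j$-th coordinate distance between $\vec{\phi}(k_1+t)$ and $\vec{\phi}(k_2+t)$ in $\mathbb{T}$ equals $\min_{m\in\Z}|(k_1-k_2)\ell_j-2\pi m|$, which does not involve $t$. Consequently, if $\vec{\phi}(k_2)$ is within $\delta$ of $\vec{\phi}(k_1)$ in $\mathbb{T}$, then the entire arc $\{\vec{\phi}(k_2+t):\ t\in J\}$ is a uniformly $\delta$-close copy of the arc $\{\vec{\phi}(k_1+t):\ t\in J\}$ for any interval $J$.

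Now suppose $\Sigma$ has a gap $(a,b)$ of length $s:=b-a>0$ and fix $\epsilon\in(0,s)$. The set $\vec{\phi}([a+\epsilon/2,\ b-\epsilon/2])$ is compact and, since $(a,b)\cap\Sigma=\emptyset$, disjoint from the closed set $Z$; let $\rho>0$ be its $\mathbb{T}$-distance from $Z$. Given any $C>0$, apply Lemma~\ref{Lem. phi} with $k=a+\epsilon/2$ and $\delta=\rho/2$ to obtain $k'>C$ with $\|\vec{\phi}(k')-\vec{\phi}(a+\epsilon/2)\|_\infty<\rho/2$, hence $\vec{\phi}(k')$ within $\rho/2$ of $\vec{\phi}(a+\epsilon/2)$ in $\mathbb{T}$. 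By the rigid-translation property, $\vec{\phi}(k'+t)$ lies within $\rho/2$ of $\vec{\phi}(a+\epsilon/2+t)$, hence at $\mathbb{T}$-distance at least $\rho/2>0$ from $Z$, for every $t\in[0,s-\epsilon]$; thus $[k',\,k'+s-\epsilon]$ is disjoint from $\Sigma$, so it lies in a maximal interval of $[\inf\Sigma,\infty)\setminus\Sigma$, which is bounded (as $\Sigma$ is unbounded) and therefore is a genuine gap, of length at least $s-\epsilon$ and with right endpoint exceeding $C$. Were there only finitely many gaps of length $\ge s-\epsilon$, their right endpoints would be bounded by some $M$, and taking $C=M$ would produce another one --- a contradiction; this proves (ii), and $\epsilon=s/2$ gives (i). For (iii), if all $\ell_j$ are integer multiples of a common length $L$ then $p:=2\pi/L$ satisfies $p\ell_j\in 2\pi\Z$ for all $j$, so $\vec{\phi}(k+p)=\vec{\phi}(k)$, whence by~(\ref{Phi}) and~(\ref{condition}) the function $F_0(\cdot;\vec{\vartheta})$, and therefore $\Sigma$, is invariant under $k\mapsto k+p$, i.e.\ periodic with period $p$.

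The step requiring the most care is the passage from the recurrence of a single point $\vec{\phi}(k)$, which is all Lemma~\ref{Lem. phi} delivers, to the recurrence of an \emph{entire arc} of $\vec{\phi}$ of the prescribed length $s-\epsilon$ --- exactly what is needed to reproduce a spectral gap of essentially the same size arbitrarily far out --- and this is supplied by the rigid-translation identity above. A secondary, purely bookkeeping point is to ensure that the reproduced spectrum-free intervals lie in genuinely distinct \emph{bounded} gaps rather than in a single unbounded ray, which is why the (elementary) unboundedness of $\Sigma$ is invoked.
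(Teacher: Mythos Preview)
Your argument is correct and follows essentially the same route as the paper: both exploit the factorization $F_0(k;\vec{\vartheta})=\Phi_0(\vec{\phi}(k);\vec{\vartheta})$, the recurrence supplied by Lemma~\ref{Lem. phi}, and the translation-invariance identity $\{(k'+x)\ell_j-(k+x)\ell_j\}_{(2\pi)}=\{k'\ell_j-k\ell_j\}_{(2\pi)}$ (your ``rigid-translation property'') to transport a gap arbitrarily far out. The only cosmetic difference is that the paper separates the gap from the spectrum via the positive minimum $\gamma=\min|\Phi_0|$ on the compact set and then uses uniform continuity of $\Phi_0$, whereas you separate it via the positive torus distance $\rho$ from the closed zero set $Z$; these are equivalent packagings of the same compactness argument, and the paper's proof of~(iii) is identical to yours.
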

 % -------------- %
\begin{proof}
The claim (i) is obviously a straighforward consequence of (ii). To prove (ii), let us assume that $\sigma(H_0)$ has a gap of size $s$ in terms of $k$, i.e., there is a $k_0>0$ such that $k^2\notin\sigma(H_0)$ for all $k\in(k_0-\frac{s}{2},k_0+\frac{s}{2})$. According to~(\ref{SC Phi}), the gap condition reads
 % -------------- %
$$
k^2\notin\sigma(H_0) \quad\Leftrightarrow\quad
(\forall\vec{\vartheta}\in(-\pi,\pi]^\nu)(|\Phi_0(\vec{\phi}(k);\vec{\vartheta})|>0)\,.
$$
 % -------------- %
Since $|\Phi_0(\cdot;\cdot)|$ is a continuous function, it attains minimum on any compact interval. In particular, for any given $\epsilon>0$ there exists the minimum
 % -------------- %
$$
\min_{k\in\left[k_0-\frac{s-\epsilon}{2},k_0+\frac{s-\epsilon}{2}\right],\;\vec{\vartheta}\in[-\pi,\pi]^\nu}|\Phi_0(\vec{\phi}(k);\vec{\vartheta})|=\gamma\,.
$$
 % -------------- %
The value $\gamma$ is positive, because all values in the interval $\bigl[k_0-\frac{s-\epsilon}{2},k_0+\frac{s-\epsilon}{2}\bigr]$ correspond to a gap.
Moreover, the Brillouin zone has the structure of a torus, hence the function $|\Phi_0(\vec{\phi}(k);\cdot)|$ is periodic with the period $2\pi$ in every component of the vector $\vec{\vartheta}$, which in particular means that the same value of minimum is attained also at the left-open interval $(-\pi,\pi]^\nu$. Hence we obtain
 % -------------- %
\begin{equation}\label{gap k FT}
\hspace{-4em} \left(\forall x\in\left[-\frac{s-\epsilon}{2},\frac{s-\epsilon}{2}\right]\right)\left(\forall\vec{\vartheta}\in(-\pi,\pi]^\nu\right)\left(|\Phi_0(\vec{\phi}(k_0+x);\vec{\vartheta})|\geq\gamma>0\right)\,.
\end{equation}
 % -------------- %
Now we use Lemma~\ref{Lem. phi}, which guarantees that for every $C>0$ one can find a $k'>C$ such that for all $j\in\{0,1,\ldots,d\}$, the quantity $\big|\left\{k'\ell_j-k\ell_j\right\}_{(2\pi)}\big|$ is as small as required.
With regard to a trivial identity $\left\{(k'+x)\ell_j-(k+x)\ell_j\right\}_{(2\pi)}=\left\{k'\ell_j-k\ell_j\right\}_{(2\pi)}$ for $x\in\R$, also the quantity $\bigl|\left\{(k'+x)\ell_j-(k+x)\ell_j\right\}_{(2\pi)}\bigr|$ can be as small as required for all $x$.
This fact together with the continuity of $\Phi_0$ implies that for every given $\gamma>0$, one can find a $k'>C$ with the property
\begin{equation}\label{gap k' k delta FT}
\max_{x\in\bigl[-\frac{s-\epsilon}{2},\frac{s-\epsilon}{2}\bigr],\;\vec{\vartheta}\in(-\pi,\pi]^\nu}\left|\Phi_0(\vec{\phi}(k'+x);\vec{\vartheta})-\Phi_0(\vec{\phi}(k_0+x);\vec{\vartheta})\right|<\frac{\gamma}{2}\,.
\end{equation}
Now we apply the triangle inequality together with (\ref{gap k FT}) and (\ref{gap k' k delta FT}) to obtain the estimate
\begin{eqnarray*}
\hspace{-4em} \left|\Phi_0(\vec{\phi}(k'+x);\vec{\vartheta})\right|&\geq\left|\Phi_0(\vec{\phi}(k_0+x);\vec{\vartheta})\right|-\left|\Phi_0(\vec{\phi}(k'+x);\vec{\vartheta})-\Phi_0(\vec{\phi}(k_0+x);\vec{\vartheta})\right| \\
&>\gamma-\frac{\gamma}{2}=\frac{\gamma}{2}>0
\end{eqnarray*}
 % -------------- %
for all $x\in[-\frac{s-\epsilon}{2},\frac{s-\epsilon}{2}]$ and $\vec{\vartheta}\in(-\pi,\pi]^\nu$.
To sum up, for any $C>0$ one can find a
$k'>C$ such that $k^2\notin\sigma(H)$ for all $k\in[k'-\frac{s-\epsilon}{2},k'+\frac{s-\epsilon}{2}]$.
This proves the existence of infinitely many gaps of sizes at least $s-\epsilon$ in terms of $k$, given the fact that the
operator in question is unbounded and the resolvent set of $H$ is open.

It remains to prove (iii). If all the lengths are rationally dependent, there exists an elementary length $L>0$ and integers $m_j\in\N$ such that $\ell_j=m_jL$ holds for $j=0,1,\ldots,d$. Hence $\left(k+\frac{2\pi}{L}\right)\ell_j=k\ell_j+2\pi m_j$ which implies
 % -------------- %
$$
\left\{\left(k+\frac{2\pi}{L}\right)\ell_j\right\}_{(2\pi)}=\{k\ell_j\}_{(2\pi)}
$$
 % -------------- %
for all $j=0,1,\ldots,d$. This means that
$\vec{\phi}(k)$ is periodic with period $2\pi/L$, and consequently, the spectrum has a periodic structure in terms of the momentum.
\end{proof}
 % -------------- %
\begin{corollary}
Theorem~\ref{thm:absence} is valid.
\end{corollary}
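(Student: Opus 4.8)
The plan is to deduce the corollary directly from Proposition~\ref{Prop.FT}(i) by a trivial dichotomy. Recall that a graph belongs to the Bethe--Sommerfeld class precisely when its spectrum has a \emph{finite nonzero} number of open gaps above the spectral threshold. So I would argue as follows: let $H_0$ be the Hamiltonian of an infinite periodic quantum graph all of whose vertex couplings are scale-invariant. There are two mutually exclusive possibilities --- either $\sigma(H_0)$ has no gap at all, or $\sigma(H_0)$ has at least one gap. In the first case the number of open gaps is zero, so $H_0$ is trivially not a Bethe--Sommerfeld graph. In the second case Proposition~\ref{Prop.FT}(i) applies and tells us that $\sigma(H_0)$ in fact contains infinitely many gaps, so again $H_0$ is not a Bethe--Sommerfeld graph. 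Since the two cases exhaust all possibilities, Theorem~\ref{thm:absence} follows.

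The only point deserving a word of care is the matching of terminology. The Bethe--Sommerfeld property concerns open gaps \emph{above the bottom} of the spectrum, whereas Proposition~\ref{Prop.FT} is phrased via the spectral condition~(\ref{SC Phi}) and, in the proof of parts (i)--(ii), manufactures gaps at momenta $k'$ that can be taken larger than any prescribed constant $C$. I would simply remark that such $k'$ can in particular be chosen above the spectral threshold, so the gaps produced are genuine open gaps in the sense relevant to the Bethe--Sommerfeld class; no additional estimate is needed.

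I do not expect any real obstacle here. The entire substance of the statement is already carried by Lemma~\ref{Lem. phi} and Proposition~\ref{Prop.FT}, and the corollary is nothing but the observation that ``zero or infinitely many'' leaves no room for ``finitely many and nonzero''.
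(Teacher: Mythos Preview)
Your argument is correct and matches the paper's approach exactly: the paper states the corollary immediately after Proposition~\ref{Prop.FT} with no separate proof, treating it as the obvious ``zero or infinitely many'' dichotomy you spell out. Your additional remark about choosing $k'$ above the spectral threshold is a reasonable clarification, but the paper does not bother to make it explicit.
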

 % -------------- %

\subsection{The case of general vertex couplings} \label{ss:general}

Our next aim is to show that the Bethe--Sommerfeld property can be excluded also for graphs with vertex couplings from a wider class. We begin with the following definition.
 % -------------- %
\begin{definition}
Let a vertex coupling be given by condition~(\ref{ST}). The \emph{associated scale-invariant vertex coupling} is given by condition
 % -------------- %
\begin{equation}\label{scale invariant}
\left(\begin{array}{cc}
I^{(r)} & T \\
0 & 0
\end{array}\right)
\Psi'= \left(\begin{array}{cc}
0 & 0 \\
-T^* & I^{(n-r)}
\end{array}\right)
\Psi\,.
\end{equation}
 % -------------- %
\end{definition}
 % -------------- %
\noindent In other words, the coupling associated to a given~(\ref{ST}) is obtained by removing the Robin part represented by the square matrix $S$.

In the following proposition we show that the scattering matrix referring to~(\ref{ST}) decomposes into a constant part and a part that vanishes as
$k\to\infty$. This observation is useful for dealing with high momenta values, $k\gg1$, note that this is the regime crucial from the viewpoint of the Bethe--Sommerfeld property.
 % -------------- %
\begin{proposition}\label{Prop. S-matrix}
Consider a quantum graph vertex with a general coupling described by the condition~(\ref{ST}). Its scattering matrix satisfies
 % -------------- %
\begin{equation}\label{S-matrix 1/k}
\cS(k)=\cS_0+\frac{1}{k}\cS_1(k)\,,
\end{equation}
 % -------------- %
where
 % -------------- %
\begin{equation}\label{S0}
\cS_0=-I^{(n)}+ 2\left(\begin{array}{c} I^{(r)} \\
T^* \end{array}\right) \left(I^{(r)}+TT^*\right)^{-1}
\left(\begin{array}{cc} I^{(r)} & T\end{array}\right)
\end{equation}
 % -------------- %
is the constant scattering matrix corresponding to the associated scale-invariant vertex coupling~(\ref{scale invariant}), and
 % -------------- %
\begin{equation}\label{S1k}
\hspace{-4em} \cS_1(k)=-2\im\left(\begin{array}{c} I^{(r)} \\ T^* \end{array}\right)
\left(I^{(r)}+TT^*\right)^{-1}S\left(I^{(r)}+TT^*-\frac{1}{\im
k}S\right)^{-1} \left(\begin{array}{cc} I^{(r)} &
T\end{array}\right)\,.
\end{equation}
 % -------------- %
Moreover, the matrix function $k\mapsto\cS_1(k)$ is bounded on the interval $[1,\infty)$.
\end{proposition}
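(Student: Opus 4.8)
The plan is to start from the explicit formula~(\ref{S(S,T,k)}) for $\cS(k)$, in which the whole $k$-dependence is carried by the single factor $\left(I^{(r)}+TT^*-\frac{1}{\im k}S\right)^{-1}$, and to expand this factor by the resolvent identity. Write $A:=I^{(r)}+TT^*$. Since $TT^*$ is positive semidefinite, $A$ is Hermitian and positive definite, hence invertible, and $A^{-1}$ is independent of $k$. Setting $B:=A-\frac{1}{\im k}S$, so that $A-B=\frac{1}{\im k}S$, the resolvent identity $B^{-1}=A^{-1}+A^{-1}(A-B)B^{-1}$ becomes
\[
\left(A-\frac{1}{\im k}S\right)^{-1}=A^{-1}+\frac{1}{\im k}\,A^{-1}S\left(A-\frac{1}{\im k}S\right)^{-1}\,.
\]
Inserting this into~(\ref{S(S,T,k)}) splits $\cS(k)$ into a $k$-independent part, namely $-I^{(n)}+2\left(\begin{array}{c}I^{(r)}\\ T^*\end{array}\right)A^{-1}\left(\begin{array}{cc}I^{(r)} & T\end{array}\right)$, which is exactly the right-hand side of~(\ref{S0}) and, upon setting $S=0$ in~(\ref{S(S,T,k)}), coincides with the scattering matrix of the associated scale-invariant coupling~(\ref{scale invariant}); and a remainder term. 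Using $\frac{1}{\im}=-\im$, the remainder equals $\frac{1}{k}$ times the matrix written in~(\ref{S1k}), which yields the decomposition~(\ref{S-matrix 1/k}).

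For the boundedness statement, observe that in $\cS_1(k)$ every factor apart from $B(k)^{-1}=\left(I^{(r)}+TT^*-\frac{1}{\im k}S\right)^{-1}$ is a fixed matrix, so it suffices to bound $\|B(k)^{-1}\|$ uniformly for $k\geq1$. The point is that, because $S$ is Hermitian and $k$ is real, the perturbation $-\frac{1}{\im k}S=\frac{\im}{k}S$ is skew-Hermitian, hence for every $v\in\C^r$
\[
\mathrm{Re}\,\langle B(k)v,v\rangle=\langle Av,v\rangle\geq\lambda_{\min}(A)\,\|v\|^2\,,
\]
where $\lambda_{\min}(A)>0$ is the smallest eigenvalue of the positive definite matrix $A=I^{(r)}+TT^*$. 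By the Cauchy--Schwarz inequality this forces $\|B(k)v\|\geq\lambda_{\min}(A)\|v\|$, i.e.\ $\|B(k)^{-1}\|\leq\lambda_{\min}(A)^{-1}$ for every real $k\neq0$, in particular for $k\in[1,\infty)$; together with the ($k$-independent) norms of the remaining factors this bounds $\sup_{k\geq1}\|\cS_1(k)\|$.

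The algebra here is essentially bookkeeping, so I expect the only step demanding genuine care to be the uniform invertibility estimate for $I^{(r)}+TT^*-\frac{1}{\im k}S$. It rests on two structural facts: the positive definiteness of $I^{(r)}+TT^*$, which makes the real part of the corresponding quadratic form bounded below by a $k$-independent constant, and the Hermiticity of $S$, which guarantees that the $k$-dependent perturbation contributes nothing to that real part. One should also track the scalar conventions carefully --- the factor $2$ and the appearance of $\im$ versus $-\im$ --- when matching the remainder to the right-hand side of~(\ref{S1k}).
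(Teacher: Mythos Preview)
Your argument is correct. The decomposition step is essentially the paper's approach: both split the single $k$-dependent factor via a resolvent-type identity, though you use the form $B^{-1}=A^{-1}+A^{-1}(A-B)B^{-1}$, which lands directly on the expression~(\ref{S1k}), whereas the paper writes an equivalent identity with the last factor in the form $\bigl(I^{(r)}-\frac{1}{\im k}A^{-1}S\bigr)^{-1}$.

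For the boundedness of $\cS_1(k)$ on $[1,\infty)$ you take a different route. The paper simply notes that $k\mapsto\cS_1(k)$ is continuous and has a finite limit as $k\to\infty$, which immediately forces boundedness on $[1,\infty)$. Your argument instead exploits the structure: since $S$ is Hermitian, the perturbation $\frac{\im}{k}S$ is skew-Hermitian, so $\mathrm{Re}\,\langle B(k)v,v\rangle=\langle Av,v\rangle\geq\lambda_{\min}(A)\|v\|^2$, giving the explicit uniform bound $\|B(k)^{-1}\|\leq\lambda_{\min}(A)^{-1}$ for all real $k\neq0$. This is a bit more work but yields a quantitative, $k$-independent estimate rather than a soft compactness-style conclusion; the paper's version is shorter but non-explicit.
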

 % -------------- %
\begin{proof}
The scattering matrix $\cS(k)$ obeys equation~(\ref{S(S,T,k)}). Using the identity
$$
\left(I^{(r)}+TT^*-\frac{1}{\im k}S\right)^{-1}
=\left(I^{(r)}+TT^*\right)^{-1}+\frac{1}{\im k}(I^{(r)}+TT^*)^{-1}S\left(I^{(r)}-\frac{1}{\im k}(I^{(r)}+TT^*)^{-1}S\right)^{-1}\,,
$$
we obtain~(\ref{S-matrix 1/k}) for $\cS_0$ and $\cS_1(k)$ given by (\ref{S0}) and (\ref{S1k}), respectively.
Finally, the boundedness of $\cS_1(k)$ on $[1,\infty)$ is a straightforward consequence of the continuity of $k\mapsto\cS_1(k)$ and the existence of the limit
 % -------------- %
$$
\lim_{k\to\infty}\cS_1(k)=-2\im\left(\begin{array}{c} I^{(r)} \\ T^*
\end{array}\right)
\left(I^{(r)}+TT^*\right)^{-1}S\left(I^{(r)}+TT^*\right)^{-1}
\left(\begin{array}{cc} I^{(r)} & T\end{array}\right)\,.
$$
 % -------------- %
\end{proof}

Each entry of the matrix $\mathbf{S}(k)$, appearing in~(\ref{F}), is equal by definition to a certain entry of $\cS(k)$ for some vertex of $\Gamma$. Therefore, Proposition~\ref{Prop. S-matrix} allows us to decompose the matrix $\mathbf{S}(k)$ in a way similar to~(\ref{S-matrix 1/k}),
 % -------------- %
\begin{equation}\label{SS}
\mathbf{S}(k)=\mathbf{S}_0+\frac{1}{k}\mathbf{S}_1(k)\,,
\end{equation}
 % -------------- %
where $\mathbf{S}_0$ is a constant unitary matrix, corresponding to the same graph with the associated scale-invariant couplings at its vertices,
and $\mathbf{S}_1(k)$ is a matrix that is bounded on $[1,\infty)$ as a function of $k$.

\begin{proposition}
The quantity $F(k;\vec{\vartheta})$ of (\ref{F}) can be expressed as
 % -------------- %
\begin{equation}\label{F 1/k}
F(k;\vec{\vartheta})=\Phi_0(\vec{\phi}(k);\vec{\vartheta})+\frac{1}{k}F_1(k;\vec{\vartheta})\,,
\end{equation}
 % -------------- %
where
 % -------------- %
\begin{equation*}
\Phi_0(\vec{\phi}(k);\vec{\vartheta}):=\det\left(\mathbf{I}-\e^{\im(\mathbf{\Theta}+k\mathbf{L})}\mathbf{S}_0\right)
\end{equation*}
 % -------------- %
and the function $k\mapsto F_1(k;\vec{\vartheta})$ is continuous and bounded on $[1,\infty)$.
\end{proposition}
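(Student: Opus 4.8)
The plan is to substitute the decomposition~(\ref{SS}) of the bond scattering matrix into the determinant~(\ref{F}) and to expand by multilinearity in the columns. Write $\mathbf{M}(k):=\e^{\im(\mathbf{\Theta}+k\mathbf{L})}$, a diagonal unitary matrix, so that every entry of $\mathbf{M}(k)$ has modulus one and depends continuously on $k$. Using $\mathbf{S}(k)=\mathbf{S}_0+\frac{1}{k}\mathbf{S}_1(k)$ one obtains
$$
F(k;\vec{\vartheta})=\det\left(\mathbf{I}-\mathbf{M}(k)\mathbf{S}_0-\frac{1}{k}\mathbf{M}(k)\mathbf{S}_1(k)\right)=\det\left(\mathbf{A}(k)-\frac{1}{k}\mathbf{B}(k)\right)\,,
$$
where $\mathbf{A}(k):=\mathbf{I}-\mathbf{M}(k)\mathbf{S}_0$ and $\mathbf{B}(k):=\mathbf{M}(k)\mathbf{S}_1(k)$ are $2E\times2E$ matrices whose entries are continuous in $k$.

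Since the determinant is a multilinear function of the columns of its argument, expanding $\det(\mathbf{A}(k)-\frac{1}{k}\mathbf{B}(k))$ over the $2E$ columns gives
$$
F(k;\vec{\vartheta})=\det\mathbf{A}(k)+\sum_{j=1}^{2E}\left(-\frac{1}{k}\right)^{j}D_j(k;\vec{\vartheta})\,,
$$
where $D_j(k;\vec{\vartheta})$ is the sum, taken over all choices of $j$ of the $2E$ columns, of the determinant of the matrix obtained from $\mathbf{A}(k)$ by replacing the chosen columns with the corresponding columns of $\mathbf{B}(k)$. The leading term is $\det\mathbf{A}(k)=\det(\mathbf{I}-\e^{\im(\mathbf{\Theta}+k\mathbf{L})}\mathbf{S}_0)$; as $\mathbf{S}_0$ is constant it depends on $k$ only through $\e^{\im k\mathbf{L}}$, hence only through the vector $\vec{\phi}(k)$ of~(\ref{phi}), and it coincides with the function $\Phi_0(\vec{\phi}(k);\vec{\vartheta})$ of the proposition; this is consistent with the notation of Section~\ref{s:scaleinv}, since for a graph all of whose vertex couplings are scale-invariant one has $\mathbf{S}(k)=\mathbf{S}_0$ and $\Phi_0$ reduces to the $F_0$ appearing in~(\ref{Phi}). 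Putting $F_1(k;\vec{\vartheta}):=\sum_{j=1}^{2E}(-1)^{j}k^{1-j}D_j(k;\vec{\vartheta})$ then yields the asserted identity~(\ref{F 1/k}).

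It remains to check the two claimed properties of $F_1$. Continuity on $[1,\infty)$ is immediate, because the entries of $\mathbf{M}(k)$, and --- by Proposition~\ref{Prop. S-matrix} together with the discussion following~(\ref{SS}) --- the entries of $\mathbf{S}_1(k)$, are continuous, so the entries of $\mathbf{A}(k)$ and $\mathbf{B}(k)$ are continuous and $F_1$ is a finite sum of products of such entries multiplied by the continuous factors $k^{1-j}$. For the boundedness note that on $[1,\infty)$ the entries of $\mathbf{M}(k)$ have modulus one, the entries of the constant matrix $\mathbf{S}_0$ are bounded, and the entries of $\mathbf{S}_1(k)$ are bounded by Proposition~\ref{Prop. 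S-matrix}; hence there is a constant $c$ such that every column of $\mathbf{A}(k)$ and of $\mathbf{B}(k)$ has Euclidean norm at most $c$ for all $k\geq1$ and all $\vec{\vartheta}$. By Hadamard's inequality each of the finitely many determinants entering $D_j(k;\vec{\vartheta})$ is then bounded in absolute value by $c^{2E}$, uniformly in $k\geq1$ and $\vec{\vartheta}$; since $k^{1-j}\leq1$ for $j\geq1$ and $k\geq1$, the function $F_1$ is bounded on $[1,\infty)$, uniformly in $\vec{\vartheta}$ as well.

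The only points requiring a little care are the bookkeeping in the multilinear expansion and the identification of the leading term with the previously defined $\Phi_0$; once the decomposition~(\ref{SS}) and the uniform bound on $\mathbf{S}_1$ from Proposition~\ref{Prop. S-matrix} are available, the remainder is routine, so the write-up should be short.
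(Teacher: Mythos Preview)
Your proof is correct and follows essentially the same route as the paper: substitute the decomposition $\mathbf{S}(k)=\mathbf{S}_0+\frac{1}{k}\mathbf{S}_1(k)$ into the determinant, expand, identify the leading term as $\Phi_0$, and observe that the remainder is a polynomial in $1/k$ whose coefficients are continuous, bounded combinations of the entries of $\mathbf{I}-\mathbf{M}(k)\mathbf{S}_0$ and $\mathbf{M}(k)\mathbf{S}_1(k)$. The paper's argument is terser (it does not spell out the multilinear expansion by columns or invoke Hadamard's inequality, but simply notes that the remainder is a sum of products of bounded continuous entries times nonnegative powers of $1/k$), yet the substance is identical.
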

 % -------------- %
\begin{proof}
According to (\ref{SS}) we have
 % -------------- %
$$
F(k;\vec{\vartheta})=\det\left(\mathbf{I}-\e^{\im(\mathbf{\Theta}+k\mathbf{L})}\mathbf{S}(k)\right)=\det\left(M_0+\frac{1}{k}M_1\right)
$$
 % -------------- %
with $M_0:=\mathbf{I}-\e^{\im(\mathbf{\Theta}+k\mathbf{L})}\mathbf{S}_0$ and $M_1:=-\e^{\im(\mathbf{\Theta}+k\mathbf{L})}\mathbf{S}_1(k)$.
The expansion of $\det\left(M_0+\frac{1}{k}M_1\right)$ takes the form $\det(M_0)+\frac{1}{k}F_1(k;\vec{\vartheta})$, where $F_1(k;\vec{\vartheta})$ is a sum of products of entries of the matrices $M_0$ and $M_1$ multiplied by non-negative powers of $1/k$. Since the entries of $M_0$ and $M_1$ are continuous and bounded on $[1,\infty)$ (cf.~(\ref{SS})), the function $F_1(k;\vec{\vartheta})$ has the same property.
\end{proof}

 % -------------- %
\begin{theorem}\label{thm:gaps general}
Consider a periodic graph with general couplings at the vertices and denote its spectrum as $\sigma(H)$. Let further $\sigma(H_0)$ be the spectrum of the same graph, in which all vertex couplings are replaced by the associated scale-invariant couplings. Then the following claims hold true:
 % -------------- %
\begin{itemize}
\item[(i)] If $\sigma(H_0)$ has an open gap, then $\sigma(H)$ has infinitely many gaps.
\item[(ii)] In particular, if $\sigma(H_0)$ has a gap of size $s$ in terms of momentum, then for every $\epsilon>0$ there are infinitely many gaps of $\sigma(H)$ of sizes at least $s-\epsilon$ (in terms of momentum).
\end{itemize}
 % -------------- %
\end{theorem}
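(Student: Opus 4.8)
The plan is to upgrade Proposition~\ref{Prop.FT} perturbatively with the help of the decomposition~(\ref{F 1/k}). The latter writes the secular function $F$ of the graph with the general coupling as the secular function $\Phi_0$ of the associated scale-invariant graph $H_0$ plus a remainder of order $1/k$; the idea is that at sufficiently high momenta this remainder is too small to close a gap that is already open in $\sigma(H_0)$. As with Proposition~\ref{Prop.FT}, claim~(i) is an immediate consequence of~(ii), so it suffices to prove~(ii).

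First I would record a quantitative version of the gap of $\sigma(H_0)$. Suppose $\sigma(H_0)$ has a gap of momentum-size $s$, i.e. there is a $k_0>0$ with $(k_0+x)^2\notin\sigma(H_0)$ for all $x\in\left(-\frac{s}{2},\frac{s}{2}\right)$, and fix $\epsilon>0$. Running the argument from the proof of Proposition~\ref{Prop.FT}(ii) --- $|\Phi_0|$ attains a positive minimum $\gamma$ on the compact set $\left[k_0-\frac{s-\epsilon}{2},k_0+\frac{s-\epsilon}{2}\right]\times[-\pi,\pi]^\nu$; by Lemma~\ref{Lem. phi} and the identity $\{(k'+x)\ell_j-(k_0+x)\ell_j\}_{(2\pi)}=\{k'\ell_j-k_0\ell_j\}_{(2\pi)}$ the value $\vec\phi(k'+x)$ can be brought, uniformly in $x$, arbitrarily close to $\vec\phi(k_0+x)$ for suitable arbitrarily large $k'$; and $\Phi_0$ is (uniformly) continuous and $2\pi$-periodic in each component of $\vec\vartheta$ --- one obtains a constant $\gamma>0$ such that for every $C>0$ there exists a $k'>C$ with
$$
\left|\Phi_0\left(\vec\phi(k'+x);\vec\vartheta\right)\right|>\frac{\gamma}{2}
$$
for all $x\in\left[-\frac{s-\epsilon}{2},\frac{s-\epsilon}{2}\right]$ and all $\vec\vartheta\in(-\pi,\pi]^\nu$. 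The essential point, going beyond the bare statement of Proposition~\ref{Prop.FT}, is that the same lower bound $\gamma/2$ is valid along the whole sequence $k'\to\infty$.

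Next I would feed in the remainder. By~(\ref{F 1/k}), $F(k;\vec\vartheta)=\Phi_0(\vec\phi(k);\vec\vartheta)+\frac{1}{k}F_1(k;\vec\vartheta)$ with $F_1$ continuous and bounded on $[1,\infty)$; since $F_1$ is a finite sum of products of entries of the matrices in~(\ref{SS}), each continuous in $(k,\vec\vartheta)$ and bounded for $k\ge1$, and since $\vec\vartheta$ runs over the compact Brillouin torus, the quantity $M:=\sup\left\{\left|F_1(k;\vec\vartheta)\right|:k\ge1,\ \vec\vartheta\in(-\pi,\pi]^\nu\right\}$ is finite. Choosing now in the previous step a constant $C>\frac{s-\epsilon}{2}+\frac{4M}{\gamma}+1$, one has $k'+x>\frac{4M}{\gamma}+1\ge1$ for every admissible $x$, so the triangle inequality yields
$$
\left|F(k'+x;\vec\vartheta)\right|\ge\left|\Phi_0\left(\vec\phi(k'+x);\vec\vartheta\right)\right|-\frac{\left|F_1(k'+x;\vec\vartheta)\right|}{k'+x}>\frac{\gamma}{2}-\frac{M}{k'+x}>\frac{\gamma}{4}>0
$$
for all $x\in\left[-\frac{s-\epsilon}{2},\frac{s-\epsilon}{2}\right]$ and all $\vec\vartheta\in(-\pi,\pi]^\nu$. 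By the spectral condition~(\ref{condition}) this means $(k'+x)^2\notin\sigma(H)$ for all such $x$, i.e. $\sigma(H)$ has a gap of momentum-size at least $s-\epsilon$ located around $k'$. Since $C>0$ was arbitrary and $H$ is unbounded with open resolvent set, this produces an infinite sequence of such gaps, which proves~(ii), and hence~(i).

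The argument is essentially a perturbative refinement of Proposition~\ref{Prop.FT}, so I do not expect a deep obstacle; the step requiring the most care --- and the one I would regard as the crux --- is extracting from the \emph{proof} of Proposition~\ref{Prop.FT}, rather than merely from its statement, a single positive lower bound $\gamma/2$ that holds uniformly along the sequence $k'\to\infty$, since without this uniformity the $1/k$ remainder could in principle shrink the gaps to nothing as $k'\to\infty$. A minor additional point is that the bound $M$ on $F_1$ must be uniform in $\vec\vartheta$ as well as in $k$, which is guaranteed by compactness of the Brillouin torus together with the continuity of $F_1$ in $\vec\vartheta$; and one should recall, as noted after~(\ref{condition}), that replacing $F$ by its real normalization does not affect its zero set, so bounding $|F|$ away from zero is all that the spectral condition requires.
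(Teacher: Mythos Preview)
Your proof is correct and follows essentially the same route as the paper: extract from the proof of Proposition~\ref{Prop.FT} a single positive lower bound on $|\Phi_0|$ valid uniformly along the sequence $k'\to\infty$, then use the decomposition~(\ref{F 1/k}) and the boundedness of $F_1$ to show that the $1/k$ remainder cannot close the gap at sufficiently large momenta. The only cosmetic differences are that the paper splits the triangle inequality into two explicit steps ($\gamma-\gamma/2-\gamma/4$) rather than your combined $\gamma/2-\gamma/4$, and that you make the uniform-in-$\vec\vartheta$ boundedness of $F_1$ explicit via compactness of the Brillouin torus, a point the paper leaves implicit.
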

 % -------------- %
\begin{proof}
Since (i) is a straightforward consequence of (ii), it suffices to prove (ii).
Let $H_0$ have a gap of size $s$ in terms of $k$, i.e., let there be a $k_0$ such that
$k^2\notin\sigma(H_0)$ for all $k\in\bigl(k_0-\frac{s}{2},k_0+\frac{s}{2}\bigr)$. In the same way as in the proof of
Proposition~\ref{Prop.FT}(ii), one can show that for any $\epsilon>0$, there is a $\gamma>0$ such that
 % -------------- %
\begin{equation}\label{gap k}
\hspace{-4em} \left(\forall x\in\left[-\frac{s-\epsilon}{2},\frac{s-\epsilon}{2}\right]\right)\left(\forall\vec{\vartheta}\in(-\pi,\pi]^\nu\right)\left(|\Phi_0(\vec{\phi}(k_0+x);\vec{\vartheta})|\geq\gamma>0\right)
\end{equation}
 % -------------- %
(cf.~(\ref{gap k FT})), and then demonstrate that for any $C>0$ there is a $k'>C$ such that
 % -------------- %
\begin{equation}\label{gap k' k delta}
\left|\Phi_0(\vec{\phi}(k'+x);\vec{\vartheta})-\Phi_0(\vec{\phi}(k_0+x);\vec{\vartheta})\right|<\frac{\gamma}{2}
\end{equation}
 % -------------- %
for all $x\in\bigl[-\frac{s-\epsilon}{2},\frac{s-\epsilon}{2}\bigr]$ and $\vec{\vartheta}\in(-\pi,\pi]^\nu$ (cf.~(\ref{gap k' k delta FT})). Let us limit ourselves to large values $C$, specifically, to the values $C$ with the property
 % -------------- %
\begin{equation}\label{k' gamma/4}
k'>C-\frac{s}{2} \quad\Rightarrow\quad
\frac{|F_1(k';\vec{\vartheta})|}{k'}<\frac{\gamma}{4}\,,
\end{equation}
 % -------------- %
where $F_1(k;\vec{\vartheta})$ is the term appearing in equation~(\ref{F 1/k}). Now we apply twice the triangle inequality to the decomposition (\ref{F 1/k}) and after that we use inequalities~(\ref{gap k}), (\ref{gap k' k delta}), and (\ref{k' gamma/4}). In this way we obtain
 % -------------- %
\begin{eqnarray*}
\hspace{-6em} |F(k'+x;\vec{\vartheta})|&\geq\left|\Phi_0(\vec{\phi}(k'+x);\vec{\vartheta})\right|-\frac{|F_1(k'+x;\vec{\vartheta})|}{k'+x} \\
&\geq\left|\Phi_0(\vec{\phi}(k_0+x);\vec{\vartheta})\right|-\left|\Phi_0(\vec{\phi}(k'+x);\vec{\vartheta})-\Phi_0(\vec{\phi}(k_0+x);\vec{\vartheta})\right|-\frac{|F_1(k'+x;\vec{\vartheta})|}{k'+x} \\
&>\gamma-\frac{\gamma}{2}-\frac{\gamma}{4}=\frac{\gamma}{4}>0
\end{eqnarray*}
 % -------------- %
for all $x\in\bigl[-\frac{s-\epsilon}{2},\frac{s-\epsilon}{2}\bigr]$ and $\vec{\vartheta}\in(-\pi,\pi]^\nu$.
That is, to any sufficiently large $C>0$ one can
find a $k'>C$ such that $k^2\notin\sigma(H)$ for all $k\in[k'-\frac{s-\epsilon}{2},k'+\frac{s-\epsilon}{2}]$, which proves the
result.
\end{proof}

\begin{corollary}
Theorem~\ref{thm:associated} is valid.
\end{corollary}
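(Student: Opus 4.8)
The plan is to obtain Theorem~\ref{thm:associated} as an immediate consequence of Theorem~\ref{thm:gaps general}; the analytic work has already been done there, so the only task is to match the two setups and to translate ``infinitely many gaps'' into ``not Bethe--Sommerfeld''.

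First I would fix notation. Let $H_0$ be the Hamiltonian of the given periodic graph all of whose vertex couplings are scale-invariant, so that each of them has the form~(\ref{scale invariant}) with some rank $r$ and some matrix $T$ (possibly different at different vertices), and assume that $\sigma(H_0)$ has at least one open gap. ``Adding a Robin component'' then means replacing, at the vertices, the condition~(\ref{scale invariant}) by the condition~(\ref{ST}) with the \emph{same} $r$ and $T$ but with a (nonzero) Hermitian matrix $S$ in the remaining block; denote the resulting operator by $H$. The elementary but essential observation is that the scale-invariant coupling \emph{associated} to each such modified condition~(\ref{ST}), in the sense of the Definition in Section~\ref{ss:general}, is obtained precisely by deleting $S$ — hence it is nothing but the original condition~(\ref{scale invariant}). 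In other words, with this bookkeeping $H$ and $H_0$ are exactly ``a graph with general couplings'' and ``the same graph with the associated scale-invariant couplings'' in the sense of Theorem~\ref{thm:gaps general}.

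With this identification Theorem~\ref{thm:gaps general}(i) applies verbatim: since $\sigma(H_0)$ has an open gap, $\sigma(H)$ has infinitely many open gaps. (If a quantitative version is wanted, part~(ii) gives, for every $\epsilon>0$, infinitely many gaps of momentum-width at least $s-\epsilon$, where $s$ is the width of a fixed gap of $\sigma(H_0)$.) To conclude, recall that a Bethe--Sommerfeld graph is by definition one whose spectrum has a \emph{finite nonzero} number of open gaps; an operator with infinitely many open gaps cannot be such. Since the Hermitian matrices $S$ were arbitrary, no way of adding a Robin component to the couplings of $H_0$ produces a Bethe--Sommerfeld graph, which is exactly the assertion of Theorem~\ref{thm:associated}.

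I do not anticipate a real obstacle here, as the substantive content is already contained in Proposition~\ref{Prop. S-matrix} and Theorem~\ref{thm:gaps general}. The only point deserving a word of care is the first step: one must check that passing from~(\ref{scale invariant}) to~(\ref{ST}) and back is an involution that fixes the pair $(r,T)$ and merely toggles $S$, so that the two hypotheses of Theorem~\ref{thm:gaps general} — namely that $H_0$ is the associated scale-invariant operator of $H$, and that $\sigma(H_0)$ has a gap — are genuinely met. Once that is spelled out, the corollary is immediate.
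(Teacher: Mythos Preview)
Your proposal is correct and matches the paper's approach: the paper states the corollary immediately after Theorem~\ref{thm:gaps general} with no explicit proof, treating it as an immediate consequence, and you have correctly spelled out the (only) step needed, namely that adding a Robin block $S$ to a scale-invariant coupling~(\ref{scale invariant}) yields a coupling~(\ref{ST}) whose associated scale-invariant coupling is the original one, so that Theorem~\ref{thm:gaps general}(i) applies directly.
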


\medskip

Let us finally remark that the gaps of $\sigma(H)$ seem to asymptotically coincide with those of $\sigma(H_0)$, however, we are not going to pursue this question here.

%%%%%%%%%%%%%%%%%%%%%%%%%%%%%%%%%%%%%%%%%%%%%%%%%%%%%%%%%%%%%%%%%%%%%%%%%%%%%%%%%%%%%%%%%%%%%%%%%%%%%%%%%%%%%%%
%%%%--------------------------------------- Rectangular lattices ------------------------------------------%%%%
%%%%%%%%%%%%%%%%%%%%%%%%%%%%%%%%%%%%%%%%%%%%%%%%%%%%%%%%%%%%%%%%%%%%%%%%%%%%%%%%%%%%%%%%%%%%%%%%%%%%%%%%%%%%%%%
%  Section
%
\section{Number theoretic preliminaries}\label{s::prelim}
\setcounter{equation}{0}

Before turning to our second main topic---establishing the existence of Bethe--Sommerfeld graphs---we need to introduce some number-theoretic notions on which the subsequent spectral analysis will rely
substantially. A number $\theta\in\mathbb{R}$ is called \emph{badly approximable} if there exists a $c>0$ such that
 % -------------- %
$$
\left|\theta-\frac{p}{q}\right|>\frac{c}{q^2}
$$
 % -------------- %
for all $p,q\in\mathbb{Z}$ with $q\neq0$. An irrational number $\theta$ is badly approximable if and only if the elements of its continued-fraction representation $[c_0,c_1,c_2,c_3,\ldots]$ are bounded~\cite{Kh64}. With our goal in mind we note that the badly approximable numbers emerged in \cite{Ex96} as the only ratios $\frac{a}{b}$ for which the spectrum of a rectangular lattice with edges $a$ and $b$ and $\delta$ couplings in the vertices may have a finite number of gaps.
 % -------------- %

\medskip

The so-called \emph{Markov constant} $\mu(\theta)$ of $\theta\in\mathbb{R}$ is defined as
 % -------------- %
\begin{equation}\label{mu(theta)}
\mu(\theta)=\inf\left\{c>0\;\left|\;\left(\exists_\infty(p,q)\in\mathbb{Z}^2\right)\left(\left|\theta-\frac{p}{q}\right|<\frac{c}{q^2}\right)\right.\right\},
\end{equation}
 % -------------- %
with $\exists_\infty$ meaning ``there exist infinitely many''.
The Markov constant is sometimes denoted by $\nu(\theta)$, cf.~\cite{Ca57}. Notice that $\mu(\theta)>0$ if and only if $\theta$ is badly approximable. Since every $\theta\in\mathbb{Q}$ has trivially $\mu(\theta)=0$, some authors define $\mu(\theta)$ only for $\theta$ being irrational.

Recall that by a theorem of Hurwitz \cite{Hu1891} for every irrational number $\theta$ there are infinitely many $(p,q)\in\mathbb{Z}^2$ such that $\left|\theta-\frac{p}{q}\right|<\frac{1}{\sqrt{5}q^2}$, in other words, $\mu(\theta)\leq\frac{1}{\sqrt{5}}$ holds for any $\theta\in\mathbb{R}$.

We say that $\theta,\theta'\in\mathbb{R}$ are \emph{equivalent} if there are integers $r,s,t,u$ such that
 % -------------- %
\begin{equation}\label{Markov equivalence}
\theta=\frac{r\theta'+s}{t\theta'+u} \;\quad\text{and}\quad ru-ts=\pm1.
\end{equation}
 % -------------- %
According to \cite[Thm.~IV]{Ca57}, $\theta,\theta'\in(0,1)$ are equivalent if and only if their continued fractions take the form
 % -------------- %
\begin{equation}\label{equiv. cont. fr.}
\begin{array}{c}
\theta=[0;a_1,a_2,\ldots,a_l,c_1,c_2,\ldots] \\
\theta'=[0;b_1,b_2,\ldots,b_m,c_1,c_2,\ldots]
\end{array}
\end{equation}
 % -------------- %
for suitable $l,m$ and $a_1,\ldots,a_l$, $b_1,\ldots,b_m$, and $c_1,c_2,\ldots$. One can prove that if $\theta$ and $\theta'$ are equivalent, then $\mu(\theta)=\mu(\theta')$; cf.~\cite[p.~11]{Ca57}. The particular choice $r=u=0$ and $s=t=1$ in equation (\ref{Markov equivalence}) establishes the equivalence of the numbers $\theta$ and $\theta^{-1}$; hence
 % -------------- %
\begin{equation}\label{Markov rec.}
\mu(\theta)=\mu(\theta^{-1}).
\end{equation}
 % -------------- %
Now we will introduce a function $v:\R\to\R_+$ the values $\upsilon(\theta)$ of which will play an important role in the analysis of our spectral problem; they can be regarded as a one-sided version of the Markov constant.
 % -------------- %
\begin{definition}
For any $\theta>0$, we set
 % -------------- %
\begin{equation}\label{upsilon}
\upsilon(\theta):=\inf\left\{c>0\;\left|\;\left(\exists_\infty(p,q)\in\mathbb{Z}^2\right)\left(0<\theta-\frac{p}{q}<\frac{c}{q^2}\right)\right.\right\}.
\end{equation}
 % -------------- %
\end{definition}
 % -------------- %
\begin{proposition}
For every $\theta>0$, we have
 % -------------- %
\begin{eqnarray}
\upsilon(\theta)=\inf\left\{c>0\;\left|\;\left(\exists_\infty m\in\mathbb{N}\right)\left(m(m\theta-\lfloor m\theta\rfloor)<c\right)\right.\right\}, \label{upsilon i} \\
\upsilon(\theta^{-1})=\inf\left\{c>0\;\left|\;\left(\exists_\infty m\in\mathbb{N}\right)\left(m(\lceil m\theta\rceil-m\theta)<c\right)\right.\right\}, \label{upsilon ii} \\
\mu(\theta)=\min\{\upsilon(\theta),\upsilon(\theta^{-1})\}, \label{upsilon iii}
\end{eqnarray}
 % -------------- %
where $\lfloor\cdot\rfloor$ and $\lceil\cdot\rceil$ are the floor and the ceiling function, respectively.
\end{proposition}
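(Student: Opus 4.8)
The plan is to recast each of the three defining infima as the infimum of its associated set of admissible constants $c$, and then to identify these sets of constants pairwise; passing to infima at the end delivers the claimed equalities. Throughout I would take $\theta$ irrational, so that $m\theta\notin\Z$ for every $m\in\N$ (the rational case being degenerate). A simple observation used repeatedly is that for a \emph{fixed} denominator $q>0$ the inequality $0<\theta-p/q<c/q^2$ confines the numerator to the bounded interval $p\in(q\theta-c/q,q\theta)$, hence to finitely many values; consequently ``$\exists_\infty$ pairs $(p,q)\in\Z^2$'' can always be replaced by ``$\exists_\infty$ positive denominators $q$'' (pairs with $q<0$ are handled by $(p,q)\mapsto(-p,-q)$).

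For (\ref{upsilon i}) I would then note that for a given $q>0$ the requirement $\theta-p/q>0$ forces $p\le\lfloor q\theta\rfloor$, so the best numerator is $p=\lfloor q\theta\rfloor$, for which $q^2(\theta-p/q)=q(q\theta-\lfloor q\theta\rfloor)$. Hence the admissible-constant set $\{c>0:(\exists_\infty(p,q))(0<\theta-p/q<c/q^2)\}$ coincides with $\{c>0:(\exists_\infty m\in\N)(m(m\theta-\lfloor m\theta\rfloor)<c)\}$, which is (\ref{upsilon i}).

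The formula (\ref{upsilon ii}) needs one extra ingredient, a \emph{reciprocal lemma}: for every $\theta>0$,
$$\upsilon(\theta^{-1})=\inf\{c>0:(\exists_\infty(p,q))(0<p/q-\theta<c/q^2)\},$$
i.e.\ the one-sided constant for approximating $\theta^{-1}$ from below coincides with the one-sided constant for approximating $\theta$ from above. The bridge is the involution $(p,q)\mapsto(q,p)$ together with the identities $q/p-\theta=(q-p\theta)/p$ and $1/\theta-p/q=(q-p\theta)/(q\theta)$; writing $E:=q-p\theta>0$ and substituting $q=p\theta+E$ gives $qE/\theta=pE+E^2/\theta$, so along any admissible sequence---on which the relevant denominator necessarily tends to infinity and hence $E\to0$---each of the two rate conditions $pE<c$ and $qE/\theta<c$ implies the other up to an arbitrarily small loss in $c$; this yields both inequalities between the two infima, hence equality. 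Combining the reciprocal lemma with the one-sided analogue of the previous paragraph---where now $p/q-\theta>0$ forces $p\ge\lceil q\theta\rceil$ and $q^2(p/q-\theta)=q(\lceil q\theta\rceil-q\theta)$ for the optimal $p$---gives (\ref{upsilon ii}).

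Finally, (\ref{upsilon iii}) follows by splitting the absolute value: since $\theta$ is irrational, the pairs with $|\theta-p/q|<c/q^2$ form the disjoint union of those with $0<\theta-p/q<c/q^2$ and those with $0<p/q-\theta<c/q^2$, so for a fixed $c$ there are infinitely many of the former exactly when there are infinitely many in one of the latter two families. Thus the admissible-constant set for $\mu(\theta)$ is the union of those for $\upsilon(\theta)$ and for the ``from above'' constant---the latter being $\upsilon(\theta^{-1})$ by the reciprocal lemma---and since the infimum of a union is the minimum of the infima, this is precisely (\ref{upsilon iii}). I expect the only genuine obstacle to be the reciprocal lemma: one must control the correction $E^2/\theta$ precisely enough to obtain \emph{equality} (not merely inequality) of the two infima, and carry out the otherwise routine bookkeeping needed to pass back and forth between ``infinitely many integer pairs'' and ``infinitely many integers $m$''.
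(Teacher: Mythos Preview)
Your proposal is correct and follows essentially the same route as the paper: the reduction of (\ref{upsilon i}) to the optimal numerator $p=\lfloor q\theta\rfloor$ is identical, and your ``reciprocal lemma'' is precisely the paper's comparison of LHS and RHS via the identity $q(q\theta^{-1}-p)=p(q-p\theta)+[p(q-p\theta)]^2/(p^2\theta)$, which is your $qE/\theta=pE+E^2/\theta$ in different notation, with the same two-sided estimate (one inequality immediate, the other obtained by making the $E^2/\theta$ correction negligible for large denominators). The treatment of (\ref{upsilon iii}) by splitting the absolute value into the two one-sided families is likewise the paper's argument, stated a bit more explicitly.
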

 % -------------- %
\begin{proof}
One can see easily that the right-hand side of (\ref{upsilon}) will remain unchanged if we assume $q>0$ and $p$ is replaced with $\lfloor q\theta\rfloor$, i.e.,
 % -------------- %
$$
\upsilon(\theta)=\inf\left\{c>0\;\left|\;\left(\exists_\infty q\in\mathbb{N}\right)\left(\theta-\frac{\lfloor q\theta\rfloor}{q}<\frac{c}{q^2}\right)\right.\right\}.
$$
 % -------------- %
In this way we obtain formula (\ref{upsilon i}).

Let us next prove (\ref{upsilon ii}). It follows from the definition that the left-hand side of (\ref{upsilon ii}) equals
 % -------------- %
\begin{equation}\label{LHS}
\mathrm{LHS}=\inf\left\{c>0\;\left|\;\left(\exists_\infty(p,q)\in(\mathbb{Z}\backslash\{0\})^2\right)\left(q(q\theta^{-1}-p)<c\right)\right.\right\}.
\end{equation}
 % -------------- %
At the same time, in analogy with the previous step, it is easy to see that the right-hand side of (\ref{upsilon ii}) is equal to
 % -------------- %
\begin{equation}\label{RHS}
\mathrm{RHS}=\inf\left\{c>0\;\left|\;\left(\exists_\infty(p,q)\in(\mathbb{Z}\backslash\{0\})^2\right)\left(p(q-p\theta)<c\right)\right.\right\}.
\end{equation}
 % -------------- %
Our goal is to prove that $\mathrm{LHS}=\mathrm{RHS}$. To that end we will use the identity
 % -------------- %
\begin{equation}\label{identita}
q(q\theta^{-1}-p)=p(q-p\theta)+\frac{[p(q-p\theta)]^2}{p^2\theta}\,,
\end{equation}
 % -------------- %
which implies
 % -------------- %
$$
q(q\theta^{-1}-p)\geq p(q-p\theta) \qquad\mbox{for all} \;\; (p,q)\in(\mathbb{Z}\backslash\{0\})^2\,;
$$
 % -------------- %
hence $\mathrm{LHS}\geq\mathrm{RHS}$. At the same time, for every $c>\mathrm{RHS}$ there are infinitely many $(p,q)\in(\mathbb{Z}\backslash\{0\})^2$ such that $p(q-p\theta)<c$. Therefore, due to identity (\ref{identita}), there are infinitely many $(p,q)\in(\mathbb{Z}\backslash\{0\})^2$ such that
 % -------------- %
$$
q(q\theta^{-1}-p)<c+\frac{c^2}{p^2\theta}\,.
$$
 % -------------- %
Choosing $p$ large enough, we can find for any $c>\mathrm{RHS}$ and any $\epsilon>0$ infinitely many pairs $(p,q)\in\mathbb{Z}^2$ with the property
 % -------------- %
$$
q(q\theta^{-1}-p)<c+\epsilon\,.
$$
 % -------------- %
Consequently, we have also the inequality $\mathrm{LHS}\leq\mathrm{RHS}$ which completes the proof of the sought relation $\mathrm{LHS}=\mathrm{RHS}$.

It remains to prove formula (\ref{upsilon iii}). We know from the previous step that $\mu(\theta^{-1})=\mathrm{RHS}$ according to (\ref{RHS}), hence
 % -------------- %
\begin{equation}\label{upsilon(theta-1)}
\upsilon(\theta^{-1})=\inf\left\{c>0\;\left|\;\left(\exists_\infty(p,q)\in\mathbb{Z}^2\right)\left(\frac{q}{p}-\theta<\frac{c}{p^2}\right)\right.\right\}.
\end{equation}
 % -------------- %
Formula (\ref{upsilon iii}) follows trivially from equations (\ref{upsilon}), (\ref{upsilon(theta-1)}) (where we have to rename the variables,  $p\mapsto q$, $q\mapsto p$) and (\ref{mu(theta)}).
\end{proof}
 % -------------- %
Regarding equation (\ref{upsilon iii}), let us remark that the values $\upsilon(\theta)$ and $\upsilon(\theta^{-1})$ may or may not coincide. For example, for the golden mean, $\phi=(\sqrt{5}+1)/2$, we have $\upsilon(\phi)=\upsilon(\phi^{-1})=1/\sqrt{5}$ (see Section~\ref{s:golden} below), on the other hand, the literature on the Markov constant provides hints of the existence of numbers $\theta$ with the property $\upsilon(\theta)\neq\upsilon(\theta^{-1})$, see e.g.\ \cite{PSZ16}.

\medskip

Function $\upsilon(\theta)$ is closely related to approximations of $\theta$ by rationals. A number $\frac{p}{q}\in\mathbb{Q}$ with $p,q\in\mathbb{Z}$ is called \emph{best Diophantine approximation of the second kind} to a given $\theta\in\mathbb{R}$ if
 % -------------- %
\begin{equation}\label{Diophantine}
|q\theta-p|<|q'\theta-p'|
\end{equation}
 % -------------- %
holds for all $\frac{p'}{q'}\neq\frac{p}{q}$ such that $p',q'\in\mathbb{Z}$ and $0<q'\leq q$. A $\frac{p}{q}$ is a best Diophantine approximation of the second kind to a $\theta\in\mathbb{R}$ if and only if it is a convergent of the continued fraction corresponding to $\theta$ (except for the trivial case $\theta=a_0+\frac{1}{2}$, $\frac{p}{q}=\frac{a_0}{1}$), see e.g.~\cite{Kh64}. Recall that a \emph{convergent} of a $\theta\in\mathbb{R}$ is a rational number equal to a finite initial segment of a continued fraction representation of $\theta$, e.g., $a_0$, $a_0+\frac{1}{a_1}$, $a_0+\frac{1}{a_1+\frac{1}{a_2}}$, etc. If the inequality (\ref{Diophantine}) is replaced with $\left|\theta-\frac{p}{q}\right|< \left|\theta -\frac{p'}{q'}\right|$, the corresponding fraction $\frac{p}{q}$ is called \emph{best Diophantine approximation of the first kind} to the number $\theta$.

For the discussion of the problem we address in this work, we will need a certain type of one-sided best approximations, which we will call, in analogy to the notions mentioned above, `best approximation from below (respectively, from above) of the third kind'. They are defined as follows.
 % -------------- %
\begin{definition}\label{Def. third}
Let $\theta\in\mathbb{R}$ and $\frac{p}{q}\in\mathbb{Q}$ for $p,q\in\mathbb{Z}$. We say that the number $\frac{p}{q}$ is a \emph{best approximation from below of the third kind} to $\theta$ if
 % -------------- %
\begin{equation}\label{third below}
0\leq q(q\theta-p)<q'(q'\theta-p')
\end{equation}
 % -------------- %
for all $\frac{p'}{q'}\geq\theta$ such that $\frac{p'}{q'}\neq\frac{p}{q}$, $p',q'\in\mathbb{Z}$ and $0<q'\leq q$. Likewise, we call $\frac{p}{q}$ a \emph{best approximation from above of the third kind} to $\theta$ if
 % -------------- %
\begin{equation}\label{third above}
0\leq q(p-q\theta)<q'(p'-q'\theta)
\end{equation}
 % -------------- %
for all $\frac{p'}{q'}\leq\theta$ such that $\frac{p'}{q'}\neq\frac{p}{q}$, $p',q'\in\mathbb{Z}$ and $0<q'\leq q$.
\end{definition}
 % -------------- %

\medskip

Notice that for every irrational $\theta$, there are infinitely many best approximations from below of the third kind to $\theta$. Let us regard them as a sequence $(\frac{p_n}{q_n})_{n=0}^{\infty}$ such that $q_n$ grow with $n$. According to Definition~\ref{Def. third}, the corresponding values $q_n(q_n\theta-p_n)$ form a decreasing sequence, the limit of which cannot be less than $\upsilon(\theta)$ by (\ref{upsilon}). Furthermore, considering (\ref{upsilon}) and the fact that the approximations $\frac{p_n}{q_n}$ are \emph{best} in terms of (\ref{third below}), we have the result
 % -------------- %
\begin{equation}\label{from below upsilon}
\upsilon(\theta)=\lim_{n\to\infty} q_n(q_n\theta-p_n).
\end{equation}
 % -------------- %
Formula~(\ref{from below upsilon}) combined with an explicit characterization of the terms $\frac{p_n}{q_n}$, which will be found in Proposition~\ref{Prop. third from below} below, greatly simplifies the evaluation of the function $\upsilon(\theta)$.

\begin{lemma}\label{Lem. third nonconv}
Let $\theta=[a_0;a_1,a_2,a_3,\ldots]$ and $\frac{p_n}{q_n},\;n\in\mathbb{N}$, be convergents of $\theta$. If the inequalities
 % -------------- %
$$
\frac{p_{n-1}}{q_{n-1}}<\frac{p}{q}<\frac{p_{n+1}}{q_{n+1}}\leq\theta \qquad\text{or}\qquad \frac{p_{n-1}}{q_{n-1}}>\frac{p}{q}>\frac{p_{n+1}}{q_{n+1}}\geq\theta
$$
 % -------------- %
hold, then we have
 % -------------- %
$$
q|q\theta-p|>\frac{1}{a_n}\,.
$$
 % -------------- %
\end{lemma}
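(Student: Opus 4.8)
The plan is to reduce the claim to two elementary observations about the continued fraction of $\theta$: that squeezing $\frac{p}{q}$ strictly between the convergents $\frac{p_{n-1}}{q_{n-1}}$ and $\frac{p_{n+1}}{q_{n+1}}$ forces the denominator $q$ to be comparatively large, and that, since the convergent $\frac{p_{n+1}}{q_{n+1}}$ already separates $\frac{p}{q}$ from $\theta$, the quantity $|q\theta-p|$ cannot be very small. The two displayed chains of inequalities are mirror images of one another, so it suffices to treat the first; assume therefore $\frac{p_{n-1}}{q_{n-1}}<\frac{p}{q}<\frac{p_{n+1}}{q_{n+1}}\le\theta$. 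In particular $\frac{p}{q}<\theta$, hence $|q\theta-p|=q\theta-p$ and $q|q\theta-p|=q^2\left(\theta-\frac{p}{q}\right)$.

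First I would bound the error from below. Since $\frac{p}{q}<\frac{p_{n+1}}{q_{n+1}}\le\theta$ and $p_{n+1}q-pq_{n+1}$ is a positive integer,
$$
\theta-\frac{p}{q}\;\ge\;\frac{p_{n+1}}{q_{n+1}}-\frac{p}{q}\;=\;\frac{p_{n+1}q-pq_{n+1}}{q\,q_{n+1}}\;\ge\;\frac{1}{q\,q_{n+1}}\,,
$$
so $q|q\theta-p|=q^2\left(\theta-\frac{p}{q}\right)\ge q/q_{n+1}$, and the whole assertion is reduced to proving $q>q_{n+1}/a_n$.

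To get that, set $m_0:=pq_{n-1}-p_{n-1}q$ and $m_1:=p_{n+1}q-pq_{n+1}$; both are positive integers, precisely because $\frac{p_{n-1}}{q_{n-1}}<\frac{p}{q}<\frac{p_{n+1}}{q_{n+1}}$. Adding the corresponding differences of fractions and using the standard continued-fraction identity
$$
\frac{p_{n+1}}{q_{n+1}}-\frac{p_{n-1}}{q_{n-1}}\;=\;\frac{a_n}{q_{n-1}q_{n+1}}\,,
$$
which follows from $p_{n+1}=a_np_n+p_{n-1}$, $q_{n+1}=a_nq_n+q_{n-1}$ and $p_nq_{n-1}-p_{n-1}q_n=\pm1$, one obtains $\frac{m_0}{q\,q_{n-1}}+\frac{m_1}{q\,q_{n+1}}=\frac{a_n}{q_{n-1}q_{n+1}}$, i.e.\ $m_0q_{n+1}+m_1q_{n-1}=a_nq$. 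Using $m_0,m_1\ge1$ this gives $a_nq=m_0q_{n+1}+m_1q_{n-1}\ge q_{n+1}+q_{n-1}>q_{n+1}$, hence $q>q_{n+1}/a_n$. Combining with the previous step, $q|q\theta-p|\ge q/q_{n+1}>1/a_n$; in fact the same bookkeeping yields the slightly stronger bound $q|q\theta-p|\ge(q_{n-1}+q_{n+1})/(a_nq_{n+1})$.

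The argument is short and I do not anticipate a serious obstacle; the points needing care are the sign bookkeeping that makes $m_0$ and $m_1$ come out positive — which is exactly what the strict two-sided inequality in the hypothesis supplies — and verifying that the mirror-image computation handles the second chain with the same partial quotient $a_n$: there $\frac{p_{n\pm1}}{q_{n\pm1}}$ lie above $\theta$, one replaces $\theta-\frac{p}{q}$ by $\frac{p}{q}-\theta\ge\frac{p}{q}-\frac{p_{n+1}}{q_{n+1}}$, and everything else goes through verbatim.
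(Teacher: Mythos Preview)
Your proof is correct and follows essentially the same route as the paper's: both arguments bound $|q\theta-p|\ge 1/q_{n+1}$ via the separation $\frac{p}{q}$ from $\theta$ by the convergent $\frac{p_{n+1}}{q_{n+1}}$, and then establish $q>q_{n+1}/a_n$. The only cosmetic difference is that you obtain the latter from the single relation $m_0q_{n+1}+m_1q_{n-1}=a_nq$, whereas the paper sandwiches $\bigl|\frac{p}{q}-\frac{p_{n-1}}{q_{n-1}}\bigr|$ between $\frac{1}{qq_{n-1}}$ and $\frac{a_n}{q_{n-1}q_{n+1}}$ to reach the same inequality.
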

 % -------------- %
\begin{proof}
First we estimate the absolute value $\left|\frac{p}{q}-\frac{p_{n-1}}{q_{n-1}}\right|$ from below,
 % -------------- %
\begin{equation}\label{p/q n-1}
\left|\frac{p}{q}-\frac{p_{n-1}}{q_{n-1}}\right|=\frac{|pq_{n-1}-qp_{n-1}|}{q\cdot q_{n-1}}\geq\frac{1}{q\cdot q_{n-1}}\,,
\end{equation}
 % -------------- %
where we used a trivial fact that $|pq_{n-1}-qp_{n-1}|\geq1$ because the expression is by assumption a nonzero integer. In the next step we find an upper estimate of the same quantity, taking advantage of a known formula $\frac{p_{k-2}}{q_{k-2}}-\frac{p_k}{q_k}=\frac{(-1)^{k-1}a_k}{q_kq_{k-2}}\:$ (cf.~\cite[Cor.~of Thm.~3]{Kh64}) for $[a_0;a_1,a_2,\ldots]$ representing the continued-fraction form of $\theta$,
 % -------------- %:
\begin{equation}\label{n-1 n+1}
\left|\frac{p}{q}-\frac{p_{n-1}}{q_{n-1}}\right|<\left|\frac{p_{n+1}}{q_{n+1}}-\frac{p_{n-1}}{q_{n-1}}\right|=\frac{a_n}{q_{n+1}q_{n-1}}\,.
\end{equation}
 % -------------- %
Combining inequalities (\ref{p/q n-1}) and (\ref{n-1 n+1}), we obtain
 % -------------- %
\begin{equation}\label{q a_n}
q>\frac{q_{n+1}}{a_n}\,.
\end{equation}
 % -------------- %
Now we use the assumptions of the lemma to estimate $\left|\theta-\frac{p}{q}\right|$:
 % -------------- %
\begin{equation*}
\left|\theta-\frac{p}{q}\right|\geq\left|\frac{p_{n+1}}{q_{n+1}}-\frac{p}{q}\right|=\frac{|qp_{n+1}-pq_{n+1}|}{q\cdot q_{n-1}}\geq\frac{1}{q\cdot q_{n+1}}\,.
\end{equation*}
 % -------------- %
Hence we obtain, taking advantage of inequality (\ref{q a_n}),
 % -------------- %
\begin{equation*}
q|q\theta-p|\geq\frac{q}{q_{n+1}}>\frac{1}{a_n}\,,
\end{equation*}
 % -------------- %
which yields the sought claim.
\end{proof}
 % -------------- %

\medskip

 % -------------- %
\begin{proposition}\label{Prop. third from below}
Every best approximation of the third kind from below to a number $\theta\in\mathbb{R}$ is a convergent of $\theta$.
\end{proposition}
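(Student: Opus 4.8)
The plan is to argue by contradiction. Suppose $\frac pq$, in lowest terms with $q\ge1$, is a best approximation from below of the third kind to $\theta$ but is \emph{not} a convergent of $\theta$; the goal is to contradict~(\ref{third below}). The defining inequality $0\le q(q\theta-p)$ forces $\frac pq\le\theta$. I would first clear the trivial cases. If $\theta\in\mathbb{Q}$ and $\frac pq=\theta$, then $\frac pq$ is the last convergent, a contradiction, so $\frac pq<\theta$. If $\frac pq<\frac{p_0}{q_0}=a_0$, then $p\le qa_0-1$, whence $q\theta-p\ge q(\theta-a_0)+1\ge1$ and $q(q\theta-p)\ge q\ge1>\theta-a_0=q_0(q_0\theta-p_0)$; since $\frac{p_0}{q_0}\le\theta$, $\frac{p_0}{q_0}\ne\frac pq$, and $q_0=1\le q$, this already violates~(\ref{third below}). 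Hence $\frac{p_0}{q_0}<\frac pq<\theta$ (strict, since $\frac pq$ is not a convergent).

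Next I would locate $\frac pq$ among the convergents. Those convergents of $\theta$ that lie strictly below $\theta$ --- every second one --- form an increasing sequence with supremum $\theta$, so $\frac pq$, which is strictly between $\frac{p_0}{q_0}$ and $\theta$ and equals none of them, is squeezed between two \emph{consecutive} such convergents: $\frac{p_{n-1}}{q_{n-1}}<\frac pq<\frac{p_{n+1}}{q_{n+1}}\le\theta$ for some $n$. This is exactly the first alternative in the hypothesis of Lemma~\ref{Lem. third nonconv}, so $q\,|q\theta-p|>\frac1{a_n}$, i.e.\ $q(q\theta-p)>\frac1{a_n}$ since $\frac pq<\theta$.

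Now I would use the convergent $\frac{p_{n-1}}{q_{n-1}}$ as a better competitor. Its denominator satisfies $q_{n-1}<q$: indeed~(\ref{q a_n}), established in the proof of Lemma~\ref{Lem. third nonconv}, gives $q>q_{n+1}/a_n$, and $q_{n+1}>q_n\ge a_nq_{n-1}$ yields $q>q_{n+1}/a_n>q_{n-1}$. On the other hand, the standard inequality $|q_{n-1}\theta-p_{n-1}|<1/q_n$, together with $q_n=a_nq_{n-1}+q_{n-2}\ge a_nq_{n-1}$, gives $q_{n-1}(q_{n-1}\theta-p_{n-1})=q_{n-1}|q_{n-1}\theta-p_{n-1}|<q_{n-1}/q_n\le\frac1{a_n}$. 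Thus $\frac{p_{n-1}}{q_{n-1}}$ is a fraction $\le\theta$ distinct from $\frac pq$, with $0<q_{n-1}<q$, for which $q_{n-1}(q_{n-1}\theta-p_{n-1})<\frac1{a_n}<q(q\theta-p)$ --- contradicting~(\ref{third below}). Hence $\frac pq$ is a convergent of $\theta$. The degenerate situations where Lemma~\ref{Lem. third nonconv} does not apply (a rational $\theta$ with a very short continued fraction, or $q$ so small that $p_{n+1}/q_{n+1}$ or $a_n$ is unavailable) are handled directly, as in the first paragraph, by comparison with $\frac{p_0}{q_0}$ or with the last convergent $\theta$ itself, using that a fraction lying strictly between two consecutive convergents has denominator at least equal to their sum.

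The step carrying the content is this last one: everything hinges on the lower bound $\frac1{a_n}$ for $q(q\theta-p)$ supplied by Lemma~\ref{Lem. third nonconv} matching the upper bound $\frac1{a_n}$ for the value $q_{n-1}(q_{n-1}\theta-p_{n-1})$ at the bracketing convergent, so that this convergent \emph{strictly} beats $\frac pq$. The remainder --- the reductions, the fact that $\frac pq$ sits between two consecutive convergents on the same side of $\theta$ (where the only real care concerns parity of indices and the rational case), and the bound $q_{n-1}<q$ --- is routine.
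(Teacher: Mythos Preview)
Your proof is correct and follows essentially the same approach as the paper's: a contradiction argument that first disposes of the case $\frac pq<\frac{p_0}{q_0}$ by comparison with $\lfloor\theta\rfloor/1$, then traps $\frac pq$ between consecutive same-side convergents, invokes Lemma~\ref{Lem. third nonconv} for the lower bound $q(q\theta-p)>1/a_n$, and shows the bracketing convergent $\frac{p_{n-1}}{q_{n-1}}$ beats it via $q_{n-1}(q_{n-1}\theta-p_{n-1})<q_{n-1}/q_n\le1/a_n$ together with $q>q_{n-1}$. The only cosmetic difference is that you import the bound $q>q_{n+1}/a_n$ from the proof of Lemma~\ref{Lem. third nonconv} to get $q>q_{n-1}$, whereas the paper re-derives $q>q_n>q_{n-1}$ directly from the telescoping identity for consecutive convergents.
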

 % -------------- %
\begin{proof}
We will proceed by \emph{reductio ad absurdum}. Suppose that $\frac{p}{q}$ is a best approximation of the third kind from below of $\theta$ which is not a convergent of $\theta$. Then either we have $\frac{p}{q}<\frac{p_0}{q_0}=\lfloor\theta\rfloor$, where $\lfloor\cdot\rfloor$ is the floor function, or $\frac{p}{q}$ lies between two convergents that are smaller or equal to $\theta$. First we will disprove the former case. For every $\frac{p}{q}<\lfloor\theta\rfloor$ we have
 % -------------- %
$$
q(q\theta-p)=q^2\left(\theta-\frac{p}{q}\right)\geq\theta-\frac{p}{q}>\theta-\lfloor\theta\rfloor=1\cdot(1\cdot\theta-\lfloor\theta\rfloor).
$$
 % -------------- %
Comparing this result with condition (\ref{third below}) for $p'=\lfloor\theta\rfloor$ and $q'=1$, we see that $\frac{p}{q}$ cannot be a best approximation from below of the third kind.

In the rest of the proof we will therefore suppose that $\frac{p}{q}$ lies between two convergents that are smaller or equal to $\theta$, i.e.
 % -------------- %
\begin{equation}\label{bounds p/q}
\frac{p_{n-1}}{q_{n-1}}<\frac{p}{q}<\frac{p_{n+1}}{q_{n+1}}\leq\theta \;\quad\mathrm{\;\;for\;a  \;certain\;odd}\;\:n\,;
\end{equation}
 % -------------- %
recall that the parity of $n$ determines whether the convergents are larger or smaller than~$\theta$. Our goal is to show that $\frac{p}{q}$ contradicts the requirement (\ref{third below}) on a best approximation of the third kind from below, for which it suffices to demonstrate that
 % -------------- %
\begin{equation}\label{contradiction}
q>q_{n-1} \quad\wedge\quad q(q\theta-p)\geq q_{n-1}(q_{n-1}\theta-p_{n-1}).
\end{equation}
 % -------------- %
On one hand, obviously
 % -------------- %
\begin{equation}\label{ineq1 q}
\frac{p}{q}-\frac{p_{n-1}}{q_{n-1}}=\frac{pq_{n-1}-qp_{n-1}}{q\cdot q_{n-1}}\geq\frac{1}{q\cdot q_{n-1}}\,.
\end{equation}
 % -------------- %
On the other hand, the well-known formula $\frac{p_{k}}{q_{k}}-\frac{p_{k-1}}{q_{k-1}}=\frac{(-1)^{k+1}}{q_{k}q_{k-1}}$ in combination with assumptions (\ref{bounds p/q}) implies
 % -------------- %
\begin{equation}\label{ineq2 q}
\hspace{-6.2em} \frac{p}{q}-\frac{p_{n-1}}{q_{n-1}}<\frac{p_{n+1}}{q_{n+1}}-\frac{p_{n-1}}{q_{n-1}}
=\frac{p_{n+1}}{q_{n+1}}-\frac{p_n}{q_n}+\frac{p_n}{q_n}-\frac{p_{n-1}}{q_{n-1}}=
\frac{(-1)^n}{q_{n+1}q_{n}}+\frac{(-1)^{n-1}}{q_nq_{n-1}}=\frac{q_{n+1}-q_{n-1}}{q_{n-1}q_{n}q_{n+1}}.
\end{equation}
 % -------------- %
Combining inequalities (\ref{ineq1 q}) and (\ref{ineq2 q}), we obtain
 % -------------- %
$$
q>\frac{q_{n}q_{n+1}}{q_{n+1}-q_{n-1}}\,,
$$
 % -------------- %
which, in particular, implies $q>q_n$. Consequently,
 % -------------- %
\begin{equation}\label{q>q_n-1}
q>q_n>q_{n-1}\,.
\end{equation}
 % -------------- %
This verifies the first part of (\ref{contradiction}). In the next step we estimate $q_{n-1}(q_{n-1}\theta-p_{n-1})$. Since $\frac{p_{n-1}}{q_{n-1}}$ is a convergent, we have
 % -------------- %
$$
\theta-\frac{p_{n-1}}{q_{n-1}}<\frac{1}{q_{n}q_{n-1}}
$$
 % -------------- %
or, in other words
 % -------------- %
\begin{equation}\label{estimate q n-1}
q_{n-1}(q_{n-1}\theta-p_{n-1})<\frac{q_{n-1}}{q_n}\,.
\end{equation}
 % -------------- %
Now we use Lemma~\ref{Lem. third nonconv} to obtain the estimate
 % -------------- %
\begin{equation}\label{estimate q}
q(q\theta-p)>\frac{1}{a_n}\,.
\end{equation}
 % -------------- %
A well-known rule for continued fractions, $q_n=a_nq_{n-1}+q_{n-2}$, implies $q_n>a_nq_{n-1}$, and therefore
 % -------------- %
\begin{equation}\label{comparison}
\frac{1}{a_n}>\frac{q_{n-1}}{q_n}\,.
\end{equation}
 % -------------- %
Inequalities (\ref{estimate q n-1}), (\ref{estimate q}) and (\ref{comparison}) together imply $q(q\theta-p)>q_{n-1}(q_{n-1}\theta-p_{n-1})$. Taking into account that $q>q_{n-1}$, in view of estimate (\ref{q>q_n-1}), we conclude that $\frac{p}{q}$ is not a best approximation of the third kind from below to $\theta$.
\end{proof}

Let us remark that \emph{not every} convergent $\frac{p}{q}<\theta$ is a best approximation of the third kind from below to $\theta$. For example, $\frac{333}{106}$ is a convergent of $\pi$, but does not obey the condition~(\ref{third below}) (which can be checked by considering $\frac{p'}{q'}=\frac{3}{1}$).

\medskip

As for the approximation from above, the situation is slightly different.
 % -------------- %
\begin{proposition}\label{Prop. third from above}
Every best approximation from above of the third kind to a $\theta\in\mathbb{R}$ is either $\lceil\theta\rceil$ or a convergent of $\theta$.
\end{proposition}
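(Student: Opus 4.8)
The plan is to follow the template of the proof of Proposition~\ref{Prop. third from below}, inserting one additional case to account for the exceptional value $\lceil\theta\rceil$. I shall assume $\theta\notin\Q$ (the rational case being simpler), write $\theta=[a_0;a_1,a_2,\ldots]$ with convergents $\frac{p_n}{q_n}$, and use the standard ordering $\frac{p_0}{q_0}<\frac{p_2}{q_2}<\cdots<\theta<\cdots<\frac{p_3}{q_3}<\frac{p_1}{q_1}$, together with the fact that $\frac{p_1}{q_1}=a_0+\frac{1}{a_1}\le a_0+1=\lceil\theta\rceil$, with equality if and only if $a_1=1$. I argue \emph{reductio ad absurdum}: suppose $\frac{p}{q}$ (with $q>0$) is a best approximation from above of the third kind which is neither $\lceil\theta\rceil$ nor a convergent of $\theta$; since $0\le q(p-q\theta)$ and $\theta$ is irrational we have $\frac{p}{q}>\theta$. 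If $\frac{p}{q}>\lceil\theta\rceil$, then $q(p-q\theta)=q^2(\frac{p}{q}-\theta)\ge\frac{p}{q}-\theta>\lceil\theta\rceil-\theta$, so comparing $\frac{p}{q}$ with $\frac{\lceil\theta\rceil}{1}$ contradicts~(\ref{third above}); hence from now on $\theta<\frac{p}{q}<\lceil\theta\rceil$.

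The case $\frac{p_1}{q_1}<\frac{p}{q}<\lceil\theta\rceil$, which is possible only when $a_1\ge2$, is the genuinely new ingredient and the step where I expect the main difficulty to lie: here $\frac{p}{q}$ lies between the largest ``upper'' convergent $\frac{p_1}{q_1}$ and $\lceil\theta\rceil$, and since $\lceil\theta\rceil$ --- in contrast to $\lfloor\theta\rfloor=\frac{p_0}{q_0}$ --- need not itself be a convergent, there is no convergent sitting just below $\lceil\theta\rceil$ to play the role of the competitor $\frac{p_{n-1}}{q_{n-1}}$ used in Proposition~\ref{Prop. third from below}. To handle it I would set $r:=p-a_0q$; then $\frac{r}{q}=\frac{p}{q}-a_0\in(\frac{1}{a_1},1)$ forces $1\le r<q$, hence $q\ge2$ and $a_1r\ge q+1$, i.e.\ $\frac{1}{a_1}\le\frac{r}{q+1}$. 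Since $\theta-a_0=[0;a_1,a_2,\ldots]<\frac{1}{a_1}$, this gives $(q^2-1)(\theta-a_0)<(q^2-1)\frac{r}{q+1}=r(q-1)\le qr-1$, whence $q(p-q\theta)=qr-q^2(\theta-a_0)>1-(\theta-a_0)=\lceil\theta\rceil-\theta$; comparison with $\frac{\lceil\theta\rceil}{1}$ again contradicts~(\ref{third above}).

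It remains to treat $\theta<\frac{p}{q}<\frac{p_1}{q_1}$ with $\frac{p}{q}$ not a convergent, and here I would simply repeat the argument of Proposition~\ref{Prop. third from below} with $q(q\theta-p)$ replaced by $q(p-q\theta)$ throughout. Since the upper convergents $\frac{p_1}{q_1}>\frac{p_3}{q_3}>\cdots$ decrease to $\theta$, there is an even $n\ge2$ with $\theta<\frac{p_{n+1}}{q_{n+1}}<\frac{p}{q}<\frac{p_{n-1}}{q_{n-1}}$; Lemma~\ref{Lem. third nonconv} gives $q(p-q\theta)>\frac{1}{a_n}$, the convergent estimate gives $q_{n-1}(p_{n-1}-q_{n-1}\theta)<\frac{q_{n-1}}{q_n}$, and $q_n=a_nq_{n-1}+q_{n-2}>a_nq_{n-1}$ gives $\frac{1}{a_n}>\frac{q_{n-1}}{q_n}$, so that $q(p-q\theta)>q_{n-1}(p_{n-1}-q_{n-1}\theta)$; moreover $\frac{1}{qq_{n-1}}\le|\frac{p}{q}-\frac{p_{n-1}}{q_{n-1}}|<|\frac{p_{n+1}}{q_{n+1}}-\frac{p_{n-1}}{q_{n-1}}|=\frac{q_{n+1}-q_{n-1}}{q_{n-1}q_nq_{n+1}}$ yields $q>\frac{q_nq_{n+1}}{q_{n+1}-q_{n-1}}>q_n>q_{n-1}$. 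Hence $\frac{p_{n-1}}{q_{n-1}}\ge\theta$ beats $\frac{p}{q}$ in~(\ref{third above}), the final contradiction, and it follows that every best approximation from above of the third kind is $\lceil\theta\rceil$ or a convergent of $\theta$. An alternative route would be to reflect the problem via $\frac{p}{q}\mapsto\lceil\theta\rceil-\frac{p}{q}$, turning best approximations from above to $\theta$ into best approximations from below to $\lceil\theta\rceil-\theta$ and invoking Proposition~\ref{Prop. third from below}, but this requires tracking how the convergents of $\lceil\theta\rceil-\theta$ correspond to $\lceil\theta\rceil$ and to the convergents of $\theta$, which does not obviously save work.
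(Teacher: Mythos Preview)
Your argument is correct and follows essentially the same plan as the paper: a contradiction argument with the ``between two upper convergents'' case handled exactly as in Proposition~\ref{Prop. third from below} via Lemma~\ref{Lem. third nonconv}, and a separate treatment of the region above $\frac{p_1}{q_1}$. The only difference is organizational: for $\frac{p}{q}>\frac{p_1}{q_1}$ the paper splits according to whether $q<a_1$ or $q\ge a_1$ and compares with $\frac{\lceil\theta\rceil}{1}$ or $\frac{p_1}{q_1}$ respectively, whereas you first dispose of $\frac{p}{q}>\lceil\theta\rceil$ trivially and then handle $\frac{p_1}{q_1}<\frac{p}{q}<\lceil\theta\rceil$ in one stroke (your substitution $r=p-a_0q$ and the chain $(q^2-1)(\theta-a_0)<(q^2-1)\frac{r}{q+1}=r(q-1)\le qr-1$) using only the competitor $\frac{\lceil\theta\rceil}{1}$; this is a slightly neater packaging of the same elementary estimates.
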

 % -------------- %
\begin{proof}
We proceed again by contradiction. Let $\frac{p}{q}\neq\lceil\theta\rceil$ be a best approximation of the third kind from above to $\theta$ which is not a convergent of $\theta$. Then either $\frac{p}{q}$ lies between two convergents that are smaller than $\theta$, or $\frac{p}{q}>\frac{p_1}{q_1}\wedge\frac{p}{q}\neq\lceil\theta\rceil$. The former case can be treated in the same manner as in the proof of Proposition~\ref{Prop. third from below}; therefore, we will omit it here and proceed directly to the case $\frac{p}{q}>\frac{p_1}{q_1}$, $\frac{p}{q}\neq\lceil\theta\rceil$. Since $\frac{p_1}{q_1}=a_0+\frac{1}{a_1}=\frac{a_0a_1+1}{a_1}$, every $\frac{p}{q}>\frac{p_1}{q_1}$ satisfies
 % -------------- %
\begin{equation}\label{p}
p>q\left(a_0+\frac{1}{a_1}\right)=qa_0+\frac{q}{a_1}\,.
\end{equation}
 % -------------- %
We distinguish two cases.
 % -------------- %
\begin{itemize}
\item If $q<a_1$, inequality (\ref{p}) gives $p\geq qa_0+1$; hence
 % -------------- %
$$
q(p-q\theta)\geq q(qa_0+1-q\theta)=a_0+1-\theta+(q-1)\left(1-(q+1)(\theta-a_0)\right)
$$
 % -------------- %
(the last equality can be easily checked). The assumption $q<a_1$ gives $q+1\leq a_1$. Taking advantage of the trivial estimate $\theta-a_0\leq\frac{1}{a_1}$, we get $1-(q+1)(\theta-a_0)\geq0$; hence
 % -------------- %
$$
a_0+1-\theta+(q-1)\left(1-(q+1)(\theta-a_0)\right)\geq a_0+1-\theta.
$$
 % -------------- %
Since $a_0+1\geq\lceil\theta\rceil$, we conclude that
 % -------------- %
$$
q(p-q\theta)\geq1\cdot(\lceil\theta\rceil-1\cdot\theta),
$$
 % -------------- %
i.e., every $\frac{p}{q}\neq\lceil\theta\rceil$ contradicts the condition (\ref{third above}) with the choice $p'=\lceil\theta\rceil$, $q'=1$.

\item If $q\geq a_1$, inequality (\ref{p}) gives
 % -------------- %
$$
q(p-q\theta)>q\left(qa_0+\frac{q}{a_1}-q\theta\right)=a_1(a_0a_1+1-a_1\theta)+(q^2-a_1^2)\left(\frac{1}{a_1}-(\theta-a_0)\right).
$$
 % -------------- %
Using the assumption $q\geq a_1$ together with the trivial estimate $\theta-a_0\leq\frac{1}{a_1}$, we get
 % -------------- %
$$
q(p-q\theta)>a_1(a_0a_1+1-a_1\theta);
$$
 % -------------- %
i.e., $\frac{p}{q}$ contradicts the condition (\ref{third above}) with the choice $p'=a_0a_1+1$, $q'=a_1$.
\end{itemize}
 % -------------- %
To sum up, in both cases we found that $\frac{p}{q}>\frac{p_1}{q_1}$, $\frac{p}{q}\neq\lceil\theta\rceil$ cannot be a best approximation from above of the third kind to $\theta$.
\end{proof}

%%%%%%%%%%%%%%%%%%%%%%%%%%%%%%%%%%%%%%%%%%%%%%%%%%%%%%%%%%%%%%%%%%%%%%%%%%%%%%%%%%%%%%%%%%%%%%%%%%%%%%%%%%%%%%%
%%%%------------------------------------------- Number of gaps --------------------------------------------%%%%
%%%%%%%%%%%%%%%%%%%%%%%%%%%%%%%%%%%%%%%%%%%%%%%%%%%%%%%%%%%%%%%%%%%%%%%%%%%%%%%%%%%%%%%%%%%%%%%%%%%%%%%%%%%%%%%
%  Section
%
\section{Number of spectral gaps of lattice graphs}\label{s:gapnumber}
\setcounter{equation}{0}

Now we can address our second main topic, the existence of graphs with the Bethe--Sommerfeld property. As indicated in the introduction, to this aim we shall revisit the model introduced in \cite{Ex95} and further discussed in \cite{Ex96, EG96}. Let us first recall some needed notions. Consider a rectangular lattice graph in the plane with edges of lengths $a$ and $b$ -- cf.~Figure~\ref{fig:lattice}.
 % -------------- %
\begin{figure}[tbp]
    \centering
    \hspace{-10.5em}\includegraphics[width=.4\columnwidth]{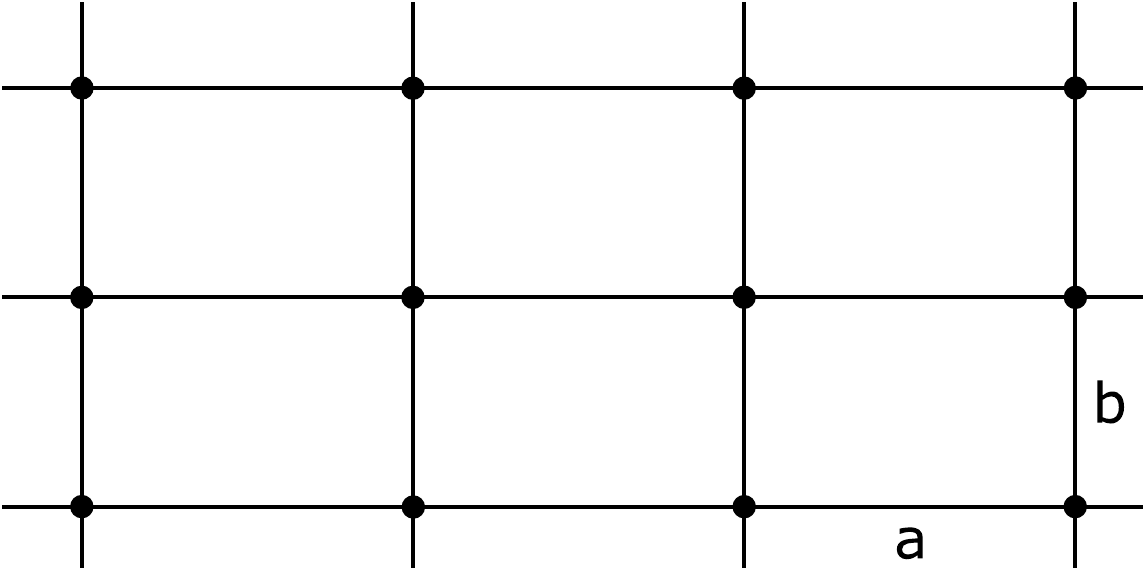}
    \caption{The rectangular-lattice graph}
    \label{fig:lattice}
\end{figure}
 % -------------- %
In addition, suppose that the graph Hamiltonian $H$ is the Laplacian defined as a self-adjoint operator by imposing at each graph vertex $v$ the $\delta$ coupling condition -- that is, continuity together with the requirement $\sum_{j=1}^4 \psi'(v) = \alpha\psi(v)$ -- with a parameter $\alpha\in\R$. According to \cite{Ex96}, a number $k^2>0$ belongs to a gap if and only if $k>0$ satisfies the gap condition, which reads
 % -------------- %
\begin{equation}\label{alpha>0}
\tan\left(\frac{ka}{2}-\frac{\pi}{2}\left\lfloor\frac{ka}{\pi}\right\rfloor\right)+\tan\left(\frac{kb}{2}-\frac{\pi}{2}\left\lfloor\frac{kb}{\pi}\right\rfloor\right)<\frac{\alpha}{2k} \;\quad \text{for}\; \alpha>0
\end{equation}
 % -------------- %
and
 % -------------- %
\begin{equation}\label{alpha<0}
\cot\left(\frac{ka}{2}-\frac{\pi}{2}\left\lfloor\frac{ka}{\pi}\right\rfloor\right)+\cot\left(\frac{kb}{2}-\frac{\pi}{2}\left\lfloor\frac{kb}{\pi}\right\rfloor\right)<\frac{|\alpha|}{2k} \;\quad \text{for}\; \alpha<0\,;
\end{equation}
 % -------------- %
we neglect the case $\alpha=0$ where the spectrum is trivial, $\sigma(H)=[0,\infty)$. Note that for $\alpha<0$ the spectrum extends to the negative part of the real axis and may have a gap there. From the point of view of our present problem this is not that important, though, the reason is that if such a gap exists, it always extends to positive values of the energy -- see Proposition~\ref{Prop general negative} below and Figure~2 in \cite{EG96} -- hence it is sufficient to analyze solutions to the gap conditions (\ref{alpha>0}) and (\ref{alpha<0}) only. Since the sign of $\alpha$ plays role here, it is reasonable to discuss the two cases separately.

Before proceeding further, let us remark that in order to find solutions to conditions (\ref{alpha>0}) and (\ref{alpha<0}) we employ the number-theoretic results of the previous section. This will provide the sought result, in particular, a proof of Theorem~\ref{thm:existence}, but without a convincing insight into the mechanism of the effect. For a comment on that point, representing at the same time a challenge for future work, see Section~\ref{s:concl}.

\subsection{The case $\alpha>0$}

Let us first make the gap description more specific.
 % -------------- %
\begin{proposition}\label{Prop general}
Let $\theta=\frac{a}{b}$. The following claims are valid:
 % -------------- %
\begin{itemize}
\item Every gap in the spectrum has the left (lower) endpoint equal to $k^2=\left(\frac{m\pi}{a}\right)^2$ or $k^2=\left(\frac{m\pi}{b}\right)^2$ for some $m\in\mathbb{N}$.
\item A gap with the left endpoint at $k^2=\left(\frac{m\pi}{a}\right)^2$ is present if and only if
 % -------------- %
\begin{equation}\label{gap F a}
\frac{2m\pi}{a}\tan\left(\frac{\pi}{2}(m\theta^{-1}-\lfloor m\theta^{-1}\rfloor)\right)<\alpha.
\end{equation}
 % -------------- %
\item A gap with the left endpoint at $k^2=\left(\frac{m\pi}{b}\right)^2$ is present if and only if
 % -------------- %
\begin{equation}\label{gap F b}
\frac{2m\pi}{b}\tan\left(\frac{\pi}{2}(m\theta-\lfloor m\theta\rfloor)\right)<\alpha.
\end{equation}
 % -------------- %
\end{itemize}
 % -------------- %
\end{proposition}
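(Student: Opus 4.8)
The plan is to build everything from the gap characterisation (\ref{alpha>0}). I would start by abbreviating its left-hand side,
\[
f(k):=\tan\!\left(\frac{ka}{2}-\frac{\pi}{2}\lfloor ka/\pi\rfloor\right)+\tan\!\left(\frac{kb}{2}-\frac{\pi}{2}\lfloor kb/\pi\rfloor\right),
\]
so that a positive $k^2$ lies outside $\sigma(H)$ exactly when $f(k)<\frac{\alpha}{2k}$. The core of the plan is to analyse $f$ relative to the discrete set $\K:=\{m\pi/a\mid m\in\N\}\cup\{m\pi/b\mid m\in\N\}$, which cuts $(0,\infty)$ into open intervals. The first step is to record the elementary behaviour of $f$: on each such interval both floor functions are constant, so $f$ is a sum of two tangents of increasing affine functions whose arguments stay in $[0,\pi/2)$; hence $f$ is continuous, non-negative and \emph{strictly} increasing there, it is right-continuous at every point of $\K$, and $f(k)\to+\infty$ as $k$ approaches a point of $\K$ from the left (the pertinent tangent argument tends to $\pi/2$). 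Setting $h(k):=f(k)-\frac{\alpha}{2k}$, the same holds for $h$, which in addition is strictly increasing on each component of $(0,\infty)\setminus\K$ since $k\mapsto-\frac{\alpha}{2k}$ increases for $\alpha>0$. I would also note at the outset that, because $f$ blows up at each $c\in\K$ from the left, $k^2\in\sigma(H)$ for $k$ slightly below $c$, and closedness of $\sigma(H)$ then gives $c^2\in\sigma(H)$; thus every $(m\pi/a)^2$ and $(m\pi/b)^2$ lies in the spectrum.

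For the first bullet I would argue by contradiction. Let $(E_-,E_+)$ be a gap with $E_-=k_-^2$; if $k_-\notin\K$ it lies in a component of $(0,\infty)\setminus\K$ on which $h$ is continuous and strictly increasing, and since $(E_-,E_+)$ is a gap one has $h(k)<0$ for $k$ just to the right of $k_-$, so strict monotonicity forces $h(k_-)<h(k)<0$, i.e. $f(k_-)<\frac{\alpha}{2k_-}$, which says $E_-=k_-^2\notin\sigma(H)$ and contradicts $E_-\in\sigma(H)$. Hence $k_-\in\K$, i.e. $E_-=(m\pi/a)^2$ or $(m\pi/b)^2$. (The unbounded component of $\rho(H)$ is of course not a gap in the usual sense, and is excluded from the statement.)

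For the second and third bullets I would fix $c=m\pi/a\in\K$, let $d$ be the next point of $\K$ above $c$, and compute $h(c)$ from the right: as $k\downarrow c$ one has $\lfloor ka/\pi\rfloor=m$, so the first tangent in $f$ tends to $\tan 0=0$, while the second is continuous at $c$ — here one uses that $c$ is not a multiple of $\pi/b$, which holds for $\theta=a/b\notin\Q$, the case of interest (in the degenerate rational case the two one-sided conditions simply coincide) — and, since $cb/\pi=m\theta^{-1}$, its value equals $\tan\!\left(\frac{\pi}{2}(m\theta^{-1}-\lfloor m\theta^{-1}\rfloor)\right)$; thus $h(c)<0$ is precisely condition (\ref{gap F a}). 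If (\ref{gap F a}) holds, then on $[c,d)$ the continuous strictly increasing $h$ runs from $h(c)<0$ to $+\infty$, so $h<0$ exactly on some $[c,e)$ with $c<e<d$; hence $(c^2,e^2)$ meets no spectrum, while $c^2\in\sigma(H)$ by the first paragraph and $e^2\in\sigma(H)$ because $h(e)=0$, so $(c^2,e^2)$ is a gap with left endpoint $(m\pi/a)^2$. Conversely, a gap with left endpoint $(m\pi/a)^2$ forces $h<0$ just above $c$, whence $h(c)<0$ by strict monotonicity, i.e. (\ref{gap F a}). The point $c=m\pi/b$ is treated identically with the roles of $a$ and $b$ interchanged; there $ca/\pi=m\theta$, which yields (\ref{gap F b}).

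The one genuinely delicate point, I expect, is the behaviour of $f$ at the points of $\K$: there $f$ has a one-sided singularity and is only right-continuous, so the naive rule that $k^2\in\sigma(H)$ precisely when $f(k)\ge\frac{\alpha}{2k}$ fails exactly on $\K$. This is reconciled by invoking the closedness of $\sigma(H)$ to place every $(m\pi/a)^2$ and $(m\pi/b)^2$ in the spectrum, which is what makes these — and, by the monotonicity argument, only these — the admissible left endpoints of gaps. Everything else is routine monotonicity bookkeeping.
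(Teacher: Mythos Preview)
Your proof is correct and follows essentially the same route as the paper: both arguments observe that the function encoding the gap condition is strictly increasing on each component of $(0,\infty)\setminus\K$, blows up approaching each point of $\K$ from the left, and then evaluate the right limit at $m\pi/a$ (resp.\ $m\pi/b$) to obtain (\ref{gap F a}) (resp.\ (\ref{gap F b})). The only cosmetic difference is that the paper packages the condition as $F(k):=2kf(k)<\alpha$ rather than your $h(k)=f(k)-\alpha/(2k)<0$, and it glosses over the boundary issue you single out---your explicit use of the closedness of $\sigma(H)$ to place $(m\pi/a)^2$ and $(m\pi/b)^2$ in the spectrum is more careful than the paper's treatment.
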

 % -------------- %
\begin{proof}
The gap condition (\ref{alpha>0}) is equivalent to $F(k)<\alpha$, where
 % -------------- %
$$
F(k)=2k\left(\tan\left(\frac{ka}{2}-\frac{\pi}{2}\left\lfloor\frac{ka}{\pi}\right\rfloor\right)+\tan\left(\frac{kb}{2}-\frac{\pi}{2}\left\lfloor\frac{kb}{\pi}\right\rfloor\right)\right).
$$
 % -------------- %
Function $k\mapsto F(k)$ has discontinuities at points $k=\frac{m\pi}{a}$ and $k=\frac{m\pi}{b}$ for $m\in\mathbb{N}$. It is easy to check that $F(\cdot)$ is strictly increasing in each interval of continuity and has limits
 % -------------- %
$$
\lim_{k\nearrow\frac{m\pi}{a}}F(k)=\lim_{k\nearrow\frac{m\pi}{b}}F(k)=+\infty
$$
 % -------------- %
at the right endpoints of the continuity intervals. Hence there is at most one gap in each interval of continuity of $F(k)$, and moreover, all gaps are adjacent to points $k^2$ corresponding to $k$ being left endpoints of those intervals. This proves the first part of the proposition.

Furthermore, a gap with the left endpoint equal to $k^2=\left(\frac{m\pi}{a}\right)^2$ is present if and only if $\lim_{k\searrow\frac{m\pi}{a}}F(k)<\alpha$, and since
 % -------------- %
$$
\lim_{k\searrow\frac{m\pi}{a}}F(k)=\frac{2m\pi}{a}\tan\left(\frac{\pi}{2}\left(m\frac{b}{a}-\left\lfloor m\frac{b}{a}\right\rfloor\right)\right)\,,
$$
 % -------------- %
we arrive at the gap conditions (\ref{gap F a}); the gap condition (\ref{gap F b}) is obtained similarly by considering $\lim_{k\searrow\frac{m\pi}{b}}F(k)<\alpha$.
\end{proof}
 % -------------- %
\begin{corollary}\label{Corol none}
Let $\theta=\frac{a}{b}$. If
 % -------------- %
\begin{equation}\label{none}
\hspace{-4em}\frac{2m\pi}{a}\tan\left(\frac{\pi}{2}(m\theta^{-1}-\lfloor m\theta^{-1}\rfloor)\right)\geq\alpha \quad\wedge\quad \frac{2m\pi}{b}\tan\left(\frac{\pi}{2}(m\theta-\lfloor m\theta\rfloor)\right)\geq\alpha
\end{equation}
 % -------------- %
holds for all $m\in\mathbb{N}$, then there are no gaps in the spectrum.
\end{corollary}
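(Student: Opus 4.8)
The plan is to derive this directly from Proposition~\ref{Prop general}, of which it is an immediate logical consequence. By the first bullet of that proposition, every gap in the spectrum must have its left endpoint at $k^2=\left(\frac{m\pi}{a}\right)^2$ or at $k^2=\left(\frac{m\pi}{b}\right)^2$ for some $m\in\mathbb{N}$; there are no other possible gap locations. Hence it suffices to rule out a gap at each of these candidate points.

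Next I would invoke the second and third bullets of Proposition~\ref{Prop general}, which give the precise criteria: a gap with left endpoint $\left(\frac{m\pi}{a}\right)^2$ is present if and only if the inequality~(\ref{gap F a}) holds, and a gap with left endpoint $\left(\frac{m\pi}{b}\right)^2$ is present if and only if~(\ref{gap F b}) holds. The hypothesis~(\ref{none}) is exactly the simultaneous negation of~(\ref{gap F a}) and~(\ref{gap F b}) for every $m\in\mathbb{N}$: the first conjunct of~(\ref{none}) says that~(\ref{gap F a}) fails for all $m$, and the second conjunct says that~(\ref{gap F b}) fails for all $m$. Therefore no gap can occur at any of the admissible endpoints, and by the first bullet no gap occurs anywhere, i.e.\ $\sigma(H)=[0,\infty)$.

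There is essentially no obstacle here; the only point deserving a word of care is that the argument relies on Proposition~\ref{Prop general} covering \emph{all} potential gap locations (which it does, via the monotonicity-and-blow-up analysis of $F$ in its proof), so that excluding gaps at the points $\left(\frac{m\pi}{a}\right)^2$ and $\left(\frac{m\pi}{b}\right)^2$ indeed excludes them globally. One may also note that since we are in the regime $\alpha>0$, the spectrum is contained in $[0,\infty)$, so no separate discussion of a negative-energy gap is needed. The proof is thus a one-line contrapositive of Proposition~\ref{Prop general}.
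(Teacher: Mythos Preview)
Your proposal is correct and matches the paper's approach exactly: the paper states this as a corollary of Proposition~\ref{Prop general} without giving a separate proof, since it is the immediate contrapositive you describe. The only minor over-reach is your final assertion that $\sigma(H)=[0,\infty)$; the corollary as stated only claims ``no gaps in the spectrum,'' and the precise value of $\inf\sigma(H)$ is not addressed (though for $\alpha>0$ it is indeed nonnegative).
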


\medskip

Next we relate the number of gaps to values of the function $\upsilon(\theta)$ introduced above.
 % -------------- %
\begin{proposition}\label{Prop finite}
Let $\theta=\frac{a}{b}$. If
 % -------------- %
\begin{equation}\label{tan}
\alpha<\pi^2\cdot\min\left\{\frac{\upsilon(\theta)}{b},\frac{\upsilon(\theta^{-1})}{a}\right\},
\end{equation}
then the number of gaps in the spectrum is at most finite.
\end{proposition}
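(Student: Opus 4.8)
The plan is to count, via Proposition~\ref{Prop general}, those indices $m\in\mathbb{N}$ that can give rise to a gap, and to show that under the hypothesis~(\ref{tan}) there are only finitely many of them. By Proposition~\ref{Prop general} every gap has its left endpoint at some $k^2=\left(\frac{m\pi}{a}\right)^2$, in which case~(\ref{gap F a}) holds, or at some $k^2=\left(\frac{m\pi}{b}\right)^2$, in which case~(\ref{gap F b}) holds; hence it is enough to prove that each of the inequalities~(\ref{gap F a}) and~(\ref{gap F b}) is satisfied by only finitely many $m$.

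First I would linearise the tangent. Since $\tan x\ge x$ on $[0,\frac{\pi}{2})$ and the arguments $\frac{\pi}{2}(m\theta^{-1}-\lfloor m\theta^{-1}\rfloor)$ and $\frac{\pi}{2}(m\theta-\lfloor m\theta\rfloor)$ always lie in $[0,\frac{\pi}{2})$, condition~(\ref{gap F a}) implies
\[
m\bigl(m\theta^{-1}-\lfloor m\theta^{-1}\rfloor\bigr)<\frac{\alpha a}{\pi^2},
\]
and likewise~(\ref{gap F b}) implies $m(m\theta-\lfloor m\theta\rfloor)<\frac{\alpha b}{\pi^2}$. Thus an index $m$ producing a gap at $\left(\frac{m\pi}{a}\right)^2$ must satisfy the first of these linear inequalities, and one producing a gap at $\left(\frac{m\pi}{b}\right)^2$ must satisfy the second.

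Next I would invoke the characterisation of $\upsilon$ given in~(\ref{upsilon i}): applied to $\theta$ it says that for every $c<\upsilon(\theta)$ the inequality $m(m\theta-\lfloor m\theta\rfloor)<c$ has only finitely many solutions $m\in\mathbb{N}$, and applied to $\theta^{-1}$ it gives the analogous statement for $m(m\theta^{-1}-\lfloor m\theta^{-1}\rfloor)<c$ when $c<\upsilon(\theta^{-1})$. But~(\ref{tan}) is exactly the requirement that $\frac{\alpha a}{\pi^2}<\upsilon(\theta^{-1})$ and $\frac{\alpha b}{\pi^2}<\upsilon(\theta)$, so both linear inequalities obtained above hold for only finitely many $m$. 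Hence~(\ref{gap F a}) and~(\ref{gap F b}) each hold for only finitely many $m$, and the spectrum therefore has at most finitely many gaps.

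There is essentially no analytic difficulty in this argument once the elementary bound $\tan x\ge x$ is used in the right direction; the only points requiring care are bookkeeping---pairing the ratio $\theta^{-1}$ (hence the denominator $a$) with the gaps at $k=\frac{m\pi}{a}$, and $\theta$ (hence $b$) with those at $k=\frac{m\pi}{b}$---and reading the infimum in the definition of $\upsilon$ correctly, namely that for $c$ strictly below $\upsilon(\theta)$ the set of $m$ with $m(m\theta-\lfloor m\theta\rfloor)<c$ is finite. Note also that if $\theta$ is not badly approximable then $\upsilon(\theta)=\upsilon(\theta^{-1})=0$ by~(\ref{upsilon iii}), so~(\ref{tan}) cannot hold for $\alpha>0$ and the assertion is vacuous.
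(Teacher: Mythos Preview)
Your proof is correct and follows essentially the same approach as the paper's: both linearise the tangent via $\tan x\ge x$ to reduce conditions~(\ref{gap F a}) and~(\ref{gap F b}) to the inequalities $m(m\theta^{-1}-\lfloor m\theta^{-1}\rfloor)<\frac{\alpha a}{\pi^2}$ and $m(m\theta-\lfloor m\theta\rfloor)<\frac{\alpha b}{\pi^2}$, then invoke~(\ref{upsilon i}) to conclude that each has only finitely many solutions under the hypothesis~(\ref{tan}). Your write-up is in fact slightly tighter than the paper's, and the closing remark about the vacuous case when $\theta$ is not badly approximable is a helpful addition.
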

 % -------------- %
\begin{proof}
The expression at the left-hand side of condition (\ref{gap F b}) satisfies
 % -------------- %
$$
\frac{2m\pi}{b}\tan\left(\frac{\pi}{2}\left(\theta m-\lfloor\theta m\rfloor\right)\right)>
\frac{2m\pi}{b}\cdot\frac{\pi}{2}\left(\theta m-\lfloor\theta m\rfloor\right)=\frac{m\pi^2}{b}\cdot\left(\theta m-\lfloor\theta m\rfloor\right).
$$
 % -------------- %
At the same time, (\ref{upsilon i}) implies that for every $c<\upsilon(\theta)$, the inequality
 % -------------- %
$$
\theta m-\lfloor\theta m\rfloor\geq\frac{c}{m}
$$
 % -------------- %
holds except possibly for finitely many values of $m$. Therefore, if $c<\upsilon(\theta)$, we have
 % -------------- %
$$
\frac{2m\pi}{b}\tan\left(\frac{\pi}{2}\left(\theta m-\lfloor\theta m\rfloor\right)\right)>
\frac{m\pi^2}{b}\cdot\frac{c}{m}=\frac{\pi^2}{b}\,c
 % -------------- %
$$
for all $m$ with at most finitely many exceptions. To sum up, if
 % -------------- %
\begin{equation}\label{m fingap b}
\left(\exists c<\upsilon(\theta)\right)\left(\alpha\leq\frac{\pi^2}{b}\,c\right),
\end{equation}
 % -------------- %
the gap condition (\ref{gap F b}) is satisfied for at most finitely many values $m$ only; note that condition (\ref{m fingap b}) is equivalent to
 % -------------- %
\begin{equation}\label{m fingap b sim}
\alpha<\frac{\pi^2}{b}\,\upsilon(\theta)\,.
\end{equation}
 % -------------- %
One can repeat the same considerations for the gap condition (\ref{gap F a}). We get
 % -------------- %
$$
\frac{2m\pi}{a}\tan\left(\frac{\pi}{2}\left(\theta^{-1} m-\lfloor\theta^{-1} m\rfloor\right)\right)>
\frac{2m\pi}{a}\cdot\frac{\pi}{2}\left(\theta^{-1} m-\lfloor\theta^{-1} m\rfloor\right)=\frac{m\pi^2}{a}\left(\theta^{-1} m-\lfloor\theta^{-1} m\rfloor\right).
$$
 % -------------- %
For every $c<\upsilon(\theta^{-1})$ we have in view of (\ref{upsilon i})
 % -------------- %
$$
\theta^{-1} m-\lfloor\theta^{-1} m\rfloor\geq\frac{c}{m}
$$
 % -------------- %
except possibly for finitely many values of $m$. Hence
 % -------------- %
$$
\frac{2m\pi}{a}\tan\left(\frac{\pi}{2}\left(\theta^{-1} m-\lfloor\theta^{-1} m\rfloor\right)\right)>
\frac{m\pi^2}{a}\cdot\frac{c}{m}=\frac{\pi^2}{a}\,c
$$
 % -------------- %
holds for all $m$ with possibly finitely many exceptions. To sum up, if
 % -------------- %
\begin{equation}\label{m fingap a}
\left(\exists c<\upsilon(\theta^{-1})\right)\left(\alpha\leq\frac{\pi^2}{a}\,c\right),
\end{equation}
 % -------------- %
then the gap condition (\ref{gap F a}) is satisfied for at most finitely many values $m$ only, and we can again simplify (\ref{m fingap a}) to the form
 % -------------- %
\begin{equation}\label{m fingap a sim}
\alpha<\frac{\pi^2}{a}\,\upsilon(\theta^{-1})\,.
\end{equation}
 % -------------- %
The assumption (\ref{tan}) guarantees the validity of both (\ref{m fingap b sim}) and (\ref{m fingap a sim}), and thus implies the finiteness of the total number of gaps with regard to Proposition~\ref{Prop general}.
\end{proof}

\medskip

To see that the condition on the number of gaps stated in Proposition~\ref{Prop finite} is sharp, consider now the opposite situation.
 % -------------- %
\begin{proposition}\label{Prop infinite}
Let $\theta=\frac{a}{b}$. For all $\alpha$ satisfying
 % -------------- %
\begin{equation*}
\alpha>\pi^2\cdot\min\left\{\frac{\upsilon(\theta)}{b},\frac{\upsilon(\theta^{-1})}{a}\right\}
\end{equation*}
 % -------------- %
the spectrum has infinitely many gaps.
\end{proposition}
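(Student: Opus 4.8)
The plan is to reverse the chain of estimates used in the proof of Proposition~\ref{Prop finite}, showing that under the stated lower bound on $\alpha$ one of the two families of gap conditions of Proposition~\ref{Prop general} is satisfied for infinitely many $m$. Without loss of generality I would assume that the minimum on the right-hand side is attained by $\upsilon(\theta)/b$, so that the hypothesis reads $\alpha>\pi^2\upsilon(\theta)/b$; the complementary case, with $\upsilon(\theta^{-1})/a$ in place of $\upsilon(\theta)/b$, $a$ in place of $b$, and the gap condition~(\ref{gap F a}) in place of~(\ref{gap F b}), is handled by a verbatim repetition of the argument.

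First I would fix the auxiliary constants in the right order: choose a real number $c$ with $\upsilon(\theta)<c<\alpha b/\pi^2$, which is possible precisely because $\alpha>\pi^2\upsilon(\theta)/b$, and then a positive $\epsilon$ small enough that $(1+\epsilon)\pi^2 c/b<\alpha$. Since $c>\upsilon(\theta)$, the characterization~(\ref{upsilon i}) yields infinitely many $m\in\mathbb{N}$ with $m\bigl(m\theta-\lfloor m\theta\rfloor\bigr)<c$, i.e.\ with $m\theta-\lfloor m\theta\rfloor<c/m$. For any such $m$ the number $x_m:=\frac{\pi}{2}(m\theta-\lfloor m\theta\rfloor)$ lies in $[0,\pi c/(2m))$, so $x_m\to0$ along the selected sequence, and since $\tan x/x\to1$ as $x\to0^+$ we get $\tan x_m\le(1+\epsilon)x_m$ for all sufficiently large $m$ among those selected.

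Combining the two estimates, for all sufficiently large $m$ in that infinite set one obtains
$$
\frac{2m\pi}{b}\tan\!\left(\frac{\pi}{2}\bigl(m\theta-\lfloor m\theta\rfloor\bigr)\right)\le\frac{2m\pi}{b}(1+\epsilon)x_m=(1+\epsilon)\frac{m\pi^2}{b}\bigl(m\theta-\lfloor m\theta\rfloor\bigr)<(1+\epsilon)\frac{\pi^2 c}{b}<\alpha,
$$
that is, the gap condition~(\ref{gap F b}) holds for infinitely many $m$. By Proposition~\ref{Prop general} each such $m$ produces a genuine gap with left endpoint $k^2=(m\pi/b)^2$, and distinct values of $m$ give distinct endpoints; hence the spectrum has infinitely many gaps. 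In the complementary case the same reasoning applied to $\theta^{-1}$ and condition~(\ref{gap F a}) produces infinitely many gaps with left endpoints $(m\pi/a)^2$, completing the proof.

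This is essentially a routine converse to Proposition~\ref{Prop finite}, and I do not expect a real obstacle; the only point that needs a little care is the order in which $c$ and $\epsilon$ are chosen, so that the linearization error of the tangent (absorbed into $\epsilon$) does not consume the margin $\alpha-\pi^2\upsilon(\theta)/b$ available in the hypothesis.
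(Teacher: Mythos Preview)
Your proposal is correct and follows essentially the same route as the paper: both arguments pick a constant $c>\upsilon(\theta)$ to harvest infinitely many $m$ with $m(m\theta-\lfloor m\theta\rfloor)<c$ from~(\ref{upsilon i}), linearize the tangent near zero, and feed the result into the gap condition~(\ref{gap F b}) of Proposition~\ref{Prop general}. The only cosmetic difference is the bookkeeping of constants --- the paper fixes the specific value $c=\sqrt{b\alpha\upsilon(\theta)/\pi^2}$ and the linearization factor $\xi=c/\upsilon(\theta)$ so that the final bound lands exactly on $\alpha$, whereas you pick $c$ freely in $(\upsilon(\theta),\alpha b/\pi^2)$ and absorb the tangent error into a separate $\epsilon$; your version has the minor advantage of going through verbatim when $\upsilon(\theta)=0$.
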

 % -------------- %
\begin{proof}
If $\min\left\{\frac{\upsilon(\theta)}{b},\frac{\upsilon(\theta^{-1})}{a}\right\}=\frac{\upsilon(\theta)}{b}$, we set $c=\sqrt{\frac{b\cdot\alpha\cdot\upsilon(\theta)}{\pi^2}}$. Since $\alpha>\pi^2\cdot\frac{\upsilon(\theta)}{b}$, we have $c>\upsilon(\theta)$. For such $c$ and for any $\delta>0$, equation (\ref{upsilon i}) guarantees that
 % -------------- %
\begin{equation}\label{m c delta b}
\left(\exists_\infty m\in\mathbb{N}\right)\left(m\theta-\lfloor m\theta\rfloor<\frac{c}{m}<\frac{2}{\pi}\delta\right),
\end{equation}
 % -------------- %
where the second inequality can be satisfied by taking values $m$ large enough.
Now we use the general fact
 % -------------- %
\begin{equation}\label{tan delta}
(\forall\xi>1)(\exists \delta>0)(\forall x\in(0,\delta))(\tan x<\xi x)\,.
\end{equation}
 % -------------- %
Taking $\xi=\frac{c}{\upsilon(\theta)}$ and the corresponding $\delta$, we use (\ref{m c delta b}) to estimate the left-hand side of the gap condition (\ref{gap F b}) as follows:
 % -------------- %
\begin{equation}\label{gap delta b}
\hspace{-3em} \frac{2m\pi}{b}\tan\left(\frac{\pi}{2}(m\theta-\lfloor m\theta\rfloor)\right)<\frac{2m\pi}{b}\cdot\frac{c}{\upsilon(\theta)}\cdot\frac{\pi}{2}(m\theta-\lfloor m\theta\rfloor)=\frac{\pi^2c^2}{b\cdot\upsilon(\theta)}\,.
\end{equation}
 % -------------- %
Since $\frac{\pi^2c^2}{b\cdot\upsilon(\theta)}=\alpha$, we have established the existence of infinitely many $m\in\mathbb{N}$
satisfying the gap condition (\ref{gap F b}).
Consequently, the total number of spectral gaps is infinite due to Proposition~\ref{Prop general}.

If $\min\left\{\frac{\upsilon(\theta)}{b},\frac{\upsilon(\theta^{-1})}{a}\right\}=\frac{\upsilon(\theta^{-1})}{a}$, we set $c=\sqrt{\frac{a\cdot\alpha\cdot\upsilon(\theta^{-1})}{\pi^2}}$ and proceed similarly as above. Using function $\upsilon(\theta^{-1})$, we establish the existence of infinitely many $m\in\mathbb{N}$ satisfying the gap condition (\ref{gap F a}).
\end{proof}

\medskip

As an immediate consequence of Propositions \ref{Prop general} and \ref{Prop finite}, we obtain a sufficient condition for the graph in question to have the Bethe--Sommerfeld property:
 % -------------- %
\begin{theorem}\label{Thm positive}
Let $\theta=\frac{a}{b}$ and
 % -------------- %
\begin{equation}\label{gamma}
\hspace{-6em}\gamma:=\min\left\{\inf_{m\in\mathbb{N}}\left\{\frac{2m\pi}{a}\tan\left(\frac{\pi}{2}(m\theta^{-1}-\lfloor m\theta^{-1}\rfloor)\right)\right\},\inf_{m\in\mathbb{N}}\left\{\frac{2m\pi}{b}\tan\left(\frac{\pi}{2}(m\theta-\lfloor m\theta\rfloor)\right)\right\}\right\}.
\end{equation}
 % -------------- %
If the coupling constant $\alpha$ satifies
 % -------------- %
\begin{equation}\label{alpha nonzero finite}
\gamma<\alpha<\pi^2\cdot\min\left\{\frac{\upsilon(\theta)}{b},\frac{\upsilon(\theta^{-1})}{a}\right\},
\end{equation}
 % -------------- %
then there is a nonzero and finite number of gaps in the spectrum.
\end{theorem}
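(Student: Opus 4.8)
The assertion follows by combining Propositions~\ref{Prop general} and \ref{Prop finite}, so the plan is short. First I would dispose of the finiteness part: the right-hand inequality in (\ref{alpha nonzero finite}) is exactly the hypothesis (\ref{tan}) of Proposition~\ref{Prop finite}, hence the spectrum contains at most finitely many gaps, and nothing further is required there.

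For the claim that at least one gap is present, I would unwind the definition~(\ref{gamma}) of $\gamma$. Writing $\gamma=\min\{\gamma_a,\gamma_b\}$ with $\gamma_a:=\inf_{m\in\mathbb{N}}\frac{2m\pi}{a}\tan\!\big(\frac{\pi}{2}(m\theta^{-1}-\lfloor m\theta^{-1}\rfloor)\big)$ and $\gamma_b:=\inf_{m\in\mathbb{N}}\frac{2m\pi}{b}\tan\!\big(\frac{\pi}{2}(m\theta-\lfloor m\theta\rfloor)\big)$, the left-hand inequality $\gamma<\alpha$ in (\ref{alpha nonzero finite}) forces $\gamma_a<\alpha$ or $\gamma_b<\alpha$. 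In the first case the definition of the infimum produces an $m_0\in\mathbb{N}$ with $\frac{2m_0\pi}{a}\tan\!\big(\frac{\pi}{2}(m_0\theta^{-1}-\lfloor m_0\theta^{-1}\rfloor)\big)<\alpha$, which is precisely the gap condition~(\ref{gap F a}) for $m=m_0$; by Proposition~\ref{Prop general} a gap with left endpoint $k^2=\big(\frac{m_0\pi}{a}\big)^2$ is then present. In the second case one obtains analogously an $m_0$ satisfying~(\ref{gap F b}), and Proposition~\ref{Prop general} again supplies a gap, now with left endpoint $k^2=\big(\frac{m_0\pi}{b}\big)^2$. Either way the spectrum has a nonempty gap, and together with the finiteness established above this shows that the number of gaps is nonzero and finite.

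There is essentially no analytic obstacle here, the substantive work having already been done in Propositions~\ref{Prop general} and \ref{Prop finite}; the only point deserving care is that $\gamma$ is an infimum rather than a minimum, so the strict inequality $\gamma<\alpha$ is exactly what is needed to extract a genuine witness $m_0$ fulfilling one of the gap conditions. A separate, and more delicate, matter---not part of the present statement---is to verify that the admissible window $\gamma<\alpha<\pi^2\min\{\upsilon(\theta)/b,\upsilon(\theta^{-1})/a\}$ is nonempty; this will have to be secured later for suitable badly approximable ratios $\theta$ by controlling $\gamma$ explicitly.
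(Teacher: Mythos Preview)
Your proposal is correct and matches the paper's approach exactly: the paper states the theorem as ``an immediate consequence of Propositions~\ref{Prop general} and~\ref{Prop finite}'' without further elaboration, and you have simply spelled out that immediate consequence. Your remark that the strict inequality $\gamma<\alpha$ is what allows one to extract a witness $m_0$ from the infimum, and your observation that the nonemptiness of the window is deferred to later sections, are both apt.
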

 % -------------- %
In Section~\ref{s:sign} we will show that this claim is nonempty by providing an explicit construction of numbers $\theta$ such that the condition (\ref{alpha nonzero finite}) is satisfied for some $\alpha$.

\begin{remark}\label{estimate Markov}
Using equation (\ref{upsilon iii}), we can estimate the quantity $\min\left\{\frac{\upsilon(\theta)}{b},\frac{\upsilon(\theta^{-1})}{a}\right\}$ in terms of the Markov constant of $\theta$; namely:
 % -------------- %
$$
\frac{\mu(\theta)}{\max\{a,b\}}\leq\min\left\{\frac{\upsilon(\theta)}{b},\frac{\upsilon(\theta^{-1})}{a}\right\}\leq\frac{\mu(\theta)}{\min\{a,b\}}\,.
$$
 % -------------- %
Propositions \ref{Prop finite}, \ref{Prop infinite} and Theorem~\ref{Thm positive} can be thus formulated in a weaker way as follows:
 % -------------- %
\begin{itemize}
\item If $\alpha>\frac{\pi^2\mu(\theta)}{\min\{a,b\}}$, the spectrum has infinitely many gaps.
\item If $\alpha<\frac{\pi^2\mu(\theta)}{\max\{a,b\}}$, the spectrum has at most finitely many gaps.
\item If $\gamma<\alpha<\frac{\pi^2\mu(\theta)}{\max\{a,b\}}$ for $\gamma$ given by (\ref{gamma}), there is a nonzero and finite number of gaps in the spectrum.
\end{itemize}
 % -------------- %
\end{remark}

\subsection{The case $\alpha<0$}

In this situation, the gap condition is of the form $G(k)<|\alpha|$, where
 % -------------- %
$$
G(k):=2k\left(\cot\left(\frac{ka}{2}-\frac{\pi}{2}\left\lfloor\frac{ka}{\pi}\right\rfloor\right)
+\cot\left(\frac{kb}{2}-\frac{\pi}{2}\left\lfloor\frac{kb}{\pi}\right\rfloor\right)\right).
$$
 % -------------- %
Using the identity
 % -------------- %
$$
\cot\left(\frac{\pi}{2}(x-\lfloor x\rfloor)\right)
=\tan\left(\frac{\pi}{2}(\lceil x\rceil-x)\right) \;\quad\mathrm{for\;all}\;\; x\notin\mathbb{Z}\,,
$$
 % -------------- %
we can rewrite $G(k)$ for all $k$ except for the points of discontinuity in the form
 % -------------- %
$$
G(k)=2k\left(\tan\left(\frac{\pi}{2}\left(\left\lceil\frac{ka}{\pi}\right\rceil-\frac{ka}{\pi}\right)\right)
+\tan\left(\frac{\pi}{2}\left(\left\lceil\frac{kb}{\pi}\right\rceil-\frac{kb}{\pi}\right)\right)\right)\,,
$$
 % -------------- %
which allows to write the condition in the form more similar to the case $\alpha>0$, the main difference being the swap between the floor and ceiling functions in the arguments. Since the reasoning is completely analogous to the previous case, we limit ourselves to presenting the results omitting the proofs.
 % -------------- %
\begin{proposition}\label{Prop general negative}
Let $\alpha<0$ and $\theta=\frac{a}{b}$, then the following claims are valid:
 % -------------- %
\begin{itemize}
\item Every gap in the spectrum has the right (upper) endpoint equal to $k^2=\left(\frac{m\pi}{a}\right)^2$ or $k^2=\left(\frac{m\pi}{b}\right)^2$ for some $m\in\mathbb{N}$.
\item A gap with the right endpoint at $k^2=\left(\frac{m\pi}{a}\right)^2$ is present if and only if
 % -------------- %
\begin{equation}\label{gap G a}
\frac{2m\pi}{a}\tan\left(\frac{\pi}{2}\left(\lceil m\theta^{-1}\rceil-m\theta^{-1}\right)\right)<|\alpha|.
\end{equation}
 % -------------- %
\item A gap with the right endpoint at $k^2=\left(\frac{m\pi}{b}\right)^2$ is present if and only if
 % -------------- %
\begin{equation}\label{gap G b}
\frac{2m\pi}{b}\tan\left(\frac{\pi}{2}\left(\lceil m\theta\rceil-m\theta\right)\right)<|\alpha|.
\end{equation}
 % -------------- %
\item In particular, if
 % -------------- %
\begin{equation}\label{none negative}
\hspace{-6.2em}\frac{2m\pi}{a}\tan\left(\frac{\pi}{2}\left(\lceil m\theta^{-1}\rceil-m\theta^{-1}\right)\right)\geq|\alpha| \;\;\wedge\;\; \frac{2m\pi}{b}\tan\left(\frac{\pi}{2}\left(\lceil m\theta\rceil-m\theta\right)\right)\geq|\alpha|
\end{equation}
 % -------------- %
for all $m\in\mathbb{N}$, then there are no gaps in the spectrum.
\end{itemize}
 % -------------- %
\end{proposition}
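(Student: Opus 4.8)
The plan is to run the argument of Proposition~\ref{Prop general} step for step, with the tangents there replaced by cotangents. Concretely I would work with $G$ written as $G(k)=2k\bigl(\cot\beta_a(k)+\cot\beta_b(k)\bigr)$, where $\beta_a(k)=\frac{ka}{2}-\frac{\pi}{2}\lfloor ka/\pi\rfloor$ and $\beta_b(k)$ is defined analogously with $b$ in place of $a$. First I would note that $k\mapsto G(k)$ is real-analytic away from the points $k=\frac{m\pi}{a}$ and $k=\frac{m\pi}{b}$, $m\in\mathbb{N}$, where one of the floor functions jumps, and that on each interval of continuity $\beta_a(k),\beta_b(k)\in(0,\pi/2)$.

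The one place where the argument genuinely departs from the case $\alpha>0$ is the monotonicity of the auxiliary function: whereas $F$ was $2k$ times an increasing non-negative tangent factor, hence manifestly increasing, here $G$ is $2k$ times a \emph{decreasing} positive cotangent factor. I would nonetheless show that $G$ is strictly decreasing on every interval of continuity by the elementary computation
$$
\frac{\mathrm{d}}{\mathrm{d}k}\bigl(2k\cot\beta_a(k)\bigr)=2\cot\beta_a(k)-ka\,\csc^2\beta_a(k)=\frac{\sin\bigl(2\beta_a(k)\bigr)-ka}{\sin^2\beta_a(k)}<0\,,
$$
the inequality following from $ka=2\beta_a(k)+\pi\lfloor ka/\pi\rfloor\geq 2\beta_a(k)>\sin\bigl(2\beta_a(k)\bigr)$, which holds because $\beta_a(k)\in(0,\pi/2)$; the $b$-term is handled the same way, so $G'<0$ throughout each interval of continuity. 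Next I would record the one-sided limits at a discontinuity, say at $k=\frac{m\pi}{a}$: approaching from the right, $\beta_a(k)\to 0^+$, whence $\lim_{k\searrow m\pi/a}G(k)=+\infty$ (and likewise at the $b$-points), while approaching from the left, $\beta_a(k)\to(\pi/2)^-$, so that the left limit is \emph{finite}. Hence on each interval of continuity $G$ decreases strictly from $+\infty$ to a finite value, so the sublevel set $\{k:G(k)<|\alpha|\}$ meets that interval either in the empty set or in a single subinterval abutting its right endpoint. This is precisely the first claim; it also accounts for a gap running down to negative energies, whose upper $k$-endpoint is then the first discontinuity $k=\pi/\max\{a,b\}$ at which $G$ jumps back to $+\infty$.

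For the presence conditions I would evaluate the finite left limit at $k=\frac{m\pi}{a}$ explicitly. There $\cot\beta_a(k)\to\cot(\pi/2)=0$, while $kb/\pi=m\theta^{-1}$ gives $\beta_b(k)\to\frac{\pi}{2}\bigl(m\theta^{-1}-\lfloor m\theta^{-1}\rfloor\bigr)$, and the cotangent--tangent identity $\cot\bigl(\frac{\pi}{2}(x-\lfloor x\rfloor)\bigr)=\tan\bigl(\frac{\pi}{2}(\lceil x\rceil-x)\bigr)$ recalled above then yields
$$
\lim_{k\nearrow m\pi/a}G(k)=\frac{2m\pi}{a}\tan\!\left(\frac{\pi}{2}\bigl(\lceil m\theta^{-1}\rceil-m\theta^{-1}\bigr)\right).
$$
By the previous paragraph a gap with right endpoint $k^2=(m\pi/a)^2$ is present exactly when this value is smaller than $|\alpha|$, which is condition~(\ref{gap G a}); interchanging the roles of $a$ and $b$, equivalently $\theta\leftrightarrow\theta^{-1}$, gives~(\ref{gap G b}). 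The last bullet is then immediate: if neither~(\ref{gap G a}) nor~(\ref{gap G b}) holds for any $m$, then on no interval of continuity does $G$ drop below $|\alpha|$, so the spectrum has no gaps.

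The main---and essentially the only---obstacle is the monotonicity of $G$, which, unlike that of $F$, is not automatic; once the estimate $\sin\bigl(2\beta_a(k)\bigr)<ka$ on intervals of continuity is in hand, the rest is a routine translation of the $\alpha>0$ analysis. A minor caveat is that for \emph{rational} $\theta$ the numbers $m\theta$ and $m\theta^{-1}$ may be integers, at which both floor functions jump simultaneously and the cotangent--tangent identity has to be read as a limit; since the badly approximable ratios relevant to the Bethe--Sommerfeld construction are irrational, this degenerate configuration does not occur.
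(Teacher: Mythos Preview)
Your proof is correct and follows the same approach the paper implicitly intends (it explicitly says the reasoning is ``completely analogous'' to Proposition~\ref{Prop general} and omits the argument). The one point you rightly single out---that the strict monotonicity of $G$ is \emph{not} automatic because $2k$ increases while $\cot\beta_a+\cot\beta_b$ decreases---is a detail the paper glosses over, and your derivative estimate $\sin(2\beta_a(k))<2\beta_a(k)\le ka$ disposes of it cleanly.
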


\medskip

\begin{proposition}\label{Prop infinite negative}
Let $\alpha<0$ and $\theta=\frac{a}{b}$. If
 % -------------- %
\begin{equation*}
|\alpha|<\pi^2\cdot\min\left\{\frac{\upsilon(\theta^{-1})}{b},\frac{\upsilon(\theta)}{a}\right\},
\end{equation*}
the number of gaps in the spectrum is at most finite. On the other hand, for $|\alpha|$ greater than the right-hand side of the above inequality, there are infinitely many spectral gaps.
 % -------------- %
\end{proposition}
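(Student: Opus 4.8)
The plan is to transcribe, line for line, the proofs of Propositions~\ref{Prop finite} and \ref{Prop infinite}, using the fact that after the floor/ceiling swap the gap conditions (\ref{gap G a}) and (\ref{gap G b}) have exactly the structure of (\ref{gap F b}) and (\ref{gap F a}) but with the one‑sided Markov quantities $\upsilon(\theta)$, $\upsilon(\theta^{-1})$ and the denominators $a$, $b$ interchanged in the appropriate way. The only preparatory remark is that for irrational $\theta$ one has $\lceil m\theta\rceil-m\theta\in(0,1)$ and $\lceil m\theta^{-1}\rceil-m\theta^{-1}\in(0,1)$ for every $m\in\mathbb{N}$, so the tangents in (\ref{gap G a})--(\ref{gap G b}) are taken of arguments in $\left(0,\frac{\pi}{2}\right)$ and are therefore positive, well defined, and bounded below by their argument. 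By Proposition~\ref{Prop general negative}, counting gaps amounts to counting the integers $m$ solving (\ref{gap G a}) or (\ref{gap G b}), so it suffices to control the cardinality of those two solution sets.

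For the finiteness assertion I would use $\tan x>x$ to bound the left‑hand side of (\ref{gap G b}) below by $\frac{m\pi^2}{b}(\lceil m\theta\rceil-m\theta)$, and then invoke the representation (\ref{upsilon ii}) of $\upsilon(\theta^{-1})$: for every $c<\upsilon(\theta^{-1})$ the inequality $m(\lceil m\theta\rceil-m\theta)<c$ has only finitely many solutions, hence $\lceil m\theta\rceil-m\theta\geq c/m$ for all but finitely many $m$, and the left‑hand side of (\ref{gap G b}) then exceeds $\pi^2 c/b$ outside a finite set. Thus $|\alpha|<\frac{\pi^2}{b}\upsilon(\theta^{-1})$ forces (\ref{gap G b}) to hold for finitely many $m$ only. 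The same computation applied to (\ref{gap G a}), now reading (\ref{upsilon ii}) with $\theta$ replaced by $\theta^{-1}$ so that the relevant quantity is $m(\lceil m\theta^{-1}\rceil-m\theta^{-1})$ and $\upsilon\bigl((\theta^{-1})^{-1}\bigr)=\upsilon(\theta)$, shows that $|\alpha|<\frac{\pi^2}{a}\upsilon(\theta)$ makes (\ref{gap G a}) solvable only finitely often. Combining the two bounds with Proposition~\ref{Prop general negative} gives finiteness whenever $|\alpha|<\pi^2\min\left\{\frac{\upsilon(\theta^{-1})}{b},\frac{\upsilon(\theta)}{a}\right\}$.

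For the converse, assume $|\alpha|>\pi^2\min\left\{\frac{\upsilon(\theta^{-1})}{b},\frac{\upsilon(\theta)}{a}\right\}$ and suppose first that the minimum equals $\frac{\upsilon(\theta^{-1})}{b}$. Put $c=\sqrt{b|\alpha|\upsilon(\theta^{-1})/\pi^2}$, so that $c>\upsilon(\theta^{-1})$; by (\ref{upsilon ii}) there are infinitely many $m$ with $\lceil m\theta\rceil-m\theta<c/m$, and taking such $m$ large we may also place $\frac{\pi}{2}(\lceil m\theta\rceil-m\theta)$ inside the interval $(0,\delta)$ furnished by (\ref{tan delta}) for $\xi=c/\upsilon(\theta^{-1})>1$. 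Then the left‑hand side of (\ref{gap G b}) is at most $\frac{2m\pi}{b}\cdot\xi\cdot\frac{\pi}{2}(\lceil m\theta\rceil-m\theta)<\frac{\pi^2\xi c}{b}=\frac{\pi^2 c^2}{b\,\upsilon(\theta^{-1})}=|\alpha|$, so (\ref{gap G b}) holds for infinitely many $m$; if instead the minimum is $\frac{\upsilon(\theta)}{a}$ one runs the identical argument on (\ref{gap G a}) with $c=\sqrt{a|\alpha|\upsilon(\theta)/\pi^2}$. In either case Proposition~\ref{Prop general negative} yields infinitely many gaps.

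I do not anticipate a genuine obstacle here, since the argument is a faithful copy of the $\alpha>0$ case; the one point deserving care is the bookkeeping, namely making sure that the ceiling‑type quantities in (\ref{gap G a})--(\ref{gap G b}) are matched with the correct one of $\upsilon(\theta)$, $\upsilon(\theta^{-1})$ (through the two readings of (\ref{upsilon ii}), once with argument $\theta$ and once with argument $\theta^{-1}$) and with the correct denominator, so that the threshold comes out as $\pi^2\min\left\{\frac{\upsilon(\theta^{-1})}{b},\frac{\upsilon(\theta)}{a}\right\}$ rather than its mirror image. As in Propositions~\ref{Prop finite} and \ref{Prop infinite}, the borderline value $|\alpha|=\pi^2\min\left\{\frac{\upsilon(\theta^{-1})}{b},\frac{\upsilon(\theta)}{a}\right\}$ is left uncovered, which is harmless for the present purposes.
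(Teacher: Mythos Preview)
Your proposal is correct and is precisely the approach the paper intends: the paper omits the proof entirely, stating that ``the reasoning is completely analogous to the previous case,'' and your write-up faithfully transcribes the arguments of Propositions~\ref{Prop finite} and~\ref{Prop infinite} with the floor/ceiling swap, correctly pairing $m(\lceil m\theta\rceil-m\theta)$ with $\upsilon(\theta^{-1})$ via (\ref{upsilon ii}) and $m(\lceil m\theta^{-1}\rceil-m\theta^{-1})$ with $\upsilon(\theta)$ to obtain the swapped threshold $\pi^2\min\{\upsilon(\theta^{-1})/b,\upsilon(\theta)/a\}$. Your closing note about the borderline case being left uncovered is also consistent with the $\alpha>0$ treatment.
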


Note that in case of attractive potential $\alpha<0$, the bound on
$|\alpha|$ in Proposition~\ref{Prop infinite negative} (i.e.,
$\min\{\upsilon(\theta^{-1})/b,\upsilon(\theta)/a\}$) is different
from the bound in case of a repulsive potential, which is equal to
$\min\{\upsilon(\theta^{-1})/a,\upsilon(\theta)/b\}$ (cf.\
Propositions \ref{Prop finite} and \ref{Prop infinite}). However,
the estimates of the bounds in terms of the Markov constant for
$\alpha<0$ are the same as for $\alpha>0$, cf.\ Remark~\ref{estimate
Markov}, namely
 % -------------- %
\begin{equation}\label{estimate Markov negative}
\frac{\mu(\theta)}{\max\{a,b\}}\leq\min\left\{\frac{\upsilon(\theta^{-1})}{b},\frac{\upsilon(\theta)}{a}\right\}\leq\frac{\mu(\theta)}{\min\{a,b\}}\,.
\end{equation}
 % -------------- %

\medskip

\begin{theorem}\label{Thm negative}
Let $\alpha<0$, $\theta=\frac{a}{b}$, and
 % -------------- %
$$
\gamma:=\min\left\{\inf_{m\in\mathbb{N}}\left\{\frac{2m\pi}{a}\tan\left(\frac{\pi}{2}\left(\lceil m\theta^{-1}\rceil-m\theta^{-1}\right)\right)\right\},\inf_{m\in\mathbb{N}}\left\{\frac{2m\pi}{b}\tan\left(\frac{\pi}{2}\left(\lceil m\theta\rceil-m\theta\right)\right)\right\}\right\}.
$$
 % -------------- %
If the coupling constant $\alpha$ satisfies
 % -------------- %
\begin{equation}
\gamma<|\alpha|<\pi^2\cdot\min\left\{\frac{\upsilon(\theta^{-1})}{b},\frac{\upsilon(\theta)}{a}\right\},
\end{equation}
 % -------------- %
there is a nonzero and finite number of gaps in the spectrum.
\end{theorem}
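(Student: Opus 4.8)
The plan is to mirror the proof of Theorem~\ref{Thm positive}, now invoking Propositions~\ref{Prop general negative} and \ref{Prop infinite negative} in place of Propositions~\ref{Prop general}, \ref{Prop finite} and \ref{Prop infinite}. The gap conditions (\ref{gap G a}) and (\ref{gap G b}) differ from (\ref{gap F a}) and (\ref{gap F b}) only in that the fractional-part quantity $m\theta-\lfloor m\theta\rfloor$ is replaced by the ceiling-type quantity $\lceil m\theta\rceil - m\theta$, so the entire argument carries over once one tracks which one-sided constant governs each condition. By formula~(\ref{upsilon ii}) the quantity $m(\lceil m\theta\rceil - m\theta)$ is controlled by $\upsilon(\theta^{-1})$, and the same formula applied with $\theta^{-1}$ in place of $\theta$ shows that $m(\lceil m\theta^{-1}\rceil - m\theta^{-1})$ is controlled by $\upsilon(\theta)$; this is precisely why the two constants appear swapped with respect to the $\alpha>0$ case, consistently with (\ref{estimate Markov negative}).

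For the finiteness part I would estimate the left-hand side of (\ref{gap G b}) from below via $\tan x > x$ on $(0,\pi/2)$, obtaining the lower bound $\frac{m\pi^2}{b}(\lceil m\theta\rceil - m\theta)$; by (\ref{upsilon ii}), for any $c<\upsilon(\theta^{-1})$ one has $m(\lceil m\theta\rceil - m\theta)\ge c$ with at most finitely many exceptions, so the left-hand side of (\ref{gap G b}) exceeds $\pi^2c/b$ for all but finitely many $m$. Hence $|\alpha|<\pi^2\upsilon(\theta^{-1})/b$ forces (\ref{gap G b}) to hold for at most finitely many $m$. The identical computation applied to (\ref{gap G a}), with $\upsilon(\theta)$ replacing $\upsilon(\theta^{-1})$, shows that $|\alpha|<\pi^2\upsilon(\theta)/a$ forces (\ref{gap G a}) to hold for at most finitely many $m$. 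Under the assumed upper bound on $|\alpha|$ both restrictions apply, and Proposition~\ref{Prop general negative} then yields that $\sigma(H)$ has only finitely many gaps.

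For the nonemptiness part I would argue directly from the definition of $\gamma$ as the minimum of the two infima of the families $\{\frac{2m\pi}{a}\tan(\frac{\pi}{2}(\lceil m\theta^{-1}\rceil - m\theta^{-1}))\}_m$ and $\{\frac{2m\pi}{b}\tan(\frac{\pi}{2}(\lceil m\theta\rceil - m\theta))\}_m$: the hypothesis $|\alpha|>\gamma$ forces some member of one of these families to be strictly below $|\alpha|$, i.e.\ (\ref{gap G a}) or (\ref{gap G b}) holds for some $m$, and by Proposition~\ref{Prop general negative} this produces an open gap. Combining the two parts gives a nonzero finite number of gaps; as remarked in Section~\ref{s:gapnumber}, a possible gap on the negative half-line need not be treated separately, since it extends to positive energies and is therefore already detected by the condition (\ref{alpha<0}).

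The only genuinely new point relative to the $\alpha>0$ analysis, and hence the part to carry out with care, is the bookkeeping: one has to check that it is (\ref{upsilon ii}) rather than (\ref{upsilon i}) that is the correct identity here, and that its use with $\theta$ versus $\theta^{-1}$ produces precisely the pairing $(\upsilon(\theta^{-1})/b,\upsilon(\theta)/a)$ appearing in the statement. The remaining ingredients---the monotonicity of $G$ on its intervals of continuity together with its blow-up at the appropriate endpoints, and the elementary bounds $x<\tan x<\xi x$ used for $\xi>1$ close to $1$---are literal transcriptions of the proofs already given for $\alpha>0$.
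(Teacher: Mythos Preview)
Your proposal is correct and follows exactly the approach the paper intends: the paper explicitly omits the proof for $\alpha<0$, stating that ``the reasoning is completely analogous to the previous case,'' and Theorem~\ref{Thm positive} itself is presented only as ``an immediate consequence of Propositions~\ref{Prop general} and \ref{Prop finite}.'' Your bookkeeping is right---formula~(\ref{upsilon ii}) indeed pairs $\lceil m\theta\rceil - m\theta$ with $\upsilon(\theta^{-1})$ and (by substituting $\theta^{-1}$ for $\theta$) pairs $\lceil m\theta^{-1}\rceil - m\theta^{-1}$ with $\upsilon(\theta)$, which explains the swap relative to the repulsive case.
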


%%%%%%%%%%%%%%%%%%%%%%%%%%%%%%%%%%%%%%%%%%%%%%%%%%%%%%%%%%%%%%%%%%%%%%%%%%%%%%%%%%%%%%%%%%%%%%%%%%%%%%%%%%%%%%%
%%%%-------------------------------------- Existence of BS graphs -----------------------------------------%%%%
%%%%%%%%%%%%%%%%%%%%%%%%%%%%%%%%%%%%%%%%%%%%%%%%%%%%%%%%%%%%%%%%%%%%%%%%%%%%%%%%%%%%%%%%%%%%%%%%%%%%%%%%%%%%%%%
%  Section
%
\section{Example: golden-mean lattice}\label{s:golden}
\setcounter{equation}{0}

The sufficient conditions in Theorems~\ref{Thm positive} and \ref{Thm negative} do not yet solve our problem because it is not obvious whether these statements are not empty. Let us now examine a particular case discussed already in \cite{Ex96, EG96} in which we choose the golden mean, $\phi=\frac{\sqrt{5}+1}{2}$, for the rectangle side ratio $\theta$.

For proving Theorem~\ref{Thm golden} below, we will employ the convergents of $\phi$. The continued fraction representation of $\phi$ is $[1;1,1,1,\ldots]$, and therefore the convergents are of the form
 % -------------- %
\begin{equation}
\frac{F_{n+1}}{F_n}=\frac{p_{n-1}}{q_{n-1}}\,,
\end{equation}
 % -------------- %
where $F_n$ are Fibonacci numbers; recall that
 % -------------- %
$$
F_n=\frac{\phi^n-(-\phi)^{-n}}{\sqrt{5}}\,.
$$
We will also need the values of $\upsilon(\phi)$ and $\upsilon(\phi^{-1})$. It is possible to find them using formula (\ref{from below upsilon}) and Proposition~\ref{Prop. third from below}, but we instead take advantage of known results on the Markov constant. Since $\phi^{-1}=\phi-1$, we have, due to (\ref{upsilon i}),
 % -------------- %
\begin{equation*}
\begin{array}{rl}
\hspace{-3em}\upsilon(\phi^{-1})=&\inf\left\{c>0\;\left|\;\left(\exists_\infty m\in\mathbb{N}\right)\left(m(m(\phi-1)-\lfloor m(\phi-1)\rfloor)<c\right)\right.\right\} \\
=&\inf\left\{c>0\;\left|\;\left(\exists_\infty m\in\mathbb{N}\right)\left(m(m\phi-\lfloor m\phi\rfloor)<c\right)\right.\right\}=\upsilon(\phi)\,.
\end{array}
\end{equation*}
 % -------------- %
Consequently, equation (\ref{upsilon iii}) implies $\upsilon(\phi)=\upsilon(\phi^{-1})=\mu(\phi)$, where the value of $\mu(\phi)$ is known to be equal to $1/\sqrt{5}$, cf. \cite[Chapter I, Thm.~V]{Ca57}. To sum up,
\begin{equation}\label{upsilon gold}
\upsilon(\phi)=\upsilon(\phi^{-1})=\frac{1}{\sqrt{5}}\,.
\end{equation}
 % -------------- %
\begin{theorem}\label{Thm golden}
Let $\frac{a}{b}=\phi=\frac{\sqrt{5}+1}{2}$, then the following claims are valid:
 % -------------- %
\begin{itemize}
\item [(i)] If $\alpha>\frac{\pi^2}{\sqrt{5}a}$ or $\alpha\leq-\frac{\pi^2}{\sqrt{5}a}$, there are infinitely many spectral gaps.
\item [(ii)] If
 % -------------- %
$$
-\frac{2\pi}{a}\tan\left(\frac{3-\sqrt{5}}{4}\pi\right)\leq\alpha\leq\frac{\pi^2}{\sqrt{5}a}\,,
$$
 % -------------- %
there are no gaps in the spectrum.
\item[(iii)] If
 % -------------- %
\begin{equation}\label{golden mean finite gaps}
-\frac{\pi^2}{\sqrt{5}a}<\alpha<-\frac{2\pi}{a}\tan\left(\frac{3-\sqrt{5}}{4}\pi\right),
\end{equation}
 % -------------- %
there is a nonzero and finite number of gaps in the spectrum.
\end{itemize}
 % -------------- %
\end{theorem}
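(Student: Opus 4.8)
The plan is to run the machinery of Section~\ref{s:gapnumber} at $\theta=\phi$, where everything becomes explicit. By (\ref{upsilon gold}) we have $\upsilon(\phi)=\upsilon(\phi^{-1})=\frac{1}{\sqrt5}$, and since $a=\phi b>b$, the thresholds in Propositions~\ref{Prop finite}, \ref{Prop infinite} (repulsive case) and \ref{Prop infinite negative} (attractive case) all equal $\frac{\pi^2}{\sqrt5\,a}$. Moreover $\phi^{-1}=\phi-1$ gives $m\theta^{-1}-\lfloor m\theta^{-1}\rfloor=m\phi-\lfloor m\phi\rfloor$ and $\lceil m\theta^{-1}\rceil-m\theta^{-1}=\lceil m\phi\rceil-m\phi$ for all $m\in\mathbb N$, so the two conditions in each of (\ref{gap F a})--(\ref{gap F b}) and (\ref{gap G a})--(\ref{gap G b}) have the same $\tan$-argument and differ only by the prefactor $\frac1a$ versus $\frac1b$; as $a>b$, the $\frac1a$-condition is the weaker one, and the spectrum has a gap iff it holds for some $m$.

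For the ``infinitely many gaps'' half of (i), the cases $\alpha>\frac{\pi^2}{\sqrt5 a}$ and $\alpha<-\frac{\pi^2}{\sqrt5 a}$ follow directly from Propositions~\ref{Prop infinite} and \ref{Prop infinite negative}. The boundary value $\alpha=-\frac{\pi^2}{\sqrt5 a}$ is not covered there, and I would handle it by hand: substitute the even-index Fibonacci denominators $m=F_{2k}$ into (\ref{gap G a}). Using $\phi F_n-F_{n+1}=(-1)^{n+1}\phi^{-n}$ one gets $\lceil F_{2k}\phi\rceil-F_{2k}\phi=\phi^{-2k}$ and $F_{2k}\phi^{-2k}=\frac{1-\phi^{-4k}}{\sqrt5}$, and then with $\tan x=x+\frac13x^3+O(x^5)$ and $\phi^{4k}\sim5F_{2k}^2$ the left-hand side of (\ref{gap G a}) becomes $\frac{\pi^2}{\sqrt5 a}\bigl(1-(1-\frac{\pi^2}{12})\phi^{-4k}+o(\phi^{-4k})\bigr)$; since $\pi^2<12$, this is strictly below $\frac{\pi^2}{\sqrt5 a}=|\alpha|$ for all large $k$, so (\ref{gap G a}) holds for infinitely many $m$ and Proposition~\ref{Prop general negative} yields infinitely many gaps.

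For $\alpha>0$ the first step is to prove $m\bigl(m\phi-\lfloor m\phi\rfloor\bigr)\ge\frac{1}{\sqrt5}$ for every $m\ge1$: a denominator $q$ realizing a running minimum of $q\mapsto q(q\phi-\lfloor q\phi\rfloor)$ is a best approximation of the third kind from below, hence a convergent by Proposition~\ref{Prop. third from below}, and by (\ref{from below upsilon}) the values along those convergents form a non-increasing sequence with limit $\upsilon(\phi)=\frac{1}{\sqrt5}$ (they equal $\frac{1+\phi^{-2n}}{\sqrt5}$). With $\tan x>x$ this gives $\frac{2m\pi}{a}\tan\bigl(\frac{\pi}{2}(m\phi-\lfloor m\phi\rfloor)\bigr)>\frac{\pi^2}{\sqrt5 a}\ge\alpha$ for all $m$, so (\ref{gap F a}) --- and hence (\ref{gap F b}) --- fails for every $m$, and Corollary~\ref{Corol none} gives no gaps whenever $0<\alpha\le\frac{\pi^2}{\sqrt5 a}$. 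In particular no positive $\alpha$ produces a finite nonzero number of gaps: the window of Theorem~\ref{Thm positive} collapses for $\alpha>0$.

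For $\alpha<0$ the relevant quantity is $\gamma=\inf_m\frac{2m\pi}{a}\tan\bigl(\frac{\pi}{2}(\lceil m\phi\rceil-m\phi)\bigr)$, and I would show $\gamma=\frac{2\pi}{a}\tan\bigl(\frac{3-\sqrt5}{4}\pi\bigr)$, attained at $m=1$ (where $\lceil\phi\rceil-\phi=\frac{3-\sqrt5}{2}$). For $m\ge2$ the bound combines $\tan x>x$ with $m(\lceil m\phi\rceil-m\phi)\ge\frac{1-\phi^{-8}}{\sqrt5}$ (attained at $m=3$), which I would prove by a case split: if $m\phi$ is nearer its floor the value exceeds $\frac{m}{2}\ge1$; if $m\phi$ is nearer its ceiling, the best-approximation property (the distance of $m\phi$ to the nearest integer is $\ge|F_n\phi-F_{n+1}|=\phi^{-n}$ whenever $F_n\le m<F_{n+1}$) together with the Fibonacci identity gives $m(\lceil m\phi\rceil-m\phi)\ge F_n\phi^{-n}\ge\frac{1-\phi^{-8}}{\sqrt5}$; the numerical fact $\frac{\pi}{2}\cdot\frac{1-\phi^{-8}}{\sqrt5}>\tan\bigl(\frac{3-\sqrt5}{4}\pi\bigr)$ then makes every $m\ge2$ overshoot the $m=1$ value. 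Granting this, Proposition~\ref{Prop general negative} gives no gaps for $|\alpha|\le\gamma$, which with the $\alpha>0$ conclusion above proves (ii); and for $\gamma<|\alpha|<\frac{\pi^2}{\sqrt5 a}$ the $m=1$ instance of (\ref{gap G a}) holds (so there is at least one gap) while Proposition~\ref{Prop infinite negative} forces finiteness, which is (iii), the window being nonempty since $2\tan\bigl(\frac{3-\sqrt5}{4}\pi\bigr)<\frac{\pi}{\sqrt5}$. The hard part, I expect, is precisely the estimate $m(\lceil m\phi\rceil-m\phi)\ge\frac{1-\phi^{-8}}{\sqrt5}$ for $m\ge2$: unlike its ``from below'' counterpart, the ``from above'' best-approximation problem offers no infinite sequence to exploit through (\ref{from below upsilon}), so one must argue through the best-approximation property and the Fibonacci recursion directly, and the margin is razor-thin ($\approx0.687$ against $\approx0.684$), so $\tan x>x$ must be used at full strength --- the same tightness, in the guise $\pi^2<12$, also governs the boundary value $\alpha=-\frac{\pi^2}{\sqrt5 a}$.
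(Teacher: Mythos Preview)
Your plan is correct and, for parts (i), (iii), and the repulsive half of (ii), essentially reproduces the paper's argument: the strict-inequality regimes follow from Propositions~\ref{Prop infinite} and~\ref{Prop infinite negative}; the boundary value $\alpha=-\frac{\pi^2}{\sqrt5 a}$ is handled by the same Fibonacci substitution $m=F_{2k}$ and Taylor expansion (the paper writes the crucial coefficient as $\frac{\pi^2}{12}-1<0$, your $-(1-\frac{\pi^2}{12})$); and the ``from below'' inequality $m(m\phi-\lfloor m\phi\rfloor)\ge\frac{1}{\sqrt5}$ is obtained in both cases via Proposition~\ref{Prop. third from below} and the explicit convergent values $\frac{1+\phi^{-2n}}{\sqrt5}$.

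The genuine difference is in the attractive half of (ii), the bound $m(\lceil m\phi\rceil-m\phi)\ge\frac{1-\phi^{-8}}{\sqrt5}$ for $m\ge2$. The paper reduces to best approximations from above of the third kind via Proposition~\ref{Prop. third from above}, then checks the even-index convergents $\frac{F_{n+1}}{F_n}$ with $n\ge4$. Your route---split on whether $m\phi$ is nearer its floor (giving value $>\frac{m}{2}\ge1$) or its ceiling (where $\lceil m\phi\rceil-m\phi=\|m\phi\|\ge\phi^{-n}$ for $F_n\le m<F_{n+1}$ by the classical second-kind property, hence value $\ge F_n\phi^{-n}\ge\frac{1-\phi^{-8}}{\sqrt5}$ for $n\ge3$)---is more elementary and in fact a bit cleaner here: for the golden mean the ``from above'' third-kind values $\frac{1-\phi^{-2n}}{\sqrt5}$ are \emph{increasing}, so the only genuine best approximation from above is $\frac{2}{1}$ with $m=1$, and the paper's reduction to best approximations with $m\ge2$ tacitly relies on the finer case analysis (non-convergents via Lemma~\ref{Lem. third nonconv}, multiples of convergents) that is spelled out only later in the proof of Theorem~5.2. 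Your second-kind argument sidesteps that entirely. Both approaches land on the same razor-thin comparison $\frac{\pi}{2}\cdot\frac{1-\phi^{-8}}{\sqrt5}>\tan\bigl(\frac{3-\sqrt5}{4}\pi\bigr)$, which is the unavoidable crux.
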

 % -------------- %
\begin{proof} (i) With regard to (\ref{upsilon gold}), the existence of an infinite number of spectral gaps for $\alpha>\frac{\pi^2}{\sqrt{5}a}$ follows immediately from Proposition~\ref{Prop infinite}, for $\alpha<-\frac{\pi^2}{\sqrt{5}a}$ we similarly employ Proposition~\ref{Prop infinite negative}.

\emph{The case $\alpha=-\frac{\pi^2}{\sqrt{5}a}$.} We shall demonstrate that there are infinitely many $m\in\mathbb{N}$ such that the gap condition (\ref{gap G a}), which reads
 % -------------- %
$$
\frac{2m\pi}{a}\tan\left(\frac{\pi}{2}\left(\lceil m\phi^{-1}\rceil-m\phi^{-1}\right)\right)<\frac{\pi^2}{\sqrt{5}a}\,,
$$
 % -------------- %
is satisfied. Choosing $m=F_{n}$ for even $n$ and using the identity $\phi^{-1}=\phi-1$, we can write the gap condition in the form
 % -------------- %
\begin{equation}\label{cond iv}
F_n\tan\left(\frac{\pi}{2}\left(\lceil F_{n}\phi\rceil-F_{n}\phi\right)\right)<\frac{\pi}{2\sqrt{5}}\,.
\end{equation}
 % -------------- %
For even $n$, we have
 % -------------- %
\begin{eqnarray*}
\hspace{-3em}\lefteqn{F_n\phi=\frac{\phi^n-\phi^{-n}}{\sqrt{5}}\,\phi=\frac{\phi^{n+1}-\phi^{-n+1}}{\sqrt{5}}=
\frac{\phi^{n+1}+\phi^{-n-1}}{\sqrt{5}}+\frac{-\phi^{-n-1}-\phi^{-n+1}}{\sqrt{5}}} \\ &&
\hspace{-3em} =\frac{\phi^{n+1}-(-\phi)^{-(n+1)}}{\sqrt{5}}-\frac{\phi+\phi^{-1}}{\sqrt{5}}\,\phi^{-n}=
F_{n+1}-\phi^{-n}\in(F_{n+1}-1,F_{n+1})\,,
\end{eqnarray*}
 % -------------- %
which means that
 % -------------- %
\begin{equation}\label{Fn even}
\lceil F_n\phi\rceil-F_n\phi=F_{n+1}-F_n\phi=\phi^{-n} \qquad\mbox{for even $n$}.
\end{equation}
 % -------------- %
Hence we get,
using the Taylor series of $\tan(x)$,
 % -------------- %
\begin{eqnarray*}
\lefteqn{F_n\tan\left(\frac{\pi}{2}\left(\lceil F_n\phi\rceil-F_n\phi\right)\right)=
\frac{\phi^{n}-\phi^{-n}}{\sqrt{5}}\tan\left(\frac{\pi}{2}\phi^{-n}\right)} \\ &&
=\frac{1}{\sqrt{5}}\left(\phi^{n}-\phi^{-n}\right)\left(\frac{\pi}{2}\phi^{-n}
+\frac{1}{3}\left(\frac{\pi}{2}\right)^3\phi^{-3n}+\frac{2}{15}\left(\frac{\pi}{2}\right)^5\phi^{-5n}+\cdots\right) \\ &&
=\frac{\pi}{2\sqrt{5}}\left(1-\left(1-\frac{1}{3}\cdot\frac{\pi^2}{4}\right)\phi^{-2n}
-\left(\frac{1}{3}\cdot\frac{\pi^2}{4}-\frac{2}{15}\cdot\frac{\pi^4}{16}\right)\phi^{-4n}\cdots\right)\,.
\end{eqnarray*}
 % -------------- %
That is, taking $n$ even leads to the expansion
 % -------------- %
\begin{equation}\label{rozvoj}
\hspace{-3em}F_n\tan\left(\frac{\pi}{2}\left(\lceil F_n\phi\rceil-F_n\phi\right)\right)=\frac{\pi}{2\sqrt{5}}\left(1+\left(\frac{\pi^2}{12}-1\right)\phi^{-2n}+\mathcal{O}(\phi^{-4n})\right).
\end{equation}
 % -------------- %
Since the coefficient $\frac{\pi^2}{12}-1$ at $\phi^{-2n}$ in (\ref{rozvoj}) is negative, condition (\ref{cond iv}) is satisfied for all sufficiently large even $n$. The gap condition (\ref{gap G a}) with $\alpha=-\frac{\pi^2}{\sqrt{5}a}$ is thus satisfied for infinitely many numbers $m=F_n$ with $n$ being even; then Proposition~\ref{Prop general negative} implies the existence of infinitely many gaps.

\smallskip

(ii) We divide the argument into several parts referring to different values of $\alpha$: \\ \emph{The case $\alpha\in\big(0,\frac{\pi^2}{\sqrt{5}a}\big]$.} Using the identity $\phi^{-1}=\phi-1$, we obtain
 % -------------- %
\begin{eqnarray*}
\lefteqn{\frac{2m\pi}{a}\tan\left(\frac{\pi}{2}(m\phi^{-1}-\lfloor m\phi^{-1}\rfloor)\right)\geq
\frac{2m\pi}{a}\left(\frac{\pi}{2}(m\phi^{-1}-\lfloor m\phi^{-1}\rfloor)\right)} \\ &&
=\frac{\pi^2}{a}m\left(m(\phi-1)-\lfloor m(\phi-1)\rfloor\right)=\frac{\pi^2}{a}m\left(m\phi-\lfloor m\phi\rfloor\right),
\end{eqnarray*}
 % -------------- %
and similarly,
 % -------------- %
\begin{equation}\label{ii a}
\hspace{-5em} \frac{2m\pi}{b}\tan\left(\frac{\pi}{2}(m\phi-\lfloor m\phi\rfloor)\right)\geq
\frac{2m\pi}{b}\left(\frac{\pi}{2}(m\phi-\lfloor m\phi\rfloor)\right)=
\frac{\pi^2}{b}m\left(m\phi-\lfloor m\phi\rfloor\right).
\end{equation}
 % -------------- %
In order to disprove the existence of gaps using Corollary~\ref{Corol none}, we shall demonstrate that
 % -------------- %
\begin{equation}\label{ii none}
\hspace{-5em}\frac{\pi^2}{a}m\left(m\phi-\lfloor m\phi\rfloor\right)\geq\frac{\pi^2}{\sqrt{5}a} \quad\wedge\quad \frac{\pi^2}{b}m\left(m\phi-\lfloor m\phi\rfloor\right)\geq\frac{\pi^2}{\sqrt{5}a} \quad\mathrm{for\;all}\;\; m\in\mathbb{N}\,.
\end{equation}
 % -------------- %
With regard to the assumption $a>b$, condition (\ref{ii none}) is equivalent to
 % -------------- %
\begin{equation}\label{ii none 1}
m\left(m\phi-\lfloor m\phi\rfloor\right)\geq\frac{1}{\sqrt{5}} \quad\mathrm{for\;all}\;\; m\in\mathbb{N}\,,
\end{equation}
 % -------------- %
which we are about to prove. We will verify that $m\left(m\phi-p\right)\geq\frac{1}{\sqrt{5}}$ for any $m\in\mathbb{N}$ and $p\in\mathbb{N}_0$. In view of Definition~\ref{Def. third}, it suffices to consider pairs $(p,m)$ such that $\frac{p}{m}$ is a best approximation from below of the third kind to $\phi$. Such approximations are convergents of $\phi$, cf.\ Proposition~\ref{Prop. third from below}. Convergents of $\phi$ that are smaller than $\phi$ are known to be of the form $\frac{F_{n+1}}{F_{n}}$, where $n$ is odd. We obtain
 % -------------- %
\begin{eqnarray*}
\lefteqn{F_n\left(F_n\phi-F_{n+1}\right)
=\frac{\phi^n+\phi^{-n}}{\sqrt{5}}\left(\frac{\phi^n+\phi^{-n}}{\sqrt{5}}\phi-\frac{\phi^{n+1}-\phi^{-(n+1)}}{\sqrt{5}}\right)} \\ &&
=\frac{1}{5}(\phi+\phi^{-1})(1+\phi^{-2n})=\frac{1+\phi^{-2n}}{\sqrt{5}}>\frac{1}{\sqrt{5}}\,,
\end{eqnarray*}
 % -------------- %
i.e., the inequality $m\left(m\phi-p\right)\geq\frac{1}{\sqrt{5}}$ holds true for each best approximation from below of the third kind to $\phi$. Consequently, it holds true for all $\frac{p}{q}<\theta$, in particular, for $p/q=\lfloor m\phi\rfloor/m$. This proves condition (\ref{ii none 1}), hence there are no spectral gaps for $\alpha\in\big(0,\frac{\pi^2}{\sqrt{5}a}\big]$.

\emph{The case $\alpha=0$.} Kirchhoff couplings obviously generate no gaps\footnote{Note that this also means that Theorem~\ref{thm:gaps general} has no implications for the present case, because Kirchhoff condition is scale-invariant and associated with the $\delta$-coupling of the considered model.}, see also \cite{EG96}.

\emph{The case $\alpha\in\big[-\frac{2\pi}{a}\tan\big(\frac{3-\sqrt{5}}{4}\pi\big),0\big)$.} We are going to show that for all $m\in\mathbb{N}$, condition (\ref{none negative}) holds true; then the claim would follow from Proposition~\ref{Prop general negative}. If $m=1$, we have
 % -------------- %
 $$
\frac{2\cdot1\cdot\pi}{a}\tan\left(\frac{\pi}{2}\left(\lceil 1\cdot\phi^{-1}\rceil-1\cdot\phi^{-1}\right)\right)=\frac{2\pi}{a}\tan\left(\frac{\pi}{2}\cdot\frac{3-\sqrt{5}}{2}\right)\geq|\alpha|
$$
 % -------------- %
and
 % -------------- %
$$
\frac{2\cdot1\cdot\pi}{b}\tan\left(\frac{\pi}{2}\left(\lceil 1\cdot\phi\rceil-1\cdot\phi\right)\right)=\frac{2\pi}{b}\tan\left(\frac{\pi}{2}\cdot\frac{3-\sqrt{5}}{2}\right)\geq|\alpha|\,.
$$
 % -------------- %
If $m\geq2$, we use the identity $\phi^{-1}=\phi-1$ to get
 % -------------- %
\begin{eqnarray*}
\lefteqn{\frac{2m\pi}{a}\tan\left(\frac{\pi}{2}\left(\lceil m\phi^{-1}\rceil-m\phi^{-1}\right)\right)
=\frac{2m\pi}{a}\tan\left(\frac{\pi}{2}\left(\lceil m\phi\rceil-m\phi\right)\right)} \\ &&
>\frac{2m\pi}{a}\left(\frac{\pi}{2}\left(\lceil m\phi\rceil-m\phi\right)\right)
=\frac{\pi^2}{a}m\left(\lceil m\phi\rceil-m\phi\right),
\end{eqnarray*}
 % -------------- %
and
 % -------------- %
$$
\frac{2m\pi}{b}\tan\left(\frac{\pi}{2}\left(\lceil m\phi\rceil-m\phi\right)\right)
>\frac{\pi^2}{b}m\left(\lceil m\phi\rceil-m\phi\right)\,.
$$
 % -------------- %
According to condition (\ref{none negative}), we have to check that
 % -------------- %
$$
\min\left\{\frac{\pi^2}{a}m\left(\lceil m\phi\rceil-m\phi\right)\,,\:  \frac{\pi^2}{b}m\left(\lceil m\phi\rceil-m\phi\right)\right\} \geq\frac{2\pi}{a}\tan\left(\frac{3-\sqrt{5}}{4}\pi\right)
$$
 % -------------- %
holds for all $m\ge 2$, which is equivalent, due to $a>b$, to
 % -------------- %
\begin{equation}\label{iii none}
\hspace{-2em} m\left(\lceil m\phi\rceil-m\phi\right)\geq\frac{2}{\pi}\tan\left(\frac{3-\sqrt{5}}{4}\pi\right)\approx0.4355 \quad\mathrm{for\;all}\;\; m\geq2.
\end{equation}
 % -------------- %
Again, in view of Definition~\ref{Def. third}, it is sufficient to verify that $m\left(p-m\phi\right)\geq\frac{2}{\pi}\tan\left(\frac{3-\sqrt{5}}{4}\pi\right)$ holds for $\frac{p}{m}$ (with $m\geq2$) being best approximations from above of the third kind to $\phi$. According to Proposition~\ref{Prop. third from above}, such approximations are convergents of $\phi$, i.e., we have to consider $\frac{p}{q}$ taking the form $\frac{F_{n+1}}{F_{n}}$, where $n$ is even. For this choice we obtain
 % -------------- %
\begin{eqnarray*}
\lefteqn{F_n\left(F_{n+1}-F_n\phi\right)
=\frac{\phi^n-\phi^{-n}}{\sqrt{5}}\left(\frac{\phi^{n+1}+\phi^{-(n+1)}}{\sqrt{5}}-\frac{\phi^n-\phi^{-n}}{\sqrt{5}}\phi\right)} \\ &&
=\frac{1}{5}(\phi+\phi^{-1})(1-\phi^{-2n})=\frac{1-\phi^{-2n}}{\sqrt{5}}\,.
\end{eqnarray*}
 % -------------- %
Moreover, we may assume $n\geq4$, because $F_4=3$ is the smallest Fibonacci number $F_n$ obeying our conditions (having an even index $n$ and satisfying $m=F_n\geq2$). Hence
 % -------------- %
$$
F_n\left(F_{n+1}-F_n\phi\right)\geq\frac{1-\phi^{-8}}{\sqrt{5}}\,,
$$
 % -------------- %
and consequently,
 % -------------- %
$$
m\left(p-m\phi\right)\geq\frac{1-\phi^{-8}}{\sqrt{5}}\approx0.4377
$$
 % -------------- %
for all $\frac{p}{m}>\phi$; in particular, for $p=\lceil m\phi\rceil$. This verifies condition (\ref{iii none}), hence there are no gaps in the spectrum.

\smallskip

(iii) It remains to deal with the case when $-\frac{\pi^2}{\sqrt{5}a}<\alpha<-\frac{2\pi}{a}\tan\left(\frac{3-\sqrt{5}}{4}\pi\right)$. The claim follows from Theorem~\ref{Thm negative} in combination with equation (\ref{upsilon gold}) and the estimate
 % -------------- %
\begin{eqnarray*}
\hspace{-6em}\lefteqn{\inf_{m\in\mathbb{N}}\left\{\frac{2m\pi}{a}\tan\left(\frac{\pi}{2}\left(\lceil m\phi^{-1}\rceil-m\phi^{-1}\right)\right)\right\}\leq\frac{2\cdot1\cdot\pi}{a}\tan\left(\frac{\pi}{2}\left(\lceil 1\cdot\phi^{-1}\rceil-1\cdot\phi^{-1}\right)\right)} \\ &&
\hspace{-6em}=\frac{2\pi}{a}\tan\left(\frac{\pi}{2}\cdot\frac{3-\sqrt{5}}{2}\right).
\end{eqnarray*}
 % -------------- %
This concludes the proof of the theorem.
\end{proof}

In particular, the claim (iii) of Theorem~\ref{Thm golden} provides and affirmative answer to the question we have posed in the introduction.

 % -------------- %
\begin{corollary}
Theorem~\ref{thm:existence} is valid.
\end{corollary}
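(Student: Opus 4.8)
The plan is to prove the theorem by exhibiting a single concrete Bethe--Sommerfeld graph, since an existence claim is settled by one example. The natural candidate is the rectangular-lattice graph of Figure~\ref{fig:lattice} equipped with a $\delta$ coupling of strength $\alpha$ at every vertex, for which the spectral gaps are completely described by the conditions~(\ref{alpha>0}) and~(\ref{alpha<0}) taken from~\cite{Ex96}. The key point, already visible there, is that the gap structure is governed by the Diophantine properties of the side ratio $\theta=a/b$, so the first step is to translate the analytic gap conditions into the arithmetic language of Section~\ref{s::prelim}: every prospective gap is anchored at a point $k^2=(m\pi/a)^2$ or $k^2=(m\pi/b)^2$, and its presence is equivalent to an inequality of the form $\frac{2m\pi}{a}\tan(\frac{\pi}{2}\{m\theta^{\pm1}\})<\alpha$ (Propositions~\ref{Prop general} and~\ref{Prop general negative}). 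Running over $m$, the total number of gaps is finite precisely when $|\alpha|$ lies below a threshold expressed through the one-sided Markov-type constants $\upsilon(\theta)$, $\upsilon(\theta^{-1})$ (Propositions~\ref{Prop finite}, \ref{Prop infinite}, \ref{Prop infinite negative}), while it is nonzero whenever $|\alpha|$ exceeds the infimum $\gamma$ of those same expressions over $m$.

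Second, I would specialize to $\theta=\phi=(\sqrt5+1)/2$, the simplest badly approximable number, and evaluate the relevant constants exactly. Because $\phi^{-1}=\phi-1$, the definitions~(\ref{upsilon i})--(\ref{upsilon iii}) immediately give $\upsilon(\phi)=\upsilon(\phi^{-1})=\mu(\phi)=1/\sqrt5$, the last equality being the classical value of the Markov constant of the golden mean; this pins the upper end of the admissible $\alpha$-window at $\pi^2/(\sqrt5\,a)$ (using $a>b$). For the lower end I would compute $\gamma$ from~(\ref{gamma}): by the characterization of best approximations of the third kind (Propositions~\ref{Prop. third from below} and~\ref{Prop. third from above}) the extremal $m$ are Fibonacci numbers, and a short computation with $F_n=(\phi^n-(-\phi)^{-n})/\sqrt5$ shows that for $m=F_n$ one has $m(\lceil m\phi\rceil-m\phi)=(1-\phi^{-2n})/\sqrt5$, which is increasing in $n$; hence the infimum is attained at the smallest admissible index and the governing value comes from $m=1$, giving $\gamma=\frac{2\pi}{a}\tan(\frac{3-\sqrt5}{4}\pi)$. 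Combining this with Theorem~\ref{Thm negative} and equation~(\ref{upsilon gold}), every $\alpha$ with $-\frac{\pi^2}{\sqrt5\,a}<\alpha<-\frac{2\pi}{a}\tan(\frac{3-\sqrt5}{4}\pi)$ produces a nonzero finite number of gaps, which is exactly Theorem~\ref{Thm golden}(iii) and hence Theorem~\ref{thm:existence}.

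The main obstacle is concentrated in the second step and is twofold. First, one must be certain that $1/\sqrt5$ is genuinely both a lower threshold (infinitely many gaps above it) and an upper threshold (finitely many below it); the lower-bound direction requires producing infinitely many $m$ for which the gap inequality survives with $|\alpha|$ arbitrarily close to the threshold, which is where the sharpness of $\tan x\sim x$ and the expansion~(\ref{rozvoj}) of $F_n\tan(\frac{\pi}{2}\phi^{-n})$ enter, in particular the check that the correction coefficient $\frac{\pi^2}{12}-1$ is negative so that the inequality holds even at the endpoint $\alpha=-\pi^2/(\sqrt5\,a)$. Second, showing that the window $(\gamma,\text{threshold})$ is nonempty is not automatic: it amounts to the strict inequality $\frac{2}{\pi}\tan(\frac{3-\sqrt5}{4}\pi)<\frac{1}{\sqrt5}$, equivalently that the $m=1$ term really is smaller than all the Fibonacci-indexed values, which is verified by $\frac{2}{\pi}\tan(\frac{3-\sqrt5}{4}\pi)\approx0.4355<0.4377\approx(1-\phi^{-8})/\sqrt5$ (so one must also discard the indices $n=0,2$ that fail $m\ge2$). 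This last numerical margin is the single place where a less carefully chosen ratio could spoil the construction; the remaining ingredients — the reduction to~(\ref{alpha>0})--(\ref{alpha<0}), the monotonicity of $F(k)$ and $G(k)$ between consecutive poles, and the bookkeeping of which $m$ can contribute — are routine once the number-theoretic lemmas of Section~\ref{s::prelim} are available.
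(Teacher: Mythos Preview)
Your proposal is correct and follows exactly the paper's route: the corollary is an immediate consequence of Theorem~\ref{Thm golden}(iii), and your sketch reproduces the key ingredients of that theorem's proof---computing $\upsilon(\phi)=\upsilon(\phi^{-1})=1/\sqrt5$ via $\phi^{-1}=\phi-1$, identifying the extremal $m$ as Fibonacci numbers through Propositions~\ref{Prop. third from below} and~\ref{Prop. third from above}, and checking numerically that the window in Theorem~\ref{Thm negative} is nonempty. One minor slip: you cite~(\ref{gamma}), which is the $\gamma$ for the repulsive case, whereas the relevant $\gamma$ here is the ceiling-function version displayed in Theorem~\ref{Thm negative}; your actual computations already use the correct expression, so this is only a labeling issue.
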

 % -------------- %
\begin{remark}
Note that a finite nonzero number of gaps in the spectrum can occur only for $\alpha<0$. If $\alpha>0$, there are either no gaps in the spectrum or infinitely many of them in accordance with the numerical observation made in \cite{EG96}. In addition, the window in which the golden-mean lattice has the Bethe--Sommerfeld property is narrow, roughly can be characterized as $4.298 \lesssim -\alpha a \lesssim 4.414$.
\end{remark}
 % -------------- %

We are also able to control the number of gaps in the Bethe--Sommerfeld regime.
 % -------------- %
\begin{theorem}
For a given $N\in\mathbb{N}$, there are exactly $N$ gaps in the spectrum if and only if $\alpha$ is chosen within the bounds
 % -------------- %
\begin{equation}\label{N gaps}
\hspace{-6.5em}-\frac{2\pi\left(\phi^{2(N+1)}-\phi^{-2(N+1)}\right)}{\sqrt{5}a}\tan\left(\frac{\pi}{2}\phi^{-2(N+1)}\right)
\leq\alpha<-\frac{2\pi\left(\phi^{2N}-\phi^{-2N}\right)}{\sqrt{5}a}\tan\left(\frac{\pi}{2}\phi^{-2N}\right)\!.
\end{equation}
 % -------------- %
\end{theorem}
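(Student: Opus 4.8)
The plan is to combine the negative-$\alpha$ gap analysis of Proposition~\ref{Prop general negative} with the explicit golden-mean identities already established in the proof of Theorem~\ref{Thm golden}. The key observation is that for $\theta=\phi$ the only gaps that can ever open are those whose right endpoint sits at $k^2=(m\pi/a)^2$ or $k^2=(m\pi/b)^2$, and — as the computations in Theorem~\ref{Thm golden}(ii)--(iii) show — among all $m$ the quantity $\frac{2m\pi}{a}\tan(\frac{\pi}{2}(\lceil m\phi^{-1}\rceil-m\phi^{-1}))$ and its $b$-counterpart are \emph{minimised} precisely along the Fibonacci subsequence $m=F_n$ with $n$ even, where $\lceil m\phi^{-1}\rceil-m\phi^{-1}=\lceil F_n\phi\rceil-F_n\phi=\phi^{-n}$ by identity~(\ref{Fn even}). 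So I would first argue that a gap with index $m$ is present iff the relevant tangent expression is $<|\alpha|$, and that for non-Fibonacci $m$ (and for $b$-type endpoints, using $a>b$) these expressions exceed the value attained at the \emph{next} even-index Fibonacci number; consequently the set of indices $m$ producing a gap is, for $|\alpha|$ in the Bethe--Sommerfeld window, exactly an initial segment $\{F_2,F_4,\ldots,F_{2N}\}$ of the even-index Fibonacci numbers.

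Next I would make this quantitative. Using identity~(\ref{Fn even}) together with $F_n=(\phi^n-\phi^{-n})/\sqrt5$ for even $n$, the defining inequality $\frac{2F_n\pi}{a}\tan(\frac{\pi}{2}\phi^{-n})<|\alpha|$ for the $a$-type endpoint at $m=F_n$ becomes
\begin{equation*}
|\alpha|>\frac{2\pi(\phi^{n}-\phi^{-n})}{\sqrt5\,a}\tan\!\left(\frac{\pi}{2}\phi^{-n}\right)=:g(n).
\end{equation*}
One checks that $g$ is strictly decreasing in the even variable $n$ (both $\phi^{n}-\phi^{-n}$ grows and $\tan(\frac{\pi}{2}\phi^{-n})\sim\frac{\pi}{2}\phi^{-n}$ decays, and the expansion~(\ref{rozvoj}) shows the product decreases monotonically toward $\pi/(2\sqrt5)$). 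Hence the gap indexed by $F_{2j}$ is open iff $|\alpha|>g(2j)$, and the gaps open are exactly those with $g(2j)<|\alpha|$, i.e. $j=1,\ldots,N$ where $N$ is determined by $g(2N+2)\le|\alpha|<g(2N)$; writing $2j=2(N+1)$ and $2j=2N$ gives precisely the two bounds in~(\ref{N gaps}). The upper bound $|\alpha|<g(2N)$ with $N=0$ reproduces $|\alpha|<\pi^2/(\sqrt5 a)$ up to the monotone limit, consistent with Theorem~\ref{Thm golden}(i); the lower bound with $N\to\infty$ gives $g(2N+2)\to\pi^2/(\sqrt5 a)$ from below, recovering the infinite-gap threshold.

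The remaining point — and the one I expect to be the main obstacle — is to show rigorously that \emph{no other} index $m$ contributes a gap in this window, so that the count is exactly $N$ and not merely at least $N$. For that I would invoke Proposition~\ref{Prop. third from above}: a gap at an $a$-type endpoint with index $m$ requires $m(\lceil m\phi^{-1}\rceil-m\phi^{-1})<\frac{2}{\pi^2}\,a\,|\alpha|\cdot\frac{\pi}{2}$-type bound, and the infimum of $m(\lceil m\phi\rceil-m\phi)$ over $m$ \emph{not} of even-Fibonacci form is strictly larger than its value $\frac{1-\phi^{-2n}}{\sqrt5}$ along $m=F_n$, by the same Lemma~\ref{Lem. third nonconv}/Proposition~\ref{Prop. third from above} argument used to prove~(\ref{iii none}). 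Combined with $a>b$ (which makes the $b$-type conditions harder to satisfy than the $a$-type ones at the same $m$), this forces every gap-producing index to be an even-index Fibonacci number, and then the monotonicity of $g$ pins down which of them are active. I would also note that the inequalities in~(\ref{N gaps}) are of the correct (non-strict on the left, strict on the right) type because $g(2N)$ is attained as a limit from within the gap-open regime while $g(2(N+1))$ is the threshold at which the $(N+1)$-st gap first appears; a short boundary-case check at $\alpha=-g(2(N+1))$, analogous to the $\alpha=-\pi^2/(\sqrt5 a)$ analysis in the proof of Theorem~\ref{Thm golden}(i), confirms that the $(N+1)$-st gap is still closed there, giving exactly $N$ gaps.
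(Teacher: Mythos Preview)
Your overall strategy is the paper's: reduce via Proposition~\ref{Prop general negative} to the gap conditions~(\ref{gap G a})--(\ref{gap G b}), show that only the even-index Fibonacci values $m=F_{2j}$ can ever trigger a gap, and read off the count from the monotonicity of the threshold $g(2j)$. But you have the monotonicity backwards, and this is not a harmless slip because it drives the rest of the paragraph. The sequence $A_j=\frac{2\pi(\phi^{2j}-\phi^{-2j})}{\sqrt5}\tan\bigl(\frac{\pi}{2}\phi^{-2j}\bigr)$ is \emph{increasing} to $\pi^2/\sqrt5$: expansion~(\ref{rozvoj}) gives $F_n\tan(\frac{\pi}{2}\phi^{-n})=\frac{\pi}{2\sqrt5}\bigl(1+(\frac{\pi^2}{12}-1)\phi^{-2n}+\cdots\bigr)$ with $\frac{\pi^2}{12}-1<0$, so the product approaches $\frac{\pi}{2\sqrt5}$ from \emph{below}. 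Your own sentence ``$g(2N+2)\to\pi^2/(\sqrt5 a)$ from below'' already contradicts your claim that $g$ decreases; and with $g$ decreasing, the condition $|\alpha|>g(2j)$ would open all but finitely many gaps rather than finitely many, so the logic ``$j=1,\dots,N$'' could not follow. With the correct monotonicity the gap at $F_{2j}$ is open iff $|\alpha|>g(2j)$, giving exactly $j=1,\dots,N$ when $g(2N)<|\alpha|\le g(2N+2)$, which is~(\ref{N gaps}) with the inequalities the right way round (opposite to what you wrote). The boundary case $|\alpha|=g(2N+2)$ then needs no separate argument analogous to the $\alpha=-\pi^2/(\sqrt5 a)$ analysis: the gap condition~(\ref{gap G a}) is a strict inequality, so at equality it simply fails and the $(N{+}1)$-st gap is closed.

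For the exclusion of non-Fibonacci $m$, the paper does essentially what you sketch, but note two points where your outline is too loose. First, the reduction goes through $\tan x\ge x$ to the inequality $m(\lceil m\phi\rceil-m\phi)\ge 1/\sqrt5$, and this is handled by a three-way case split: $\lceil m\phi\rceil/m$ strictly between consecutive upper convergents (Lemma~\ref{Lem. third nonconv} gives the value $>1/a_n=1$), above the largest convergent $2/1$ (impossible since $\lceil m\phi\rceil/m\le2$), or $m=rF_n$, $\lceil m\phi\rceil=rF_{n+1}$ with $r\ge2$ and $n$ even (direct computation gives $r^2(1-\phi^{-2n})/\sqrt5\ge 4(1-\phi^{-4})/\sqrt5$). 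Proposition~\ref{Prop. third from above} is not actually invoked here. Second, the $b$-type endpoints are not disposed of merely by ``$a>b$ makes them harder'': for the Fibonacci indices themselves one must still check that~(\ref{gap G b}) fails, and the paper does this via the explicit bound $\frac{1}{b}A_{n/2}\ge\frac{\phi}{a}A_1\approx 6.955/a>\pi^2/(\sqrt5 a)\ge|\alpha|$.
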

 % -------------- %
\begin{proof}
The bounds on $\alpha$ can be concisely written as $-\frac{A_{N+1}}{a}\leq\alpha<-\frac{A_N}{a}$, where
 % -------------- %
$$
A_{j}:=\frac{2\pi\left(\phi^{2j}-\phi^{-2j}\right)}{\sqrt{5}}\tan\left(\frac{\pi}{2}\phi^{-2j}\right).
$$
 % -------------- %
One can easily check that $\{A_j\}_{j=1}^\infty$ is an increasing sequence with the property
 % -------------- %
$$
A_1=\frac{2\pi\left(\phi^{2}-\phi^{-2}\right)}{\sqrt{5}}\tan\left(\frac{\pi}{2}\phi^{-2}\right)=2\pi\tan\left(\frac{3-\sqrt{5}}{4}\pi\right)
$$
 % -------------- %
and
 % -------------- %
\begin{equation}\label{Aj}
A_j<\frac{\pi^2}{\sqrt{5}} \quad\; \text{for\;all}\;\; j\in\mathbb{N}\,.
\end{equation}
 % -------------- %
Let us examine validity of the conditions (\ref{gap G a}) and (\ref{gap G b}) for $m\in\mathbb{N}$. Using the identity $\phi^{-1}=\phi-1$, we can rewrite them in the form
 % -------------- %
\begin{equation}\label{gap phi a}
\frac{2m\pi}{a}\tan\left(\frac{\pi}{2}\left(\lceil m\phi\rceil-m\phi\right)\right)<|\alpha|
\end{equation}
 % -------------- %
and
 % -------------- %
\begin{equation}\label{gap phi b}
\frac{2m\pi}{b}\tan\left(\frac{\pi}{2}\left(\lceil m\phi\rceil-m\phi\right)\right)<|\alpha|\,,
\end{equation}
 % -------------- %
respectively.

We start with the situation where $m=F_n$ for an even $n$. In this case we have $\lceil F_n\phi\rceil-F_n\phi=\phi^{-n}$, cf.~(\ref{Fn even}). The gap condition (\ref{gap phi a}) for $m=F_n$ with $n$ even thus acquires the form
 % -------------- %
$$
\frac{2\pi\left(\phi^{n}-\phi^{-n}\right)}{\sqrt{5}a}\tan\left(\frac{\pi}{2}\phi^{-n}\right)<|\alpha|\,,
$$
 % -------------- %
in other words, $\frac{1}{a}A_\frac{n}{2}<|\alpha|$. Since $|\alpha|\in\big(\frac{A_N}{a},\frac{A_{N+1}}{a}\big]$ in view of the assumptions (\ref{N gaps}), the gap condition (\ref{gap phi a}) is obviously satisfied with $m=F_n$ for all even values $n=2,4,\ldots,2N$, and violated for even values $n\geq2(N+1)$. Similarly, the gap condition (\ref{gap phi b}) acquires the form
 % -------------- %
$$
\frac{1}{b}A_\frac{n}{2}<|\alpha|.
$$
 % -------------- %
Since
 % -------------- %
$$
\frac{1}{b}A_\frac{n}{2}=\frac{\phi}{a}A_\frac{n}{2}\geq\frac{\phi}{a}A_1
=\phi\frac{2\pi}{a}\tan\left(\frac{3-\sqrt{5}}{4}\pi\right)\approx\frac{6.955}{a}
$$
 % -------------- %
and
 % -------------- %
$$
|\alpha|\leq\frac{\pi^2}{\sqrt{5}a}\approx\frac{4.414}{a}\,,
$$
 % -------------- %
we have $\frac{1}{b}A_\frac{n}{2}\nless|\alpha|$. Consequently, the gap condition (\ref{gap phi b}) cannot be satisfied for the special choice $m=F_n$ with $n$ even.

Let us proceed to the situation when $m$ is different from the values $F_n$ with even indices $n$. In this case we will show that none of the gap conditions (\ref{gap phi a}) and (\ref{gap phi b}) is satisfied. First, we estimate an expression appearing on the left-hand side of conditions (\ref{gap phi a}) and (\ref{gap phi b}) as follows:
 % -------------- %
\begin{eqnarray*}
\hspace{-3em} 2\pi m\tan\left(\frac{\pi}{2}\left(\lceil m\phi\rceil-m\phi\right)\right)
\geq2\pi m\frac{\pi}{2}\left(\lceil m\phi\rceil-m\phi\right)=
\pi^2 m\left(\lceil m\phi\rceil-m\phi\right)\,.
\end{eqnarray*}
 % -------------- %
The bounds (\ref{N gaps}) together with the estimate (\ref{Aj}) imply that $|\alpha|<\frac{\pi^2}{\sqrt{5}a}$. Therefore, conditions (\ref{gap phi a}) and (\ref{gap phi b}) can be disproved for a given $m$ by showing that
 % -------------- %
\begin{equation}\label{none N 2}
\frac{\pi^2}{a}m\left(\lceil m\phi\rceil-m\phi\right)\geq\frac{\pi^2}{\sqrt{5}a} \quad\wedge\quad \frac{\pi^2}{b}m\left(\lceil m\phi\rceil-m\phi\right)\geq\frac{\pi^2}{\sqrt{5}a}\,.
\end{equation}
 % -------------- %
Since $a>b$ holds by assumption, condition (\ref{none N 2}) is equivalent to
 % -------------- %
\begin{equation}\label{none N}
m\left(\lceil m\phi\rceil-m\phi\right)\geq\frac{1}{\sqrt{5}}\,,
\end{equation}
 % -------------- %
which we are now about to prove. We distinguish the following three possibilities:
 % -------------- %
\begin{itemize}
\item[(i)] $\frac{\lceil m\phi\rceil}{m}$ lies between two convergents greater than $\theta$, that is, $\frac{\lceil m\phi\rceil}{m}\in\left(\frac{F_{n+3}}{F_{n+2}},\frac{F_{n+1}}{F_{n}}\right)$ for a certain even $n$;
\item[(ii)] $\frac{\lceil m\phi\rceil}{m}$ lies above the greatest convergent $\frac{F_3}{F_2}=\frac{2}{1}$;
\item[(iii)] $m=r\cdot F_n$ and $\lceil m\phi\rceil=r\cdot F_{n+1}$ holds for a certain $r\geq2$ and even $n\in\mathbb{N}$.
\end{itemize}
 % -------------- %
In case (i) we use Lemma~\ref{Lem. third nonconv} to obtain the estimate
 % -------------- %
$$
m\left(\lceil m\phi\rceil-m\phi\right)>
\frac{1}{a_n}=1\,,
$$
 % -------------- %
which means that (\ref{none N}) holds true. Case (ii) is actually impossible. Indeed,
one can easily check that $\frac{\lceil m\phi\rceil}{m}\leq2$ for all $m\in\mathbb{N}$.
Finally, in case (iii) we get
 % -------------- %
$$
m\left(\lceil m\phi\rceil-m\phi\right)=r^2\cdot F_n(F_{n+1}-F_n\phi)=r^2\cdot\frac{1-\phi^{-2n}}{\sqrt{5}}\,.
$$
 % -------------- %
Since $r\geq2$ and $n\in\mathbb{N}$ is even, we have
 % -------------- %
$$
m(\lceil m\phi\rceil-m\phi)\geq4\cdot\frac{1-\phi^{-4}}{\sqrt{5}}\approx\frac{3.42}{\sqrt{5}}\,,
$$
 % -------------- %
and therefore (\ref{none N}) holds true.
Consequently, the gap conditions (\ref{gap phi a}) and (\ref{gap phi b}) cannot be satisfied in any of the cases (i)--(iii).

To sum up, the assumption (\ref{N gaps}) allows the gap condition (\ref{gap phi a}) to be satisfied for $m=F_n$ with $n=2,4,6,\ldots,2N$, while the gap condition (\ref{gap phi b}) is never satisfied. This implies the existence of exactly $N$ gaps in view of Proposition~\ref{Prop general negative}.
\end{proof}

%%%%%%%%%%%%%%%%%%%%%%%%%%%%%%%%%%%%%%%%%%%%%%%%%%%%%%%%%%%%%%%%%%%%%%%%%%%%%%%%%%%%%%%%%%%%%%%%%%%%%%%%%%%%%%%
%%%%---------------------------------------- Sign independence --------------------------------------------%%%%
%%%%%%%%%%%%%%%%%%%%%%%%%%%%%%%%%%%%%%%%%%%%%%%%%%%%%%%%%%%%%%%%%%%%%%%%%%%%%%%%%%%%%%%%%%%%%%%%%%%%%%%%%%%%%%%
%  Section
%
\section{More on the construction of Bethe--Sommerfeld lattice graphs}\label{s:sign}
\setcounter{equation}{0}

As we have seen in the example discussed in Section~\ref{s:golden}, the Bethe--Sommerfeld property for the special case of golden-mean ratio required an attractive $\delta$ coupling. One may ask whether the Bethe--Sommerfeld behaviour is possible for some other ratios, and whether it can occur for repulsive couplings. In this section we give an affirmative answer to both these questions. First, we present an example of an edge ratio $\theta$ for which the Bethe--Sommerfeld property is valid within a certain range of $\alpha$ for both signs of $\alpha$. Then we introduce an explicit method to construct ratios $\theta$ for which the Bethe--Sommerfeld property of the graph is guaranteed.

Let $\theta=\frac{a}{b}$. Without loss of generality, we may assume $\theta<1$, i.e., $a<b$. If $\alpha>0$, then Theorem~\ref{Thm positive} and Remark~\ref{estimate Markov} imply that the rectangular-lattice Hamiltonian has a nonzero and finite number of gaps in its spectrum whenever there exists an $m_+\in\mathbb{N}$ such that
 % -------------- %
$$
\frac{2m_+\pi}{b}\tan\left(\frac{\pi}{2}(m_+\theta-\lfloor m_+\theta\rfloor)\right)<\alpha<\frac{\pi^2\mu(\theta)}{b}.
$$
 % -------------- %
Similarly, if $\alpha<0$, Theorem~\ref{Thm negative} together with the estimate (\ref{estimate Markov negative}) implies that the Hamiltonian has a nonzero and finite number of gaps in the spectrum whenever there exists an $m_-\in\mathbb{N}$ such that
 % -------------- %
$$
\frac{2m_+\pi}{b}\tan\left(\frac{\pi}{2}(\lceil m_-\theta\rceil-m_-\theta)\right)<|\alpha|<\frac{\pi^2\mu(\theta)}{b}.
$$
 % -------------- %
Therefore, the Hamiltonian has a nonzero and finite number of gaps in the spectrum for some repulsive and attractive potentials whenever conditions (\ref{m+}) and (\ref{m-}) below are satisfied, respectively:
 % -------------- %
\begin{equation}\label{m+}
(\exists m_+\in\mathbb{N})\left(\frac{2m_+}{\pi}\tan\left(\frac{\pi}{2}(m_+\theta-\lfloor m_+\theta\rfloor)\right)<\mu(\theta)\right),
\end{equation}
 % -------------- %
\begin{equation}\label{m-}
(\exists m_-\in\mathbb{N})\left(\frac{2m_-}{\pi}\tan\left(\frac{\pi}{2}(\lceil m_-\theta\rceil-m_-\theta)\right)<\mu(\theta)\right).
\end{equation}
 % -------------- %
As the following Theorem explicitly shows, there exists a $\theta$ such that both conditions (\ref{m+}) and (\ref{m-}) are satisfied at the same time.
 % -------------- %
\begin{theorem}\label{Thm gap for both alpha}
Let the edge ratio be
 % -------------- %
\begin{equation}\label{gap for both alpha}
\theta=\frac{2t^3-2t^2-1+\sqrt{5}}{2(t^4-t^3+t^2-t+1)} \quad\mathrm{for}\;\; t\in\mathbb{N},\; t\geq3\,;
\end{equation}
then there is a nonzero and finite number of gaps in the spectrum for some $\alpha>0$ and for some $\alpha<0$ as well.
\end{theorem}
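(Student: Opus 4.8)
The plan is to show that the edge ratio $\theta$ in (\ref{gap for both alpha}) is a noble number with continued fraction $[0;t,t,1,1,1,\ldots]$, so that $\mu(\theta)=1/\sqrt5$, and then to verify conditions (\ref{m+}) and (\ref{m-}) with the explicit witnesses $m_+=1$ and $m_-=t$; Theorem~\ref{Thm gap for both alpha} then follows directly from the discussion preceding it.

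\emph{Identifying the continued fraction.} Writing $\phi=\frac{1+\sqrt5}{2}$, so that $\sqrt5=2\phi-1$, I would start from the continued fraction $[0;t,t,\phi]$; since the first two convergents of $[0;t,t,\ldots]$ are $\frac1t=\frac{p_1}{q_1}$ and $\frac{t}{t^2+1}=\frac{p_2}{q_2}$, the standard formula $[a_0;a_1,a_2,\xi]=\frac{\xi p_2+p_1}{\xi q_2+q_1}$ gives $[0;t,t,\phi]=\frac{t\phi+1}{(t^2+1)\phi+t}$. Substituting $\phi=\frac{1+\sqrt5}{2}$ and rationalising the denominator, a direct computation turns this into exactly the right-hand side of (\ref{gap for both alpha}): the denominator becomes $5(t^2+1)^2-(t+1)^4=4(t^4-t^3+t^2-t+1)$ and the numerator becomes $2(2t^3-2t^2-1+\sqrt5)$. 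Hence the canonical continued fraction of $\theta$ is $[0;t,t,1,1,1,\ldots]$ (because $\phi=[1;1,1,\ldots]$), so $\theta$ and $\phi^{-1}=[0;1,1,1,\ldots]$ have eventually identical tails and are therefore equivalent in the sense of (\ref{Markov equivalence})--(\ref{equiv. cont. fr.}); consequently $\mu(\theta)=\mu(\phi^{-1})=\mu(\phi)=\frac{1}{\sqrt5}$, by (\ref{Markov rec.}) and \cite[Chapter I, Thm.~V]{Ca57}.

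\emph{Verifying (\ref{m+}) and (\ref{m-}).} From $\theta=\frac{1}{t+1/\theta_2}$ with $\theta_2=[t;1,1,\ldots]=t+\phi^{-1}>t$ one gets $\theta=\frac{\theta_2}{t\theta_2+1}<\frac1t$ and $1-t\theta=\frac{1}{t\theta_2+1}<\frac{1}{t^2+1}$; since $t\ge3$ this yields $\theta<\frac13$, hence $\lfloor\theta\rfloor=0$ and $\lceil t\theta\rceil=1$. Using the elementary bound $\tan x\le\frac4\pi x$ on $[0,\frac\pi4]$ (convexity of $\tan$), which applies because $\frac\pi2\theta<\frac\pi6$ and $\frac\pi2(1-t\theta)<\frac\pi{20}$, I would obtain
\begin{equation*}
\frac{2\cdot1}{\pi}\tan\left(\frac\pi2(\theta-\lfloor\theta\rfloor)\right)\le\frac{4\theta}{\pi}<\frac{4}{3\pi}<\frac{1}{\sqrt5}=\mu(\theta)
\end{equation*}
and, using $t^2-3t+1>0$ for $t\ge3$,
\begin{equation*}
\frac{2t}{\pi}\tan\left(\frac\pi2(\lceil t\theta\rceil-t\theta)\right)\le\frac{4t}{\pi}(1-t\theta)<\frac{4t}{\pi(t^2+1)}<\frac{4}{3\pi}<\frac{1}{\sqrt5}=\mu(\theta).
\end{equation*}
Thus (\ref{m+}) holds with $m_+=1$ and (\ref{m-}) with $m_-=t$. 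Since $\theta\in(0,\frac13)$ we have $a<b$, and the discussion preceding the theorem (Theorem~\ref{Thm positive} together with Remark~\ref{estimate Markov}, and Theorem~\ref{Thm negative} together with (\ref{estimate Markov negative})) then gives a nonzero finite number of spectral gaps both for a suitable $\alpha>0$ and for a suitable $\alpha<0$.

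\emph{Expected main obstacle.} The one non-routine step is the algebraic identification in the second paragraph, namely recognising the closed form (\ref{gap for both alpha}) as $[0;t,t,\phi]$. Once that is in place the structural picture is transparent: $\theta$ is noble, so $\mu(\theta)$ attains the maximal value $1/\sqrt5$, while the two large leading partial quotients $t,t$ force the approximant $0/1$ from below and the approximant $1/t$ from above, both with denominator $\le t$, so that (\ref{m+}) and (\ref{m-}) reduce to the trivial estimates displayed above. The restriction $t\ge3$ is exactly what is needed to make those two estimates strict.
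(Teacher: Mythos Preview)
Your proof is correct and follows essentially the same route as the paper: identify $\theta=\frac{t\phi+1}{(t^2+1)\phi+t}$ as equivalent to the golden mean so that $\mu(\theta)=1/\sqrt5$, and then verify (\ref{m+}) with $m_+=1$ and (\ref{m-}) with $m_-=t$. The only difference is cosmetic: the paper checks the two inequalities by computing the explicit expressions at $t=3$ (obtaining $\approx0.3310$ and $\approx0.2546$) and invoking monotonicity in $t$, whereas you use the convexity bound $\tan x\le\frac4\pi x$ on $[0,\pi/4]$ together with $\theta<1/t$ and $1-t\theta<1/(t^2+1)$ to reduce everything to $\frac{4}{3\pi}<\frac{1}{\sqrt5}$. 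Your version is arguably cleaner since it avoids decimal approximations; you also spell out the continued-fraction identification $\theta=[0;t,t,1,1,\ldots]$ more explicitly than the paper does.
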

 % -------------- %
\begin{proof}
The number $\theta$ defined in (\ref{gap for both alpha}) can be written as $\theta=\frac{t\phi+1}{(t^2+1)\phi+t}$ for $\phi=\frac{1+\sqrt{5}}{2}$ being the golden mean. Since $\theta$ is equivalent to $\phi$, cf.\ (\ref{Markov equivalence}), the Markov constant of $\theta$ is $\mu(\theta)=\mu(\phi)=\frac{1}{\sqrt{5}}\approx0.4472$.

It is easy to check that conditions (\ref{m+}) and (\ref{m-}) are satisfied for the choice $m_+=1$ and $m_-=t$ with $t\geq3$, respectively. Indeed,
 % -------------- %
$$
\frac{2\cdot1}{\pi}\tan\left(\frac{\pi}{2}(1\cdot\theta-\lfloor1\cdot\theta\rfloor)\right)=\frac{2}{\pi}\tan\left(\frac{\pi}{2}\cdot\frac{2t^3-2t^2-1+\sqrt{5}}{2(t^4-t^3+t^2-t+1)}\right)
$$
 % -------------- %
is a decreasing function of $t$ that has an approximate value $0.3310<\mu(\theta)$ at $t=3$. Similarly, for $m_-=t$, we get
 % -------------- %
$$
\frac{2t}{\pi}\tan\left(\frac{\pi}{2}(\lceil t\theta\rceil-t\theta)\right)=\frac{2t}{\pi}\tan\left(\frac{\pi}{2}\cdot\frac{2t^2-t-t\sqrt{5}+2}{2(t^4-t^3+t^2-t+1)}\right)\,,
$$
 % -------------- %
which is again a decreasing function of $t$ being approximately equal to $0.2546<\mu(\theta)$ at the point $t=3$.
\end{proof}

Let us proceed to a general method to construct ratios $\theta$ that give rise to graphs with the Bethe--Sommerfeld property. We start from any badly approximable irrational number $\beta\in(0,1)$ with a continued-fraction representation
 % -------------- %
$$
\beta=[0;c_1,c_2,c_3,\ldots]\,;
$$
 % -------------- %
recall that $\beta$ is badly approximable if and only if the terms $c_1,c_2,c_3,\ldots$ are bounded.
Then we define numbers $\rho$, $\varsigma$ and $\tau$ with continued-fraction representations
 % -------------- %
\begin{eqnarray}
\rho=&[0;t,c_1,c_2,c_3,\ldots]\,; \label{constr t} \\
\varsigma=&[0;1,t,c_1,c_2,c_3,\ldots]\,; \label{constr 1t} \\
\tau=&[0;t,t,c_1,c_2,c_3,\ldots] \label{constr tt}
\end{eqnarray}
 % -------------- %
for $t\in\mathbb{N}$ being a parameter to be specified. Since the numbers $\rho,\varsigma,\tau$ are equivalent to $\beta$, cf.~(\ref{equiv. cont. fr.}), we have
 % -------------- %
$$
\mu(\rho)=\mu(\varsigma)=\mu(\tau)=\mu(\beta)\,,
$$
 % -------------- %
where $\mu(\beta)>0$, because $\beta$ is badly approximable. Now we examine conditions (\ref{m+}) and (\ref{m-}). At first we prove that $\rho$ and $\tau$ with a large enough parameter $t$ satisfy condition (\ref{m+}) for $m_+=1$. Indeed, since $\rho<1/t$ and $\tau<1/t$ (due to (\ref{constr t}) and (\ref{constr tt}), respectively), we have
 % -------------- %
\begin{equation}\label{rho+}
\hspace{-4em}\frac{2\cdot1}{\pi}\tan\left(\frac{\pi}{2}(1\cdot\rho-\lfloor1\cdot\rho\rfloor)\right)=\frac{2}{\pi}\tan\left(\frac{\pi}{2}\rho\right)<\frac{2}{\pi}\tan\left(\frac{\pi}{2t}\right)\to0 \quad\text{as }\;\, t\to\infty
\end{equation}
 % -------------- %
and
 % -------------- %
\begin{equation}\label{tau+}
\hspace{-4em}\frac{2\cdot1}{\pi}\tan\left(\frac{\pi}{2}(1\cdot\tau-\lfloor1\cdot\tau\rfloor)\right)=\frac{2}{\pi}\tan\left(\frac{\pi}{2}\tau\right)<\frac{2}{\pi}\tan\left(\frac{\pi}{2t}\right)\to0 \quad\text{as }\;\, t\to\infty.
\end{equation}
 % -------------- %
Similarly we can show that the number $\varsigma$ for a large enough $t$ satisfies condition (\ref{m-}) with $m_-=1$. Equation~(\ref{constr 1t}) implies $1/(1+1/t)<\varsigma<1$, hence $\lceil\varsigma\rceil=1$ and $1-\varsigma<1/t$; therefore,
 % -------------- %
\begin{equation}\label{varsigma-}
\hspace{-5.5em}\frac{2\cdot1}{\pi}\tan\left(\frac{\pi}{2}(\lceil1\cdot\varsigma\rceil-1\cdot\varsigma)\right)=\frac{2}{\pi}\tan\left(\frac{\pi}{2}(1-\varsigma)\right)<\frac{2}{\pi}\tan\left(\frac{\pi}{2t}\right)\to0 \;\,\text{ as }\;\, t\to\infty.
\end{equation}
 % -------------- %
Finally we prove that $\tau$ with a large enough $t$ obeys condition (\ref{m-}) with the choice $m_-=t$. Since $t/(t+1/t)<t\tau<1$ due to (\ref{constr tt}), we have $\lceil t\tau\rceil=1$ and $1-t\tau<1/(t^2+1)$; hence
 % -------------- %
\begin{equation}\label{tau-}
\hspace{-5.5em}\frac{2t}{\pi}\tan\left(\frac{\pi}{2}(\lceil t\tau\rceil-t\tau)\right)=\frac{2t}{\pi}\tan\left(\frac{\pi}{2}(1-t\tau)\right)<\frac{2t}{\pi}\tan\frac{\pi}{2(t^2+1)}<\frac{2}{\pi}\tan\left(\frac{\pi}{2t}\right).
\end{equation}
 % -------------- %
To sum up, we see from equations~(\ref{rho+})--(\ref{tau-}) that choosing $t$ such that
\begin{equation}\label{condition t}
\frac{2}{\pi}\tan\left(\frac{\pi}{2t}\right)<\mu(\beta)
\end{equation}
guarantees the Bethe--Sommerfeld property of the graph as follows:
 % -------------- %
\begin{itemize}
\item for $a/b=\rho$ and certain repulsive potentials ($\alpha>0$);
\item for $a/b=\varsigma$ and certain attractive potentials ($\alpha<0$);
\item for $a/b=\tau$ and certain potentials of both repulsive ($\alpha>0$) and attractive ($\alpha<0$) type.
\end{itemize}
 % -------------- %
\begin{example}
Let $\beta$ be a root of a quadratic irreducible polynomial over $\mathbb{Z}$ with discriminant $D$. For such $\beta$ we have the estimate $\mu(\beta)\geq\frac{1}{\sqrt{D}}$, which follows from \cite[Sect.~I, Lem.~2E]{Sch80}. Consequently, with regard to (\ref{condition t}), we can define the numbers $\rho,\varsigma,\tau$ by (\ref{constr t})--(\ref{constr tt}) for any $t$ such that $\frac{2}{\pi}\tan\frac{\pi}{2t}<\frac{1}{\sqrt{D}}$.
\end{example}
The idea was applied to construct the number $\theta$ from Theorem~\ref{Thm gap for both alpha}. The continued-fraction representation of $\theta$ from equation (\ref{gap for both alpha}) is $[0;t,t,1,1,1,1,\ldots]$, i.e., $\theta$ was obtained from $\beta=[0;1,1,1,\ldots]=(\sqrt{5}-1)/2$ using scheme (\ref{constr tt}). Since $\mu(\beta)=1/\sqrt{5}$ (because $\beta=\phi^{-1}$, see also Section~\ref{s:golden} and (\ref{Markov rec.})), condition (\ref{condition t}) gives $t\geq3$.

%%%%%%%%%%%%%%%%%%%%%%%%%%%%%%%%%%%%%%%%%%%%%%%%%%%%%%%%%%%%%%%%%%%%%%%%%%%%%%%%%%%%%%%%%%%%%%%%%%%%%%%%%%%%%%%
%%%%---------------------------------------- Concluding remarks -------------------------------------------%%%%
%%%%%%%%%%%%%%%%%%%%%%%%%%%%%%%%%%%%%%%%%%%%%%%%%%%%%%%%%%%%%%%%%%%%%%%%%%%%%%%%%%%%%%%%%%%%%%%%%%%%%%%%%%%%%%%
%  Section
%
\section{Concluding remarks}\label{s:concl}
\setcounter{equation}{0}

Recall first $\Z$-periodic graphs with the period cells linked by more
than a single edge that were briefly mentioned at the end of introduction.
We have not focused on that particular case in our paper; however, a
detailed examination of the spectral structure of $\Z$-periodic graphs in
terms of the Bethe--Sommerfeld behaviour would be interesting, as it may
inspire a new attempt to address the longstanding, still mostly open problem concerning
the Bethe--Sommerfeld conjecture for periodically curved or otherwise
perturbed waveguides.

Secondly, our demonstration that Bethe--Sommerfeld
graphs exist was technical and as such somewhat lacking a simple and
convincing insight. It would be
desirable to achieve a better understanding of the effect.
Let us mention a brief explanation in terms of the
ergodic flow reminiscent of the reasoning used in \cite{BB13}. The
rectangular lattice with Kirchhoff
coupling has no gaps. The set $\Sigma$ of \cite{BB13} covers in that
case the whole torus $[-\pi,\pi)^2$, however, it has `thin points'
corresponding to the quasimomenta values at which the dispersion
curves touch; for definiteness let us focus on the
point $(0,0)$. If we modify the coupling in a way
which is not scale invariant, the flow is perturbed and gaps may
open around these very points. Should there be a finite number of
them, though, the flow has to come close to
$(0,0)$ only rarely and the respective gaps should shrink fast
enough, so that eventually there would be no hits. This heuristic
reasoning allows one to understand why the parameter dependence of
the effect is so tricky. The sketched mechanism of gap opening deserves
to be analyzed rigorously to provide an explanation of the 
Bethe--Sommerfeld property from this point of view, but this task goes 
beyond the scope of the
present paper.

%%%%%%%%%%%%%%%%%%%%%%%%%%%%%%%%%%%%%%%%%%%%%%%%%%%%%%%%%%%%%%%%%%%%%%%%%%%%%%%%%%%%%%%%%%%%%%%%%%%%%%%%%%%%%%%
%%%%--------------------------------------- Acknowledgements ----------------------------------------------%%%%
%%%%%%%%%%%%%%%%%%%%%%%%%%%%%%%%%%%%%%%%%%%%%%%%%%%%%%%%%%%%%%%%%%%%%%%%%%%%%%%%%%%%%%%%%%%%%%%%%%%%%%%%%%%%%%%
\section*{Acknowledgements}
We thank Edita Pelantov\'{a} for a useful discussion and to the referees for their comments that helped us to improve the manuscript. The research was supported by the Czech Science Foundation (GA\v{C}R) within the project 17-01706S.

%%%%%%%%%%%%%%%%%%%%%%%%%%%%%%%%%%%%%%%%%%%%%%%%%%%%%%%%%%%%%%%%%%%%%%%%%%%%%%%%%%%%%%%%%%%%%%%%%%%%%%%%%%%%%%%
%%%%------------------------------------------ Bibliography -----------------------------------------------%%%%
%%%%%%%%%%%%%%%%%%%%%%%%%%%%%%%%%%%%%%%%%%%%%%%%%%%%%%%%%%%%%%%%%%%%%%%%%%%%%%%%%%%%%%%%%%%%%%%%%%%%%%%%%%%%%%%
%\bibliographystyle{acm}
%\bibliography{bibliography}
\Bibliography{99}
% \begin{thebibliography}{99}

 % -------------- %
\bibitem{AEL94}
J.E.~Avron, P.~Exner, Y.~Last: Periodic Schr\"odinger operators with large gaps and Wannier--Stark ladders, \textit{Phys. Rev. Lett.} \textbf{72} (1994), 896--899.
 % -------------- %
 \bibitem{BB13}
R.~Band, G.~Berkolaiko: Universality of the momentum band density of
periodic networks, \textit{Phys. Rev. Lett.} \textbf{111} (2013),
130404.
 % -------------- %
\bibitem{BG00}
F.~Barra, P.~Gaspard: On the level spacing distribution in quantum graphs, \textit{J. Stat. Phys.} \textbf{101} (2000), 283--319.
 % -------------- %
\bibitem{BK13}
G. Berkolaiko, P. Kuchment: \emph{Introduction to Quantum Graphs},
Amer. Math. Soc., Providence, R.I., 2013.
 % -------------- %
\bibitem{Ca57}
J.W.~Cassels: \textit{An introduction to Diophantine Approximation}, Cambridge University Press, Cambridge 1957.
 % -------------- %
\bibitem{CET10}
T.~Cheon, P.~Exner, O.~Turek: Approximation of a general singular
vertex coupling in quantum graphs, \textit{Ann. Phys. (NY)}
\textbf{325} (2010), 548--578.
 % -------------- %
\bibitem{CET10b}
T.~Cheon, P.~Exner, O.~Turek: Tripartite connection condition for
a quantum graph vertex, \textit{Phys. Lett. A} \textbf{375} (2010),
113--118.
 % -------------- %
\bibitem{DT82}
J.~Dahlberg, E.~Trubowitz: A remark on two dimensional periodic
potentials, \emph{Comment. Math. Helvetici} \textbf{57} (1982),
130--134.
 % -------------- %
\bibitem{Ex95}
P.~Exner: Lattice Kronig--Penney models, \emph{Phys. Rev. Lett.} \textbf{74} (1995), 3503--3506.
 % -------------- %
\bibitem{Ex96}
P.~Exner: Contact interactions on graph superlattices, \emph{J. Phys. A: Math. Gen.} \textbf{29} (1996), 87--102.
 % -------------- %
\bibitem{EG96}
P.~Exner, R.~Gawlista: Band spectra of rectangular graph superlattices, \emph{Phys. Rev.} \textbf{B53} (1996), 7275--7286
 % -------------- %
\bibitem{EM15}
P.~Exner, S.S. Manko: Spectra of magnetic chain graphs: coupling
constant perturbations, \emph{J. Phys. A: Math. Theor.} \textbf{48}
(2015), 125302
 % -------------- %
\bibitem{HM98}
B.~Helffer, A.~Mohamed: Asymptotic of the density of states for the
Schr\"odinger operator with periodic electric potential, \emph{Duke
Math. J.} \textbf{92} (1998), 1--60.
 % -------------- %
\bibitem{Hu1891}
A.~Hurwitz: \"Uber die angen\"{a}herte Darstellung der Irrationalzahlen durch
rationale Br\"{u}che, \textit{Math. Ann.} \textbf{39} (1981), 279--284.
 % -------------- %
 \bibitem{Kh64}
A.Ya.~Khinchin: \textit{Continued Fractions}, University of Chicago Press, 1964.
 % -------------- %
 \bibitem{Ku05}
P.~Kuchment: Quantum graphs: II. Some spectral properties of
quantum and combinatorial graphs, \textit{J. Phys. A: Math. Gen.} \textbf{38} (2005). 4887--4900.
 % -------------- %
\bibitem{Pa08}
L.~Parnovski: Bethe-Sommerfeld conjecture, \emph{Ann. Henri
Poincar\'{e}} \textbf{9} (2008), 457--508.
 % -------------- %
\bibitem{PSZ16}
E.~Pelantov\'a, \v S.~Starosta, M.~Znojil: Markov constant and quantum instabilities, \emph{J. Phys. A: Math. Theor.} \textbf{49}
(2016), 155201
 % -------------- %
\bibitem{SA00}
J.H.~Schenker, M.~Aizenman: The creation of spectral gaps by graph decoration, \emph{Lett. Math. Phys.} \textbf{53} (2000),
253--262.
 % -------------- %
 \bibitem{Sch80}
W.M.~Schmidt: \textit{Diophantine Approximation}, Lecture Notes in Mathematics, vol.~785, Springer Verlag, Berlin 1980.
 % -------------- %
\bibitem{Sk79}
M.M.~Skriganov: Proof of the Bethe-Sommerfeld conjecture in
dimension two, \emph{Soviet Math. Dokl.} \textbf{20} (1979),
956--959.
 % -------------- %
\bibitem{Sk85}
M.M.~Skriganov: The spectrum band structure of the threedimensional
Schr\"odinger operator with periodic potential, \emph{Invent. Math.}
\textbf{80} (1985), 107--121.
 % -------------- %
\bibitem{BS33}
A. Sommerfeld, H. Bethe: \emph{Electronentheorie der Metalle}. 2nd
edition, Handbuch der Physik, Springer Verlag 1933.

\end{thebibliography}

\end{document}